\tikzset{LMC style/.style={>=angle 60,every edge/.append style={thick},every state/.style={thick,minimum size=20,inner sep=0.5}}}
\tikzset{DFA style/.style={>=angle 60,every edge/.append style={thick},every state/.style={thick,rounded rectangle, minimum size=20,inner sep=0.5}}}
\declaretheorem[name=Theorem,style=plain]{ourtheorem}
\declaretheorem[name=Lemma,Refname={Lemma,Lemmas},sibling=ourtheorem]{ourlemma}
\declaretheorem[name=Example,style=definition,qed=\qedsymbol,sibling=ourtheorem]{ourexample}
\newcommand\ConcurArxiv[2]{#2}
\newcommand*\df[1]{\emph{#1}}
\newcommand{\A}{\mathcal{A}}
\newcommand{\B}{\mathcal{B}}
\newcommand{\bis}{\approx}
\newcommand*\cinf{c_{\mathit{inf}}}
\newcommand*\en[1]{\xrightarrow{#1}}
\newcommand{\M}{\mathcal{M}}
\newcommand*\defeq{\coloneqq}
\newcommand*\Ex{\mathrm{Ex}}
\newcommand*\Fpos{F_{\B'}}
\newcommand{\cras}{\mathit{cras}}
\newcommand*\Matpro{M_\mathit{pro}}
\newcommand*\MCpro{\M_\mathit{pro}}
\newcommand*\NN{\mathbb{N}}
\renewcommand*\P{\mathcal{P}}
\renewcommand*\Pr{\mathrm{Pr}} 
\newcommand*\prho{\rho_\mathit{pro}}
\newcommand{\R}{\mathbb{R}}
\newcommand{\res}[1]{#1\mathord{\downarrow}}
\newcommand*\seeall{\bullet}
\newcommand*\smart{\circ}
\newcommand*\T{\mathsf{T}} 
\renewcommand*\t[1]{\overrightarrow{#1}}
\newcommand*\U{\mathcal{U}}
\newcommand*\ourparagraph[1]{\subparagraph{#1}}
\renewcommand*\vec[1]{\mathbf{#1}}
\title{Selective Monitoring}
\author{Radu Grigore}{University of Kent, UK}{}{https://orcid.org/0000-0003-1128-0311}{Work supported by EPSRC grant EP/R012261/1.}
\author{Stefan Kiefer}{University of Oxford, UK}{}{}{Work supported by a Royal Society University Research Fellowship.}
\authorrunning{R. Grigore and S. Kiefer}
\subjclass{Theory of computation $\rightarrow$  Randomness, geometry and discrete structures}
\keywords{runtime monitoring, probabilistic systems, Markov chains, automata, language equivalence}
\begin{document}
\renewcommand\sectionautorefname{\S\!}
\renewcommand\subsectionautorefname{\S\!}
\renewcommand\subsubsectionautorefname{\S\!}

\pagestyle{plain}
\sloppy

\maketitle

\begin{abstract} 
We study selective monitors for labelled Markov chains.
Monitors observe the outputs that are generated by a Markov chain during its run, with the goal of identifying runs as correct or faulty.
A monitor is selective if it skips observations in order to reduce monitoring overhead.
We are interested in monitors that minimize the expected number of observations.
We establish an undecidability result for selectively monitoring general Markov chains.
On the other hand, we show for non-hidden Markov chains (where any output identifies the state the Markov chain is in) that simple optimal monitors exist and can be computed efficiently, based on DFA language equivalence.
These monitors do not depend on the precise transition probabilities in the Markov chain.
We report on experiments where we compute these monitors for several open-source Java~projects.

\end{abstract}

\section{Introduction} 

Consider an MC (Markov chain) whose transitions are labelled with letters,
and a finite automaton that accepts languages of infinite words.
Computing the probability that the random word emitted by the MC is accepted by the automaton is a classical problem at the heart of probabilistic verification.
A finite prefix may already determine whether the random infinite word is accepted, and computing the probability that such a \emph{deciding} finite prefix is produced is a nontrivial \emph{diagnosability} problem.
The theoretical problem we study in this paper is how to catch deciding prefixes without observing the whole prefix; i.e., we want to minimize the expected number of observations and still catch all deciding prefixes.

\ourparagraph{Motivation.} 
In runtime verification a program sends messages to a monitor, which decides if the program run is faulty.
Usually, runtime verification is turned off in production code because monitoring overhead is prohibitive.
QVM (quality virtual machine) and ARV (adaptive runtime verification) are existing pragmatic solutions to the overhead problem, which perform best-effort monitoring within a specified overhead budget~\cite{qvm,adaptive-rv}.
ARV relies on RVSE (runtime verification with state estimation) to also compute a probability that the program run is faulty~\cite{stoller2011,kalajdzic2013}.
We take the opposite approach: we ask for the smallest overhead achievable without compromising precision at all.


\ourparagraph{Previous Work.}
Before worrying about the performance of a monitor, one might want to check if faults in a given system can be diagnosed at all.
This problem has been studied under the term \emph{diagnosability}, first for non-stochastic finite discrete event systems~\cite{diagnos-discrete-old}, which are labelled transition systems.
It was shown in~\cite{diagnos-discrete} that diagnosability can be checked in polynomial time, although the associated monitors may have exponential size.
Later the notion of diagnosability was extended to stochastic discrete-event systems, which are labelled Markov chains~\cite{diagnos-stochastic}.
Several notions of diagnosability in stochastic systems exist, and some of them have several names, see, e.g., \cite{Sistla11,BertrandHL14} and the references therein.
Bertrand et al.~\cite{BertrandHL14} also compare the notions.
For instance, they show that for one variant of the problem (referred to as A-diagnosability or SS-diagnosability or IF-diagnosability) a previously proposed polynomial-time algorithm 
is incorrect, and prove that this notion of diagnosability is PSPACE-complete.
Indeed, most variants of diagnosability for stochastic systems are PSPACE-complete~\cite{BertrandHL14}, with the notable exception of AA-diagnosability (where the monitor is allowed to diagnose wrongly with arbitrarily small probability), which can be solved in polynomial time~\cite{Bertrand-LATA}.

\ourparagraph{Selective Monitoring.}
In this paper, we seem to make the problem harder: since observations by a monitor come with a performance overhead, we allow the monitor to skip observations.
In order to decide how many observations to skip, the monitor employs an \emph{observation policy}.
Skipping observations might decrease the probability of deciding (whether the current run of the system is faulty or correct).
We do not study this tradeoff: we require policies to be \emph{feasible}, i.e., the probability of deciding must be as high as under the policy that observes everything.
We do not require the system to be diagnosable; i.e., the probability of deciding may be less than~$1$.
Checking whether the system is diagnosable is PSPACE-complete (\cite{BertrandHL14}, \autoref{thm-diagnosability-PSPACE}).

\ourparagraph{The Cost of Decision in General Markov Chains.}
The \emph{cost} (of decision) is the number of observations that the policy makes during a run of the system.
We are interested in minimizing the expected cost among all feasible policies.
We show that if the system is diagnosable then there exists a policy with finite expected cost, i.e., the policy may stop observing after finite expected time.
(The converse is not true.)
Whether the infimum cost (among feasible policies) is finite is also PSPACE-complete (\autoref{thm-finitary-PSPACE}).
Whether there is a feasible policy whose expected cost is smaller than a given threshold is undecidable (\autoref{thm-undecidable}), even for diagnosable systems.

\ourparagraph{Non-Hidden Markov Chains.}
We identify a class of MCs, namely non-hidden MCs, where the picture is much brighter.
An MC is called \emph{non-hidden} when each label identifies the state.
Non-hidden MCs are always diagnosable.
Moreover, we show that \emph{maximally procrastinating} policies are (almost) optimal (\autoref{thm-max-pro-optimal}).
A policy is called maximally procrastinating when it skips observations up to the point where one further skip would put a decision on the current run in question.
We also show that one can construct an (almost) optimal maximally procrastinating policy in polynomial time.
This policy \emph{does not depend} on the exact probabilities in the MC, although the expected cost under that policy does.
That is, we efficiently construct a policy that is (almost) optimal regardless of the transition probabilities on the MC transitions.
We also show that the infimum cost (among all feasible policies) can be computed in polynomial time (\autoref{thm-non-hidden-cinf}).
Underlying these results is a theory based on automata, in particular, checking language equivalence of DFAs.

\ourparagraph{Experiments.}

We evaluated the algorithms presented in this paper by implementing them in Facebook Infer, and trying them on $11$ of the most forked Java projects on GitHub.
We found that, on average, selective monitoring can reduce the number of observations to a half.

\section{Preliminaries} 

Let $S$ be a finite set.
We view elements of $\R^S$ as \df{vectors}, more specifically as row vectors.
We write $\vec{1}$ for the all-1 vector, i.e., the element of $\{1\}^S$.
For a vector $\mu \in \R^S$, we denote by $\mu^\T$ its transpose, a column vector.
A vector $\mu \in [0,1]^S$ is a \df{distribution} 
\df{over~$S$} if $\mu \vec{1}^\T = 1$. 
For $s \in S$ we write $e_s$ for the (\df{Dirac}) distribution over~$S$ with $e_s(s) = 1$ and $e_s(t) = 0$ for $t \in S \setminus\{s\}$.
We view elements of $\R^{S \times S}$ as \df{matrices}.
A matrix $M \in [0,1]^{S \times S}$ is called \df{stochastic} if each row sums up to one, i.e.,
 $M \vec{1}^\T = \vec{1}^\T$.

For a finite alphabet~$\Sigma$, we write $\Sigma^*$ and~$\Sigma^\omega$ for the finite and infinite words over~$\Sigma$, respectively.
We write $\varepsilon$ for the empty word.
We represent languages $L \subseteq \Sigma^\omega$ using deterministic finite automata, and we represent probability measures $\Pr$ over~$\Sigma^\omega$ using Markov chains.

A (discrete-time, finite-state, labelled) \df{Markov chain (MC)} is a quadruple $(S, \Sigma, M, s_0)$ where $S$ is a finite set of states, $\Sigma$ a finite alphabet, $s_0$ an initial state, and $M : \Sigma \to [0,1]^{S \times S}$ specifies the transitions, such that $\sum_{a \in \Sigma} M(a)$ is a stochastic matrix.
Intuitively, if the MC is in state~$s$, then with probability~$M(a)(s,s')$ it emits~$a$ and moves to state~$s'$.
For the complexity results in this paper, we assume that all numbers in the matrices~$M(a)$ for $a \in \Sigma$ are rationals given as fractions of integers represented in binary.
We extend~$M$ to the mapping $M: \Sigma^* \to [0,1]^{S \times S}$ with $M(a_1 \cdots a_k) = M(a_1) \cdots M(a_k)$ for $a_1, \ldots, a_k \in \Sigma$.
Intuitively, if the MC is in state~$s$ then with probability~$M(u)(s,s')$ it emits the word~$u \in \Sigma^*$ and moves (in $|u|$ steps) to state~$s'$.
An MC is called \df{non-hidden} if for each $a \in \Sigma$ all non-zero entries of~$M(a)$ are in the same column.
Intuitively, in a non-hidden MC, the emitted letter identifies the next state.
An MC $(S, \Sigma, M, s_0)$ defines the standard probability measure~$\Pr$ over~$\Sigma^\omega$, uniquely defined
 by assigning probabilities to cylinder sets $\{u\}\Sigma^\omega$, with $u\in\Sigma^*$, as follows:
\begin{align*}
\Pr(\{u\}\Sigma^\omega) \ \defeq \ e_{s_0} M(u) \vec{1}^\T 
\end{align*}

A \df{deterministic finite automaton (DFA)} is a quintuple $(Q, \Sigma, \delta, q_0, F)$ where $Q$ is a finite set of states, $\Sigma$ a finite alphabet, $\delta: Q \times \Sigma \to Q$ a transition function, $q_0$ an initial state, and $F \subseteq Q$ a set of accepting states.
We extend $\delta$ to $\delta: Q \times \Sigma^* \to Q$ as usual.
A DFA defines a language~$L\subseteq\Sigma^\omega$ as follows:
\begin{align*}
L \ \defeq \ \{\,w\in\Sigma^\omega\mid\text{$\delta(q_0, u) \in F$ for some prefix $u$ of $w$}\,\}
\end{align*}
Note that we do not require accepting states to be visited infinitely often: just once suffices.
Therefore we can and will assume without loss of generality that there is $f$ with $F = \{f\}$ and $\delta(f,a)=f$ for all $a \in \Sigma$.

For the rest of the paper we fix an MC $\M = (S, \Sigma, M, s_0)$ and a DFA $\A = (Q, \Sigma, \delta, q_0, F)$.
We define their composition as the MC $\M \times \A \defeq (S \times Q, \Sigma, M', (s_0, q_0))$ where $M'(a)((s,q),(s',q'))$ equals~$M(a)(s,s')$ if $q' = \delta(q,a)$ and $0$ otherwise.
Thus, $\M$ and $\M \times \A$ induce the same probability measure~$\Pr$.

An \df{observation} $o\in \Sigma_\bot$ is either a letter or the special symbol~$\bot \not\in \Sigma$, which stands for `not~seen'.
An \df{observation policy} $\rho : \Sigma_\bot^*\to\{0,1\}$ is a (not necessarily computable) function that, given the observations made so far, says whether we should observe the next letter.
An observation policy~$\rho$ determines a projection $\pi_\rho : \Sigma^\omega\to\Sigma_\bot^\omega$: we have $\pi_\rho(a_1a_2\ldots\,)=o_1o_2\ldots$ when
\begin{align*}
  o_{n+1} \ &= \ \begin{cases}
    a_{n+1} & \text{if $\rho(o_1\ldots o_n)=1$} \\
    \bot & \text{if $\rho(o_1\ldots o_n)=0$}
  \end{cases}
  &&\text{for all $n\ge 0$}
\end{align*}
We denote the \df{see-all policy} by~$\seeall$; thus, $\pi_\seeall(w)=w$.

In the rest of the paper we reserve
  $a$~for letters,
  $o$~for observations,
  $u$~for finite words,
  $w$~for infinite words,
  $\upsilon$~for finite observation prefixes,
  $s$~for states from an MC,
  and $q$~for states from a DFA.
We write $o_1\sim o_2$ when $o_1$~and~$o_2$ are the same or at least one of them is~$\bot$.
We lift this relation to (finite and infinite) sequences of observations (of the same length).
We write $w \gtrsim \upsilon$ when $u \sim \upsilon$ holds for the length-$|\upsilon|$ prefix $u$ of~$w$.

We say that $\upsilon$ is \df{negatively deciding} when $\Pr(\{w \gtrsim \upsilon \mid w \in L\}) = 0$.
Intuitively, $\upsilon$ is negatively deciding when $\upsilon$ is incompatible (up to a null set) with~$L$.
Similarly, we say that $\upsilon$ is \df{positively deciding} when $\Pr(\{w \gtrsim \upsilon \mid w \not\in L\}) = 0$.
An observation prefix~$\upsilon$ is \df{deciding} when it is positively or negatively deciding.
An observation policy~$\rho$ \df{decides}~$w$ when $\pi_\rho(w)$ has a deciding prefix. 
A \df{monitor} is an interactive algorithm that implements an observation policy:
it processes a stream of letters and, after each letter, it replies with one of `yes', `no', or `skip~$n$ letters', where $n\in \NN\cup\{\infty\}$.

\begin{ourlemma} \label{lem-seeall-most-diagnoser}
For any $w$, if some policy decides~$w$ then $\seeall$ decides~$w$.
\end{ourlemma}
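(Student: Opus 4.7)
The plan is to unpack the definitions: if some policy~$\rho$ decides~$w$, then $\pi_\rho(w)$ has some deciding prefix~$\upsilon$, and the goal is to exhibit a deciding prefix of $\pi_\seeall(w)=w$. The natural candidate is the length-$|\upsilon|$ prefix $u$ of $w$ itself, viewed as an element of $\Sigma^*$.

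First I would observe that $u \sim \upsilon$ directly from the definition of $\pi_\rho$: at each position $i$ the $i$-th observation in $\pi_\rho(w)$ is either the true letter $a_i$ or $\bot$, so the two sequences agree wherever neither entry is $\bot$. Since $u \in \Sigma^*$ contains no $\bot$-symbol, this yields $u \sim \upsilon$.

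Second, I would show that this compatibility is enough to transfer the ``deciding'' property from~$\upsilon$ to~$u$. Any infinite word~$w'$ whose length-$|\upsilon|$ prefix equals~$u$ in particular satisfies $w' \gtrsim \upsilon$, because that prefix is $\sim$-compatible with $\upsilon$ via $u \sim \upsilon$. Hence
\begin{align*}
\{\,w' \mid w' \gtrsim u\,\} \ \subseteq \ \{\,w' \mid w' \gtrsim \upsilon\,\}.
\end{align*}
Intersecting both sides with~$L$, respectively with $\Sigma^\omega\setminus L$, and using monotonicity of~$\Pr$, the assumption that $\upsilon$ is negatively (resp.\ positively) deciding transfers to~$u$. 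Thus $u$ itself is a deciding prefix, and since $\pi_\seeall(w)=w$, the policy $\seeall$ decides~$w$.

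I do not expect a real obstacle here; the only point of care is handling the $\sim$ relation cleanly and not conflating the mixed-alphabet observation prefix $\upsilon \in \Sigma_\bot^*$ with the pure letter prefix $u \in \Sigma^*$, so that the set inclusion above is justified in the right direction before invoking monotonicity of the probability measure.
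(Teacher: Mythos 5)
Your proposal is correct and follows essentially the same route as the paper's proof: take the length-$|\upsilon|$ prefix $u$ of $w$, note $u\sim\upsilon$, transfer the deciding property from $\upsilon$ to $u$ via the inclusion $\{w'\gtrsim u\}\subseteq\{w'\gtrsim\upsilon\}$ and monotonicity of $\Pr$, and conclude since $u$ is a prefix of $\pi_\seeall(w)=w$. You merely spell out the set inclusion that the paper summarizes as ``$\upsilon$ can be obtained from $u$ by possibly replacing some letters with $\bot$.''
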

\begin{proof}
Let $\rho$ decide~$w$.
Then there is a deciding prefix $\upsilon$ of~$\pi_\rho(w)$.
Suppose $\upsilon$ is positively deciding, i.e., $\Pr(\{w' \gtrsim \upsilon \mid w' \not\in L\}) = 0$.
Let $u$ be the length-$|\upsilon|$ prefix of~$w$.
Then $\Pr(\{w' \gtrsim u \mid w' \not\in L\}) = 0$, since $\upsilon$ can be obtained from~$u$ by possibly replacing some letters with~$\bot$.
Hence $u$ is also positively deciding.
Since $u$ is a prefix of $w = \pi_\seeall(w)$, we have that $\seeall$ decides~$w$.
The case where $\upsilon$ is negatively deciding is similar.
\end{proof}
It follows that $\max_\rho \Pr(\{w \mid \rho \text{ decides } w\}) \, = \, \Pr(\{w \mid \seeall \text{ decides } w\})$.
We say that a policy~$\rho$ is \df{feasible} when it also attains the maximum, i.e., when
\begin{align*}
 \Pr(\{w \mid \rho \text{ decides } w\}) \ &= \ \Pr(\{w \mid \seeall \text{ decides } w\})\,.
\end{align*}
Equivalently, $\rho$ is feasible when
$
 \Pr(\{w \mid \seeall \text{ decides } w \text{ implies } \rho \text{ decides } w\}) \, = \, 1
$,
i.e., almost all words that are decided by the see-all policy are also decided by~$\rho$.
If $\upsilon=o_1o_2\ldots$ is the shortest prefix of $\pi_\rho(w)$ that is deciding, then the \df{cost of decision~$C_\rho(w)$} is $\sum_{k=0}^{|\upsilon|-1} \rho(o_1\ldots o_k)$.
This paper is about finding feasible observation policies~$\rho$ that minimize~$\Ex(C_\rho)$, the expectation of the cost of decision with respect to~$\Pr$. 

\section{Qualitative Analysis of Observation Policies} 

In this section we study properties of observation policies that are qualitative, i.e., not directly related to the cost of decision.
We focus on properties of observation prefixes that a policy may produce.
 
\paragraph*{Observation Prefixes.}
%
We have already defined deciding observation prefixes.
We now define several other types of prefixes: enabled, confused, very confused, and finitary.
A prefix~$\upsilon$ is \df{enabled} if it occurs with positive probability, $\Pr(\{w \gtrsim \upsilon\})>0$.
Intuitively, the other types of prefixes~$\upsilon$ are defined in terms of what would happen if we were to observe all from now on:
if it is not almost sure that eventually a deciding prefix is reached, then we say $\upsilon$~is confused;
if it is almost sure that a deciding prefix will not be reached, then we say $\upsilon$~is very confused;
if it is almost sure that eventually a deciding or very confused prefix is reached, then we say $\upsilon$~is finitary.
To say this formally, let us make a few notational conventions:
for an observation prefix~$\upsilon$, we write $\Pr(\upsilon)$ as a shorthand for $\Pr(\{\,uw\mid u\sim\upsilon\,\})$;
for a set $\Upsilon$ of observation prefixes, we write $\Pr(\Upsilon)$ as a shorthand for $\Pr\bigl(\bigcup_{\upsilon \in \Upsilon} \{\,uw\mid u\sim\upsilon\,\}\bigr)$.
With these conventions, we define:
\begin{enumerate}
\item $\upsilon$~is \df{confused} when $\Pr(\{\,\upsilon u\mid\text{$\upsilon u$ deciding}\,\})<\Pr(\upsilon)$
\item $\upsilon$~is \df{very confused} when $\Pr(\{\,\upsilon u\mid\text{$\upsilon u$ deciding}\,\}) = 0$
\item $\upsilon$~is \df{finitary} when $\Pr(\{\,\upsilon u\mid\text{$\upsilon u$ deciding or very confused}\,\}) = \Pr(\upsilon)$
\end{enumerate}
Observe that
(a)~confused implies enabled,
(b)~deciding implies not confused, and
(c)~enabled and very confused implies confused.
The following are alternative equivalent definitions:
\begin{enumerate}
\item $\upsilon$~is \df{confused} when $\Pr(\{\,u w \mid u \sim \upsilon,\ \text{no prefix of~$\upsilon w$ is deciding}\,\}) > 0$
\item $\upsilon$~is \df{very confused} when $\upsilon u'$ is non-deciding for all enabled $\upsilon u'$
\item $\upsilon$~is \df{finitary} when $\Pr(\{u w \mid u \sim \upsilon,\ \text{no prefix of~$\upsilon w$ is deciding or very confused}\}) = 0$
\end{enumerate}

\begin{ourexample} \label{ex-obs-prefixes}
Consider the MC and the DFA depicted here:
\begin{center}
\begin{tikzpicture}[scale=2,LMC style]
\node[state] (s0) at (0,0) {$s_0$};
\node[state] (s1) at (-1,0) {$s_1$};
\node[state] (s2) at (1,0) {$s_2$};
\path[->] (0,0.5) edge (s0);
\path[->] (s0) edge[pos=0.4] node[above] {$\frac12 a$} (s1);
\path[->] (s0) edge[pos=0.4] node[above] {$\frac12 a$} (s2);
\path[->] (s1) edge [loop,out=250,in=290,looseness=11] node[pos=0.2,left] {$1 a$} (s1);
\path[->] (s2) edge [loop,out=250,in=290,looseness=11] node[pos=0.2,left] {$\frac12 a$} node[pos=0.8,right] {$\frac12 b$} (s2);
\end{tikzpicture}
\hspace{15mm}
\begin{tikzpicture}[scale=2,LMC style]
\node[state] (q0) at (0,0) {$q_0$};
\node[state,accepting] (q1) at (1,0) {$f$};
\path[->] (0,0.5) edge (q0);
\path[->] (q0) edge [loop,out=250,in=290,looseness=11] node[pos=0.2,left] {$a$} (q0);
\path[->] (q1) edge [loop,out=250,in=290,looseness=11] node[pos=0.2,left] {$a$} node[pos=0.83,right] {$b$} (q1);
\path[->] (q0) edge node[pos=0.4,above] {$b$} (q1);
\end{tikzpicture}
\end{center}
All observation prefixes that do not start with~$b$ are enabled.
The observation prefixes $a b$ and $\bot b$ and, in fact, all observation prefixes that contain~$b$, are positively deciding.
For all $n \in \NN$ we have $\Pr(\{w \gtrsim a^n \mid w \in L\}) > 0$ and $\Pr(\{w \gtrsim a^n \mid w \not\in L\}) > 0$,
so $a^n$~is not deciding.
If the MC takes the right transition first then almost surely it emits~$b$ at some point.
Thus $\Pr(\{a a a \cdots\}) = \frac12$.
Hence $\varepsilon$~is confused.
In this example only non-enabled observation prefixes are very confused.
It follows that $\varepsilon$~is not finitary.
\end{ourexample}

\paragraph*{Beliefs.}
For any~$s$ we write~$\Pr_s$ for the probability measure of the MC~$\M_s$ obtained from~$\M$ by making $s$ the initial state.
For any~$q$ we write~$L_q \subseteq \Sigma^\omega$ for the language of the DFA~$\A_q$ obtained from~$\A$ by making $q$ the initial state.
We call a pair $(s,q)$ \df{negatively deciding} when $\Pr_s(L_q) = 0$;
similarly, we call $(s,q)$ \df{positively deciding} when $\Pr_s(L_q) = 1$.
A subset of $S \times Q$ is called \df{belief}.
We call a belief \df{negatively} (\df{positively}, respectively) \df{deciding} when all its elements are.
We fix the notation $B_0 \defeq \{(s_0, q_0)\}$ (for the \df{initial belief}) for the remainder of the paper.
Define the \df{belief NFA} as the NFA $\B = (S \times Q, \Sigma_\bot, \Delta, B_0 , \emptyset)$ with:
\begin{align*}
\Delta((s,q), a)    &\ = \ \{(s',q') \mid \ M(a)(s,s') > 0, \ \delta(q,a) = q'\} \quad \text{for } a \in \Sigma \\
\Delta((s,q), \bot) &\ = \ \bigcup_{a \in \Sigma} \Delta((s,q), a)
\end{align*}
We extend the transition function $\Delta : (S \times Q) \times \Sigma_\bot \to 2^{S \times Q}$ to  $\Delta : 2^{S \times Q} \times \Sigma_\bot^* \to 2^{S \times Q}$ in the way that is usual for NFAs.
Intuitively, if belief $B$ is the set of states where the product $\M\times\A$ could be now, then $\Delta(B,\upsilon)$ is the belief adjusted by additionally observing~$\upsilon$.
To reason about observation prefixes~$\upsilon$ algorithmically, it will be convenient to reason about the belief $\Delta(B_0, \upsilon)$.

We define confused, very confused, and finitary beliefs as follows:
\begin{enumerate}
\item $B$ is \df{confused} when $\Pr_s(\{\,uw\mid\text{$\Delta(B,u)$ deciding}\,\})<1$ for some $(s,q)\in B$
\item $B$ is \df{very confused} when $\Delta(B,u)$ is empty or not deciding for all~$u$
\item $B$ is \df{finitary} when $\Pr_s(\{\,uw\mid\text{$\Delta(B,u)$ deciding or very confused}\,\})=1$ for all $(s,q)\in B$
\end{enumerate}
\begin{ourexample}
In \autoref{ex-obs-prefixes} we have $B_0 = \{(s_0,q_0)\}$, and $\Delta(B_0,a^n) = \{(s_1,q_0),(s_2,q_0)\}$ for all $n \ge 1$, and $\Delta(B_0,b) = \emptyset$, and $\Delta(B_0,a \bot) = \{(s_1,q_0),(s_2,q_0),(s_2,f)\}$, and $\Delta(B_0, \bot \upsilon) = \{(s_2,f)\}$ for all $\upsilon$ that contain~$b$.
The latter belief $\{(s_2,f)\}$ is positively deciding.
We have $\Pr_{s_1}(\{u w \mid \Delta(\{(s_1,q_0)\},u) \text{ is deciding}\}) = 0$, so any belief that contains $(s_1,q_0)$ is confused.
Also, $B_0$ is confused as $\Pr_{s_0}(\{u w \mid \Delta(\{(s_0,q_0)\},u) \text{ is deciding}\}) = \frac12$.
\end{ourexample}

\paragraph*{Relation Between Observation Prefixes and Beliefs.}
By the following lemma, the corresponding properties of observation prefixes and beliefs are closely related.
\begin{restatable}{ourlemma}{lembeliefbasic} \label{lem-belief-basic}
Let $\upsilon$ be an observation prefix.
\begin{enumerate}
\item $\upsilon$ is enabled if and only if $\Delta(B_0, \upsilon) \ne \emptyset$.
\item $\upsilon$ is negatively deciding if and only if $\Delta(B_0, \upsilon)$ is negatively deciding.
\item $\upsilon$ is positively deciding if and only if $\Delta(B_0, \upsilon)$ is positively deciding.
\item $\upsilon$ is confused if and only if $\Delta(B_0, \upsilon)$ is confused.
\item $\upsilon$ is very confused if and only if $\Delta(B_0, \upsilon)$ is very confused.
\item $\upsilon$ is finitary if and only if $\Delta(B_0, \upsilon)$ is finitary.
\end{enumerate}
\end{restatable}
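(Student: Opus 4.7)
The backbone of all six parts is a single identification: for every observation prefix $\upsilon$,
\[
  \Delta(B_0, \upsilon) \;=\; \{\,(s,q) \mid \exists\, u \in \Sigma^{|\upsilon|},\ u \sim \upsilon,\ M(u)(s_0, s) > 0,\ \delta(q_0, u) = q\,\}.
\]
This is a straightforward induction on $|\upsilon|$: the inductive step uses that $\Delta(\cdot, \bot) = \bigcup_{a \in \Sigma} \Delta(\cdot, a)$ mirrors the free choice of a concrete letter matching a $\bot$, while $\Delta(\cdot, a)$ for $a\in\Sigma$ tracks exactly the positive-probability DFA-consistent successors.

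With this identification, parts~(1)--(3) follow by direct probability decomposition. Since $\Pr(\upsilon)=\sum_{u\sim\upsilon,\,|u|=|\upsilon|} e_{s_0}M(u)\vec{1}^\T$, item~(1) is immediate. For~(2) and~(3), using that $f$ is a sink of $\A$ so that $uw\in L$ iff $w\in L_{\delta(q_0,u)}$, we obtain
\[
  \Pr\bigl(\{\,w\gtrsim\upsilon \mid w\in L\,\}\bigr) \;=\; \sum_{u\sim\upsilon}\sum_{s} M(u)(s_0,s)\cdot \Pr_s\bigl(L_{\delta(q_0,u)}\bigr),
\]
which vanishes iff every $(s,q)\in\Delta(B_0,\upsilon)$ satisfies $\Pr_s(L_q)=0$, giving~(2); swapping $L$ for its complement yields~(3).

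For (4)--(6) I would combine the alternative equivalent definitions stated just before Example~2 with the identity $\Delta(B_0,\upsilon u)=\Delta(\Delta(B_0,\upsilon),u)$, so that belief-side quantifiers over continuations~$u$ match observation-side quantifiers over prefixes of $\upsilon w$. After the translation supplied by~(1)--(3), the evolved belief $\Delta(\Delta(B_0,\upsilon),u)$ is empty, deciding, or non-deciding iff $\upsilon u$ is non-enabled, deciding, or non-deciding respectively. Part~(5) then reads off: every non-empty $\Delta(\Delta(B_0,\upsilon),u)$ is non-deciding iff every enabled $\upsilon u'$ is non-deciding. For~(4), the same translation together with a total-probability decomposition that matches the reaching-state distribution on the observation side to the Dirac starts $\Pr_s$ on the belief side shows that the probability on the left is strictly less than $\Pr(\upsilon)$ iff some $(s,q)\in\Delta(B_0,\upsilon)$ contributes a $\Pr_s$ strictly below~$1$, as required.

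The main obstacle is~(6), because finitariness involves continuations that reach either a deciding or a \emph{very confused} prefix, so the translation to beliefs already requires~(5); the natural order is therefore (1), (2), (3), (4), (5), (6). With~(5) established, very-confusedness of $\upsilon u$ becomes the purely belief-based condition that $\Delta(B_0,\upsilon u)$ is very confused, so the event in the observation-side definition of finitary coincides with the event in the belief-side definition, and the same total-probability decomposition used for~(4) closes the argument.
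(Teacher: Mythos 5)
Your proposal is correct and follows essentially the same route as the paper: the same explicit description of $\Delta(B_0,\upsilon)$ via words $u\sim\upsilon$, the same probability decompositions for items 1--3, and for items 4--6 the same translation through the alternative definitions, the compositionality $\Delta(B_0,\upsilon u)=\Delta(\Delta(B_0,\upsilon),u)$, and the dependency order (6 relying on 5). The only detail left implicit is the observation that a non-deciding $\upsilon$ has no deciding prefixes, which is what lets ``no prefix of $\upsilon w$ is deciding'' be rewritten as a condition on extensions $\upsilon u'$ only.
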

%
The following lemma gives complexity bounds for computing these properties.%
\begin{restatable}{ourlemma}{lemcomputingtheproperties} \label{lem-computing-the properties}
Let $\upsilon$ be an observation prefix, and $B$ a belief.
\begin{enumerate}
\item Whether $\upsilon$ is enabled can be decided in~P. 
\item Whether $\upsilon$ (or $B$) is negatively deciding can be decided in~P. 
\item Whether $\upsilon$ (or $B$) is positively deciding can be decided in~P. 
\item Whether $\upsilon$ (or $B$) is confused can be decided in~PSPACE.
\item Whether $\upsilon$ (or $B$) is very confused can be decided in~PSPACE.
\item Whether $\upsilon$ (or $B$) is finitary can be decided in~PSPACE.
\end{enumerate}
\end{restatable}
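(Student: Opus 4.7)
The plan is to apply \autoref{lem-belief-basic} to reduce each question about an observation prefix~$\upsilon$ to the corresponding question about the belief $B \defeq \Delta(B_0,\upsilon)$. This belief is computable in polynomial time by a standard NFA simulation of~$\B$ on~$\upsilon$, so for the rest of the argument we may assume $B$ is given explicitly. Each of the six decision problems then becomes a question about~$B$, reducible to an analysis either of $\M\times\A$ or of the product of~$\M$ with the deterministic letter fragment of~$\B$.

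For parts~1--3 only the fixed belief~$B$ needs to be inspected, and everything stays in polynomial time. Enabledness is the test $B\ne\emptyset$. For each $(s,q)\in B$, $\Pr_s(L_q)=0$ iff no accepting state of $\M\times\A$ is reachable from $(s,q)$ in the underlying graph, and $\Pr_s(L_q)=1$ iff every bottom SCC of $\M\times\A$ reachable from $(s,q)$ consists only of accepting states (using the normalisation $F=\{f\}$ with $f$ absorbing). Both are standard qualitative MC checks.

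For parts~4--6 the natural state space is the belief automaton, of size up to $2^{|S||Q|}$. We avoid constructing it explicitly, working on the fly with polynomial-size belief descriptions and appealing to Savitch's theorem to turn nondeterministic polynomial-space searches into PSPACE algorithms. Very confusedness of~$B$ reduces to non-reachability of any deciding belief from~$B$ in~$\B$, and the deciding test on each candidate belief is polynomial by parts~2--3, so this sits in PSPACE. For confusedness, consider the product MC pairing $\M_s$ with the deterministic letter-transitions of~$\B$; the safety event ``$\Delta(B, w_1 \ldots w_n)$ is never deciding'' has positive probability iff, by a standard finite-MC argument, the underlying product graph from~$(s,B)$ reaches some state from which no deciding belief is graph-reachable, again an NPSPACE question.

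I expect the main obstacle to be the finitary case, because membership in the target set~$T$ of beliefs that are deciding or very confused is itself PSPACE to test. By definition, $B$ is finitary iff for every $(s,q)\in B$ the product MC above, started at $(s,B)$, almost surely reaches~$T$. By standard qualitative MC analysis, the negation is that some product state reachable from $(s,B)$ cannot reach~$T$, which we certify in NPSPACE by guessing a path to such a state $(s',B')$ and then verifying non-reachability of~$T$ from~$(s',B')$. Since each object involved (individual beliefs, single product states, single transitions) is polynomial-size, the outer nondeterministic search and the inner PSPACE very-confused subroutine compose into a single polynomial-space algorithm, and Savitch gives the claimed PSPACE bound.
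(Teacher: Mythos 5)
Your proposal is correct and follows essentially the same route as the paper: reduce each question about $\upsilon$ to the belief $\Delta(B_0,\upsilon)$ via \autoref{lem-belief-basic}, settle items~1--3 by qualitative reachability analysis in $\M\times\A$, and settle items~4--6 by an on-the-fly nondeterministic search of the exponential-size product of $\M$ with the determinization of~$\B$, composing the nested PSPACE subroutines and invoking $\mathrm{NPSPACE}=\mathrm{PSPACE}$. The only (immaterial) differences are that the paper works in NL for items~1--3 without materializing the belief, and bounds the guessed path lengths explicitly by $2^{|S\times Q|}$ rather than citing Savitch.
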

\begin{proof}[Proof sketch]
The belief NFA~$\B$ and the MC $\M \times \A$ can be computed in polynomial time (even in deterministic logspace).
For items~1--3, there are efficient graph algorithms that search these product structures.
For instance, to show that a given pair $(s_1,q_1)$ is not negatively deciding, it suffices to show that $\B$ has a path from $(s_1,q_1)$ to a state $(s_2,f)$ for some~$s_2$.
This can be checked in polynomial time (even in~NL).

For items~4--6, one searches the (exponential-sized) product of~$\M$ and the \emph{determinization} of~$\B$.
This can be done in PSPACE.
For instance, to show that a given belief~$B$ is confused, it suffices to show that there are $(s_1,q_1) \in B$ and $u_1$ and $s_2$ such that $\M$ has a $u_1$-labelled path from $s_1$ to~$s_2$ such that there do \emph{not} exist $u_2$ and $s_3$ such that $\M$ has a $u_2$-labelled path from $s_2$ to~$s_3$ such that $\Delta(B,u_1 u_2)$ is deciding.
This can be checked in NPSPACE = PSPACE by nondeterministically guessing paths in the product of $\M$ and the determinization of~$\B$.
\end{proof}

\paragraph*{Diagnosability.}
We call a policy a \df{diagnoser} when it decides almost surely.
\begin{ourexample} \label{ex-no-feasible}
In \autoref{ex-obs-prefixes} a diagnoser does not exist.
Indeed, the policy~$\seeall$ does not decide when the MC takes the left transition, and decides (positively) almost surely when the MC takes the right transition in the first step.
Hence $\Pr(\{w \mid \seeall \text{ decides } w\}) = \Pr(\Sigma^* \{b\} \Sigma^\omega) = \frac12$.
So $\seeall$ is not a diagnoser.
By \autoref{lem-seeall-most-diagnoser}, it follows that there is no diagnoser.
\end{ourexample}
Diagnosability can be characterized by the notion of confusion:
\begin{restatable}{ourproposition}{propdiagnosabilityconfused} \label{prop-diagnosability-confused}
There exists a diagnoser if and only if $\varepsilon$ is not confused.
\end{restatable}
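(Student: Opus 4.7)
The plan is to reduce the question to the see-all policy using \autoref{lem-seeall-most-diagnoser}, and then to unfold the alternative definition of confusion for $\upsilon = \varepsilon$.

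First I would prove the easy direction: if $\varepsilon$ is not confused, then $\seeall$ itself is a diagnoser. By the alternative definition of confusion applied to $\upsilon = \varepsilon$ (where $u \sim \varepsilon$ forces $u = \varepsilon$), the assumption $\varepsilon$ not confused gives $\Pr(\{w \mid \text{no prefix of } w \text{ is deciding}\}) = 0$. Since $\pi_\seeall(w) = w$ and every prefix of $w$ over $\Sigma$ is also an observation prefix (as $\Sigma \subseteq \Sigma_\bot$), this means $\seeall$ decides almost every $w$, so $\seeall$ is a diagnoser. Existence is immediate.

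For the converse, suppose a diagnoser $\rho$ exists, so $\Pr(\{w \mid \rho \text{ decides } w\}) = 1$. By \autoref{lem-seeall-most-diagnoser}, whenever $\rho$ decides $w$ the policy $\seeall$ also decides~$w$. Hence $\Pr(\{w \mid \seeall \text{ decides } w\}) = 1$, i.e., $\seeall$ is itself a diagnoser. Now $\seeall$ decides $w$ iff $w = \pi_\seeall(w)$ has a deciding prefix, so almost every $w$ has a deciding prefix. Applying once more the alternative definition of confusion at $\upsilon = \varepsilon$, this is precisely the statement that $\varepsilon$ is not confused.

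There is no real obstacle here: the content of the proposition is almost entirely carried by \autoref{lem-seeall-most-diagnoser}, which already upgrades any diagnoser to the see-all diagnoser. The only point requiring a little care is making sure the alternative definition of confusion has been parsed correctly at $\upsilon = \varepsilon$, and matching the event $\{w \mid \seeall \text{ decides } w\}$ with the event $\{w \mid \text{some prefix of } w \text{ is deciding}\}$; both are routine.
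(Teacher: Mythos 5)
Your proof is correct and follows essentially the same route as the paper: both directions reduce to the see-all policy via \autoref{lem-seeall-most-diagnoser} and then unfold the (alternative) definition of confusion at $\upsilon=\varepsilon$, exactly as in the paper's chain of equivalences.
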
%
The following proposition shows that diagnosability is hard to check.
\begin{restatable}[cf.~{\cite[Theorem~6]{BertrandHL14}}]{ourtheorem}{thmdiagnosabilityPSPACE} \label{thm-diagnosability-PSPACE}
Given an MC~$\M$ and a DFA~$\A$, it is PSPACE-complete to check if there exists a diagnoser.
\end{restatable}
\autoref{thm-diagnosability-PSPACE} essentially follows from a result by Bertrand et al.~\cite{BertrandHL14}.
They study several different notions of diagnosability;
one of them (\emph{FA-diagnosability}) is very similar to our notion of diagnosability.
There are several small differences; e.g., their systems are not necessarily products of an MC and a DFA.
Therefore we give a self-contained proof of \autoref{thm-diagnosability-PSPACE}.
\begin{proof}[Proof sketch]
By \autoref{prop-diagnosability-confused} it suffices to show PSPACE-completeness of checking whether $\varepsilon$~is confused.
Membership in~PSPACE follows from \autoref{lem-computing-the properties}.4.
For hardness we reduce from the following problem: given an NFA~$\U$ over~$\Sigma = \{a,b\}$ where all states are initial and accepting, does $\U$ accept all (finite) words?
This problem is PSPACE-complete~\cite[Lemma~6]{Shallit09}.
\end{proof}

\paragraph*{Allowing Confusion.}

We say an observation policy \df{allows confusion} when, with positive probability, it produces an observation prefix $\upsilon \bot$ such that $\upsilon \bot$ is confused but $\upsilon$ is not.

\begin{restatable}{ourproposition}{propfeasiblenoconfusion} \label{prop-feasible-no-confusion}
A feasible observation policy does not allow confusion.
\end{restatable}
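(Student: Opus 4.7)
The plan is to prove the contrapositive: if $\rho$ allows confusion, then $\rho$ is not feasible. Suppose $\rho$ produces, with positive probability, an observation prefix $\upsilon\bot$ where $\upsilon$ is not confused and $\upsilon\bot$ is confused (such a specific $\upsilon$ exists because the event is a countable union over candidate prefixes). Let $E = \{w \in \Sigma^\omega : \pi_\rho(w) \text{ begins with } \upsilon\bot\}$. Since $\rho$ is a deterministic function of the observations so far, its decisions during the first $|\upsilon|+1$ steps are forced by the requirement that the observations equal $\upsilon\bot$; consequently $E = \{w : w_1\cdots w_{|\upsilon|} \sim \upsilon\}$ and $\Pr(E) = \Pr(\upsilon) > 0$. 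I will derive $\Pr(\{w \in E : \rho \text{ decides } w\}) < \Pr(\{w \in E : \seeall \text{ decides } w\})$, which together with \autoref{lem-seeall-most-diagnoser} contradicts feasibility.

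First I would verify that $\upsilon$ is not deciding: otherwise, each $\Delta(B_0,\upsilon a)$ inherits the decision direction of $\Delta(B_0,\upsilon)$ (single-letter successors of a deciding belief are deciding), so the union $\Delta(B_0,\upsilon\bot) = \bigcup_a \Delta(B_0,\upsilon a)$ is also deciding, making $\upsilon\bot$ not confused---a contradiction. Second, using $\upsilon$ not confused, I would show $\Pr(\{w \in E : \seeall \text{ decides } w\}) = \Pr(E)$: the definition gives $\Pr(\{\upsilon u : \upsilon u \text{ deciding}\}) = \Pr(\upsilon)$, so almost every $w \in E$ has a $\Sigma^*$-prefix $w_1\cdots w_{|\upsilon u|}$ matching some deciding $\upsilon u$; since $\Delta(B_0, w_1\cdots w_{|\upsilon u|}) \subseteq \Delta(B_0,\upsilon u)$, the actual prefix is deciding too, so $\seeall$ decides $w$.

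The key step is the upper bound on $\rho$'s side. Assume $\rho$ decides some $w \in E$, and let $o_1\cdots o_k$ be the shortest deciding prefix of $\pi_\rho(w)$. Neither $\upsilon$ nor $\upsilon\bot$ is deciding (the latter because confused implies not deciding), so $k \geq |\upsilon|+2$ and $o_1\cdots o_{|\upsilon|+1} = \upsilon\bot$. Reusing the monotonicity argument from the proof of \autoref{lem-seeall-most-diagnoser}, replacing each $o_i$ at positions $i \geq |\upsilon|+2$ by the true letter $w_i$ only shrinks the belief, so $\Delta(B_0, \upsilon\bot w_{|\upsilon|+2}\cdots w_k) \subseteq \Delta(B_0, o_1\cdots o_k)$ remains deciding. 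Writing $u := w_{|\upsilon|+2}\cdots w_k$, we find that $\upsilon\bot u$ is deciding and that its length-$(|\upsilon|+1+|u|)$ witness is a prefix of $w$. Hence $w$ lies in the set $X$ of words matching some deciding extension of $\upsilon\bot$; by construction $\Pr(X) = \Pr(\{\upsilon\bot u : \upsilon\bot u \text{ deciding}\}) < \Pr(\upsilon\bot) = \Pr(E)$ by the confusion hypothesis, so $\Pr(\{w \in E : \rho \text{ decides } w\}) \leq \Pr(X) < \Pr(E)$ as required.

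The main obstacle I anticipate is keeping the $\bot$-bookkeeping clean: $\bot$'s occur in $\upsilon$ at positions dictated by $\rho$'s early decisions, at position $|\upsilon|+1$ by assumption, and later wherever $\rho$ chooses to skip. The single monotonicity principle in play---inserting $\bot$'s only enlarges the belief, so deciding observations remain deciding when $\bot$'s are replaced by actual letters, but not conversely---must be invoked with care in each direction. Everything else follows via \autoref{lem-belief-basic} on the correspondence between observation-prefix properties and belief properties.
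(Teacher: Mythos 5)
Your proof is correct and follows essentially the same route as the paper's: you use that $\upsilon$ not confused makes $\seeall$ decide almost every $w \gtrsim \upsilon$, that $\upsilon\bot$ confused leaves a positive-probability set of runs on which no extension of $\upsilon\bot$ along the actual word is deciding, and that $\pi_\rho(w)$ coarsens $\upsilon\bot w$ so monotonicity of ``deciding'' under refinement transfers the failure to $\rho$. The only difference is presentational (contrapositive framing and conditioning on the cylinder $E$ rather than directly lower-bounding $\Pr(\{w \mid \seeall \text{ decides } w,\ \rho \text{ does not decide } w\})$), so no further comparison is needed.
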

Hence, in order to be feasible, a policy must observe when it would get confused otherwise.
In \autoref{sec:nonhidden} we show that in the non-hidden case there is almost a converse of \autoref{prop-feasible-no-confusion}; i.e., in order to be feasible, a policy need not do much more than not allow confusion.



\section{Analyzing the Cost of Decision} 

In this section we study the computational complexity of finding feasible policies that minimize the expected cost of decision.
We focus on the decision version of the problem:
\emph{Is there a feasible policy whose expected cost is smaller than a given threshold?}
Define:
\[
 \cinf \ \defeq \ \inf_{\text{feasible } \rho} \Ex(C_\rho)
\]
Since the see-all policy~$\seeall$ never stops observing, we have $\Pr(C_\seeall=\infty) = 1$, so $\Ex(C_\seeall) = \infty$.
However, once an observation prefix~$\upsilon$ is deciding or very confused, there is no point in continuing observation.
Hence, we define a \df{light see-all} policy~$\smart$, which observes until the observation prefix~$u$ is deciding or very confused; formally, $\smart(\upsilon) = 0$ if and only if $\upsilon$~is deciding or very confused.
It follows from the definition of very confused that the policy~$\smart$ is feasible.
Concerning the cost~$C_\smart$ we have for all~$w$
\begin{equation} \label{eq-C-smart}
 C_\smart(w) \ = \ \sum_{n=0}^\infty \big( 1 - D_n(w) \big)\,,
\end{equation}
where $D_n(w) = 1$ if the length-$n$ prefix of~$w$ is deciding or very confused, and $D_n(w) = 0$ otherwise.
The following results are proved in the \ConcurArxiv{full version of the paper, on \href{https://arxiv.org/abs/1806.06143}{arXiv}}{appendix}:
\begin{restatable}{ourlemma}{lemepsfinimpliesCsmartfin} \label{lem-eps-fin-implies-Csmart-fin}
If $\varepsilon$~is finitary then $\Ex(C_\smart)$ is finite.
\end{restatable}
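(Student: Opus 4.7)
The plan is to reformulate $C_\smart$ as the absorption time of a finite Markov chain and then apply a textbook absorption argument. The first step is to check that the classes of deciding and very confused observation prefixes are both closed under extension: for positively (respectively negatively) deciding this is because $\{w \gtrsim \upsilon o\} \subseteq \{w \gtrsim \upsilon\}$; for very confused it is immediate from the alternative characterisation ``every enabled extension is non-deciding.'' Consequently the indicator $D_n$ appearing in~\eqref{eq-C-smart} is non-decreasing in~$n$, so $C_\smart(w)$ equals the first $n \ge 0$ at which the length-$n$ letter prefix of~$w$, viewed as an observation prefix, is deciding or very confused; write $T$ for this stopping time, and the goal becomes $\Ex(T) < \infty$.

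Next, I would consider the auxiliary process $(s_n, B_n)_{n \ge 0}$, where $s_n$ is the state of~$\M$ after $n$ steps and $B_n = \Delta(B_0, u_n)$ is the belief induced by the length-$n$ letter prefix~$u_n$. This is a Markov chain on the finite set $S \times 2^{S \times Q}$, with transition $(s,B) \to (s', \Delta(B,a))$ of probability $M(a)(s,s')$; the closure property just established ensures that the states $(s,B)$ for which $B$ is deciding or very confused form a genuinely absorbing subset, and $T$ is precisely its hitting time. By \autoref{lem-belief-basic}.6 the hypothesis that $\varepsilon$ is finitary implies that $B_0$ is finitary, and since $B_0 = \{(s_0, q_0)\}$ the alternative form of the definition unfolds to $\Pr(T < \infty) = 1$.

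Finally, I would invoke the standard fact that, in a finite-state Markov chain, almost-sure absorption from the initial state implies finite expected absorption time. Let~$R$ be the (finite) set of non-absorbing states reachable from $(s_0, B_0)$. By the Markov property each $r \in R$ still absorbs almost surely---otherwise reaching $r$ without later absorption would have positive probability---so for each $r$ there is an $N_r$ with $\Pr(T \le N_r) \ge 1/2$ when starting from~$r$. Taking $N = \max_{r \in R} N_r$ gives $\Pr(T > N) \le 1/2$ from every $r \in R$; iterating via the strong Markov property yields $\Pr(T > kN) \le 2^{-k}$, whence $\Ex(T) \le 2N < \infty$. The only point that needs care is the extension-closure check in the first paragraph, which is what makes the auxiliary chain's absorbing set actually absorbing and lets us identify $C_\smart$ with a hitting time; everything after that is bookkeeping and a standard absorption-time argument.
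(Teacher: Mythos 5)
Your proof is correct and follows essentially the same route as the paper's: both reduce the problem to absorption of the finite chain on $S \times 2^{S\times Q}$ obtained by pairing $\M$ with the determinized belief automaton, observe that the ``deciding or very confused'' states form an absorbing set (your extension-closure check is the paper's remark that dv beliefs have only dv successors), and derive finite expected absorption time from almost-sure absorption, which \autoref{lem-belief-basic}.6 supplies from the finitary hypothesis. The only difference is bookkeeping: the paper restricts to the set of finitary beliefs and bounds the spectral radius of the resulting substochastic matrix, whereas you restrict to reachable states and use the standard geometric-tail bound; both are valid instantiations of the same absorption argument.
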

\vskip-\topskip 
\begin{restatable}{ourlemma}{lemepsnotfinimpliesinf} \label{lem-eps-not-fin-implies-inf}
Let $\rho$ be a feasible observation policy.
If $\Pr(C_\rho < \infty) = 1$ then $\varepsilon$~is finitary.
\end{restatable}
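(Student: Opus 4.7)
The plan is to reduce the claim directly to Lemma~\ref{lem-seeall-most-diagnoser}. I would proceed in three short steps.

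First, I unpack the hypothesis $\Pr(C_\rho<\infty)=1$. By its definition, the sum $\sum_{k=0}^{|\upsilon|-1}\rho(o_1\cdots o_k)$ is well-defined and finite whenever $\pi_\rho(w)$ has a shortest deciding prefix $\upsilon$, and we take $C_\rho(w)=\infty$ otherwise. Thus $\{C_\rho<\infty\}$ is exactly the event ``$\rho$ decides~$w$'', and the hypothesis becomes: $\rho$ decides almost every~$w$.

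Second, I invoke Lemma~\ref{lem-seeall-most-diagnoser}; in fact its proof establishes the slightly stronger pointwise statement that, whenever $\rho$ decides some $w$ with $\upsilon$ a deciding prefix of $\pi_\rho(w)$, the actual length-$|\upsilon|$ prefix $u\in\Sigma^*$ of $w$ is itself a deciding observation prefix (because replacing letters of $u$ with $\bot$ can only enlarge the belief, so $u$ inherits deciding status from~$\upsilon$). Applied to the probability-$1$ set of $w$'s that $\rho$ decides, this yields: almost surely, the random infinite word $w$ has a deciding prefix $u\in\Sigma^*$.

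Finally, every deciding observation prefix is, in particular, ``deciding or very confused'', so
\[
 \Pr\bigl(\{\, u w \mid u \text{ deciding or very confused}\,\}\bigr) \ \ge \ \Pr\bigl(\{w \mid w \text{ has a deciding prefix in }\Sigma^*\}\bigr) \ = \ 1 \ = \ \Pr(\varepsilon),
\]
which is precisely the definition of $\varepsilon$ being finitary. I expect no real obstacle: the only subtlety is fixing the convention $C_\rho(w)=\infty$ when no deciding prefix exists. Note in particular that the feasibility hypothesis on $\rho$ is not actually used in this direction, because Lemma~\ref{lem-seeall-most-diagnoser} applies to an arbitrary policy.
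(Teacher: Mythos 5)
Your argument is internally consistent, but it rests on a reading of $C_\rho$ that the paper does not intend, and under the intended reading it does not prove the lemma. You set $C_\rho(w)=\infty$ whenever $\pi_\rho(w)$ has no deciding prefix, so that $\{C_\rho<\infty\}$ coincides with the event that $\rho$ decides~$w$; the lemma then collapses to ``$\rho$ decides almost surely, hence $\varepsilon$ is not confused, hence $\varepsilon$ is finitary''. But the paper's convention, visible in~\eqref{eq-C-smart}, is that when no deciding prefix is ever reached, $C_\rho(w)$ is the total number of observations made, which can be finite: the light see-all policy~$\smart$ has $C_\smart(w)<\infty$ as soon as some prefix of~$w$ is deciding \emph{or very confused}, even on runs it never decides. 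Under your convention, \autoref{lem-eps-fin-implies-Csmart-fin} and \autoref{prop-finitary-char} would fail (in a system that is finitary but confused, every feasible policy fails to decide with positive probability, so every feasible policy would have $\Ex(C_\rho)=\infty$ and hence $\cinf=\infty$, yet the proposition asserts $\cinf<\infty$). So the hypothesis $\Pr(C_\rho<\infty)=1$ is genuinely weaker than ``$\rho$ decides almost surely'', and your three steps prove a different, much easier statement.

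The tell-tale sign is your closing remark that feasibility is not used: under the intended reading the lemma is false without it (the policy that never observes anything has $C_\rho\equiv 0$, yet $\varepsilon$ need not be finitary). The paper argues by contraposition: if $\varepsilon$ is not finitary and $\Pr(C_\rho<\infty)=1$, then with positive probability $\rho$ stops observing forever after some finite prefix~$u$ of~$w$ that is neither deciding nor very confused; since $u$ is not very confused there is an enabled deciding extension $uu'$, so $\seeall$ decides every word in $\{uu'\}\Sigma^\omega$ while $\rho$, having stopped at~$u$, decides none of them, contradicting feasibility. That step, which plays feasibility off against the existence of a deciding continuation from a non-very-confused prefix, is the content your proposal is missing.
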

\vskip-\topskip 
\begin{restatable}{ourproposition}{propfinitarychar} \label{prop-finitary-char}
$\cinf$~is finite if and only if $\varepsilon$ is finitary.
\end{restatable}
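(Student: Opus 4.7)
The plan is to prove the two directions of the biconditional by appealing directly to the two preceding lemmas, so the work consists mainly in chaining them together with a standard observation about integrable random variables.

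For the direction ``$\varepsilon$ finitary $\Rightarrow \cinf$ finite,'' I would first note that the light see-all policy $\smart$ is feasible, as already observed in the text just after its definition. Then \autoref{lem-eps-fin-implies-Csmart-fin} gives $\Ex(C_\smart) < \infty$ under the hypothesis that $\varepsilon$ is finitary. Since $\cinf$ is an infimum over all feasible policies and $\smart$ is one such policy, we immediately get $\cinf \le \Ex(C_\smart) < \infty$.

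For the reverse direction ``$\cinf$ finite $\Rightarrow \varepsilon$ finitary,'' I would argue as follows. If $\cinf < \infty$, pick any feasible $\rho$ with $\Ex(C_\rho) < \infty$ (such a $\rho$ exists by the definition of infimum; e.g.\ one with $\Ex(C_\rho) \le \cinf + 1$). A standard fact about nonnegative random variables is that finite expectation forces the variable to be almost surely finite, so $\Pr(C_\rho < \infty) = 1$. \autoref{lem-eps-not-fin-implies-inf} then applies to this $\rho$ and yields that $\varepsilon$ is finitary.

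The argument is essentially bookkeeping: the two lemmas do all the substantive work, and the proposition just packages them. The only mild subtlety is making sure the $\rho$ chosen in the reverse direction is both feasible and of finite expected cost, which is what the definition of $\cinf$ as an infimum over feasible policies guarantees, together with the trivial passage from finite expectation to almost sure finiteness. I would not expect any real obstacle here beyond this routine combination.
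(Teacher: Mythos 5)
Your proof is correct and takes essentially the same approach as the paper: both directions are obtained by chaining \Cref{lem-eps-fin-implies-Csmart-fin,lem-eps-not-fin-implies-inf} exactly as you do. The only difference is cosmetic --- the paper states the second direction contrapositively ($\varepsilon$ not finitary implies $\Pr(C_\rho=\infty)>0$ and hence $\Ex(C_\rho)=\infty$ for every feasible $\rho$), whereas you argue directly from a feasible policy of finite expected cost; this is the same use of the same lemma.
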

\vskip-\topskip 
\begin{restatable}{ourproposition}{propdiagnosersfinite} \label{prop-diagnosers-finite}%
If a diagnoser exists then $\cinf$ is finite.
\end{restatable}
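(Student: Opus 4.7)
The plan is to deduce \autoref{prop-diagnosers-finite} by composing two earlier results. By \autoref{prop-diagnosability-confused} a diagnoser exists iff $\varepsilon$~is not confused, and by \autoref{prop-finitary-char} we have $\cinf$ finite iff $\varepsilon$~is finitary. So it suffices to prove the single implication ``$\varepsilon$~not confused $\implies$ $\varepsilon$~finitary''.

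To verify this implication I would switch to the alternative (set-theoretic) characterizations of confused and finitary given immediately after the primary definitions. Specialising $\upsilon=\varepsilon$, $\varepsilon$~is confused iff $\Pr(\{w\mid\text{no prefix of $w$ is deciding}\})>0$, and $\varepsilon$~is finitary iff $\Pr(\{w\mid\text{no prefix of $w$ is deciding or very confused}\})=0$. Since every deciding observation prefix is in particular ``deciding or very confused'', the obvious set inclusion
\[
\{w\mid\text{no prefix of $w$ is deciding or very confused}\}\ \subseteq\ \{w\mid\text{no prefix of $w$ is deciding}\}
\]
holds. Hence if $\varepsilon$~is not finitary, i.e.\ the left-hand event has positive probability, then so does the right-hand event, meaning $\varepsilon$~is confused. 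Contrapositively, $\varepsilon$~not confused implies $\varepsilon$~finitary, which finishes the argument.

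The only ``obstacle'' is really bookkeeping: one must use the equivalent set-theoretic forms of confused and finitary (rather than the original ``${<}\,\Pr(\upsilon)$'' forms) to make the monotonicity transparent. Once those are in hand, the proposition is a one-line corollary of \autoref{prop-diagnosability-confused} and \autoref{prop-finitary-char}, together with the trivial observation that any deciding prefix is automatically deciding-or-very-confused.
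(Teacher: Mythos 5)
Your proposal is correct and follows essentially the same route as the paper: the paper's proof likewise chains \autoref{prop-diagnosability-confused}, the definitional equivalence of ``$\varepsilon$ not confused'' with almost-sure occurrence of a deciding prefix, the trivial implication to almost-sure occurrence of a deciding-or-very-confused prefix, and \autoref{prop-finitary-char}. The only cosmetic difference is that you phrase the middle step via the complementary ``probability zero'' events rather than the ``probability one'' events.
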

\vskip-\topskip 
\begin{restatable}{ourtheorem}{thmfinitaryPSPACE} \label{thm-finitary-PSPACE}
It is PSPACE-complete to check if $\cinf < \infty$.
\end{restatable}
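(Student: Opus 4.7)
The plan is to combine \autoref{prop-finitary-char} with \autoref{lem-computing-the properties}. By \autoref{prop-finitary-char}, $\cinf < \infty$ iff $\varepsilon$ is finitary, and by item~6 of \autoref{lem-computing-the properties} the latter condition can be decided in PSPACE; this gives PSPACE membership.

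For PSPACE-hardness, I would reduce from the same NFA universality problem used in the proof of \autoref{thm-diagnosability-PSPACE}: given an NFA $\U$ over $\{a,b\}$ with all states initial and accepting, decide whether $L(\U) = \{a,b\}^*$. The goal is to build an MC $\M$ and DFA $\A$ such that $\varepsilon$ is finitary iff $\U$ is universal. One direction is immediate from the diagnosability reduction: universality of $\U$ makes the instance diagnosable, so by \autoref{prop-diagnosability-confused} the prefix $\varepsilon$ is not confused, and non-confusion trivially implies finitariness since the set of deciding extensions is contained in the set of deciding-or-very-confused extensions.

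The main obstacle is the reverse direction: non-universality of $\U$ must produce an $\varepsilon$ that is genuinely not finitary, not merely one that is confused. The danger is that the bad behavior witnessed by a word $w \notin L(\U)$ could sink into a very confused belief, which would leave $\varepsilon$ finitary and break the reduction. I would handle this by engineering the construction so that no enabled prefix ever becomes very confused. Concretely, the design requirement is that for every product-state $(s,q)$ reachable in $\M \times \A$ there exists a word $u \in \Sigma^*$ labelling a path from $(s,q)$ to some $(s',f)$. A natural way to enforce this is to reroute every would-be absorbing non-deciding region of the straightforward diagnosability construction back to a state from which the decision branches remain reachable, rather than letting it sink.

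Once very confused enabled prefixes are eliminated, non-enabled prefixes contribute probability $0$, so $\Pr(\{\upsilon u : \upsilon u \text{ very confused}\}) = 0$ and finitariness of $\varepsilon$ collapses to non-confusion. The equivalence with universality of $\U$ then follows exactly as in \autoref{thm-diagnosability-PSPACE}. The technical content of the proof lies in verifying that the rerouting preserves both directions of the diagnosability reduction; given that the underlying gadget for the Shallit-style reduction is already tailored to track attempted acceptance of words by $\U$, it should suffice to replace absorbing dead-ends with a small cycle back through a central dispatching state, whose correctness is a routine case analysis.
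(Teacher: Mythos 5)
Your PSPACE membership argument (via \autoref{prop-finitary-char} and \autoref{lem-computing-the properties}.6) is the paper's, and your hardness strategy --- reuse the universality reduction and show that finitariness collapses to non-confusion because no enabled prefix is very confused --- is also exactly the paper's. Two concrete points in your execution are wrong, however, and need repair.

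First, the polarity. In the construction of \autoref{thm-diagnosability-PSPACE}, $\varepsilon$ is confused if and only if $\U$ accepts all words, so universality of $\U$ makes the instance \emph{non}-diagnosable. The direction you call ``immediate'' rests on the opposite, false, claim. The correct immediate direction is: if $\U$ is \emph{not} universal then $\varepsilon$ is not confused, hence finitary; the equivalence one actually obtains is ``$\varepsilon$ finitary iff $\U$ is not universal'', which still gives PSPACE-hardness because PSPACE is closed under complement, but you must flip the polarity consistently. The nontrivial direction then reads: $\U$ universal $\Rightarrow$ $\varepsilon$ confused $\Rightarrow$ (given that no enabled prefix is very confused) $\varepsilon$ not finitary.

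Second, your proposed design requirement --- every reachable product state $(s,q)$ can reach some $(s',f)$ --- is neither sufficient nor necessary for ``no enabled prefix is very confused''. It is not sufficient: reaching $(s',f)$ from one element of a belief only places a positively deciding pair \emph{into} the successor belief, and a belief is deciding only when \emph{all} of its elements are, so the successor beliefs can stay non-deciding forever (compare \autoref{ex-obs-prefixes}, where $\Delta(B_0,a\bot)$ contains $(s_2,f)$ yet is not deciding). It is not necessary: the unmodified construction of \autoref{thm-diagnosability-PSPACE} violates it (the pair $(s_1,q_0)$ of the absorbing right branch can never reach $f$) and nevertheless already has the property you need, since for every enabled $a u_0$ with $u_0\in\Sigma^*$ either $a u_0 \#$ is enabled and positively deciding (when $\U$ is universal) or $a u_0 u$ is enabled and negatively deciding for a word $u$ rejected by $\U$, while prefixes containing $\#$ are themselves deciding. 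So no rerouting is needed; what is needed --- and what your proposal defers to a ``routine case analysis'' of an unspecified modified gadget --- is precisely this direct verification on the original construction.
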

\autoref{lem-eps-fin-implies-Csmart-fin} holds because, in~$\M\times\A$, a bottom strongly connected component is reached in expected finite time.
\autoref{lem-eps-not-fin-implies-inf} says that a kind of converse holds for feasible policies.
\autoref{prop-finitary-char} follows from \Cref{lem-eps-fin-implies-Csmart-fin,lem-eps-not-fin-implies-inf}.
\autoref{prop-diagnosers-finite} follows from \Cref{prop-diagnosability-confused,prop-finitary-char}.
To show \autoref{thm-finitary-PSPACE}, we use \autoref{prop-finitary-char} and adapt the proof of \autoref{thm-diagnosability-PSPACE}.

%
%
The main negative result of the paper is that one cannot compute~$\cinf$:
\begin{restatable}{ourtheorem}{thmundecidable} \label{thm-undecidable}
It is undecidable to check if $\cinf < 3$, even when a diagnoser exists.
\end{restatable}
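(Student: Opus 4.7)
The plan is to reduce from a known-undecidable problem about probabilistic finite automata (PFAs): most naturally the \emph{value-1 problem} (given a PFA~$\U$, is $\sup_{u\in\Sigma^*}\Pr_\U(u\text{ accepted}) = 1$?), or equivalently the strict threshold emptiness problem. Given an instance, I would build an MC~$\M$ and a DFA~$\A$ in polynomial time such that a cheap feasible policy exists (witnessing $\cinf<3$) exactly when the underlying PFA instance is positive. The statement ``even when a diagnoser exists'' suggests arranging the construction so that $\varepsilon$ is non-confused regardless of the PFA's answer, which is usually achieved by appending a dedicated ``terminator'' subprocess that, with probability~$1$ eventually, produces a clearly deciding suffix; this keeps the reduction on the ``diagnosable'' side of the landscape established in \autoref{prop-diagnosability-confused}.

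At the core of the construction I envisage an MC that, after one mandatory initial letter that starts the simulation, emits letters which jointly simulate a random input to~$\U$ together with the state the PFA reaches. The DFA~$\A$ would be set up so that a run is accepted exactly when the simulated PFA-run passes through an accepting state at a designated moment. Feasibility forces any policy to observe enough to catch the eventual deciding prefix produced by the terminator; by \autoref{prop-feasible-no-confusion}, the policy must also observe whenever skipping would allow confusion. The point of the reduction is that the ability to \emph{skip} one of the simulation letters is controlled by how close the belief after that letter is to a deciding belief, which in turn is controlled by acceptance probabilities in~$\U$. If $\U$ has value~$1$ then for appropriate observation choices the belief becomes arbitrarily close to (positively) deciding, so a procrastinating policy can safely skip one observation with probability tending to~$1$, and its expected cost falls strictly below the threshold; if $\U$ has value bounded away from~$1$, then confusion is allowed by any such skip and feasibility forces the policy to pay the extra observation on a set of positive probability, pushing $\cinf$ to at least~$3$.

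The calibration of the constant~$3$ should fall out of the structure: roughly, one observation is spent to resolve the initial coin flip/terminator channel, and two more are the baseline for the simulation; the single ``saved'' observation on the value-1 branch is what separates $\cinf<3$ from $\cinf\ge 3$. To make the threshold rational and sharp, the reduction would mix the simulation with the terminator using carefully chosen probabilities so that the expected cost on the ``skip-allowed'' branch has a closed form, while on the ``skip-forbidden'' branch the cost is exactly~$3$. The ``diagnoser exists'' clause is preserved because the terminator guarantees decisiveness almost surely, independently of the PFA instance.

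The main obstacle will be proving the second, quantitative direction: showing that when $\U$'s value is strictly less than~$1$, \emph{every} feasible policy has expected cost at least~$3$. This requires a careful belief-space argument using \autoref{lem-belief-basic} and \autoref{prop-feasible-no-confusion} to show that, on a set of runs of positive probability bounded away from~$0$ (uniformly in the policy), the policy is forced to observe the extra letter. A secondary subtlety is eliminating ``almost feasible'' shortcuts: one must verify that no clever randomized or history-dependent strategy exploits the fact that $\cinf$ is an infimum rather than a minimum. If the straightforward reduction from value-1 runs into trouble with the infimum, an alternative is to reduce from the undecidable \emph{emptiness of PFA at a strict threshold}, which gives a cleaner gap between the two cases and avoids limiting arguments.
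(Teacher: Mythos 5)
Your overall strategy---reduce from an undecidable PFA threshold problem, calibrate the constant $3$ as the mean of a geometric number of copy-revealing observations, and let a profitable deviation exist exactly when some word is accepted with probability above the threshold---is indeed the paper's strategy. But two things in your plan would fail as written. First, the value-$1$ problem is the wrong source problem for this gadget: in any construction where a single word $u$ with acceptance probability $>\frac12$ already yields a profitable deviation (which is what your ``saved observation'' mechanism provides), a PFA of value $0.9$ would be indistinguishable from one of value~$1$. You need strict threshold emptiness at~$\frac12$ (which you mention only as a fallback), and then the two cases to separate are ``some $u$ has $\Pr_{\P}(u)>\frac12$'' versus ``all $u$ have $\Pr_{\P}(u)\le\frac12$''. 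Second, and more seriously, your lower-bound mechanism is wrong in kind: you argue that when the value is low, ``confusion is allowed by any such skip and feasibility forces the policy to pay the extra observation.'' In the intended construction feasibility never forbids the deviation; the deviation is always a legal bet, and the lower bound $\Ex(C_\rho)\ge 3$ must be a purely quantitative argument that every such bet is (weakly) unprofitable. \autoref{prop-feasible-no-confusion} plays no role there.

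The genuinely hard point, which your plan does not identify, is that the naive quantitative argument (``each deviation succeeds with conditional probability $\frac23\Pr_{\P}(u)\le\frac13$, no better than the baseline'') is valid only for the \emph{first} deviation. A failed deviation is itself an observation: seeing the ``failure'' outcome shifts the posterior over the PFA's current state towards states with low acceptance probability, and this leaked information can make a \emph{later} deviation profitable even though every individual word has acceptance probability $\le\frac12$. The paper's fix is to replace the single ``reveal-or-not'' letter by a $4$-bit block whose eight patterns are chosen so that (i) the first two bits agree with probability $\eta(s)$ and agreement reveals the copy, and (ii) any non-deciding partial observation of a block is, by a parity symmetry, uninformative about which copy and which PFA state the chain is in, so nothing exploitable is learned from a failed bet. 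Without such a no-information-leakage gadget, the case analysis you defer to (``a careful belief-space argument'') cannot be closed; this is exactly the flaw the authors document in their own first, incorrect, version of the proof.
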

\begin{proof}[Proof sketch]
By a reduction from the undecidable problem whether a given probabilistic automaton accepts some word with probability $>\frac12$.
The proof is somewhat complicated.
In fact, in the \ConcurArxiv{full version of the paper (\href{https://arxiv.org/abs/1806.06143}{arXiv})}{appendix} we give two versions of the proof: a short incorrect one (with the correct main idea) and a long correct one.
\end{proof}

\section{The Non-Hidden Case} \label{sec:nonhidden} 

Now we turn to positive results.
In the rest of the paper we assume that the MC~$\M$ is non-hidden, i.e., there exists a function $\t{\cdot} : \Sigma \to S$ such that $M(a)(s,s')>0$ implies $s'=\t{a}$.
We extend~$\t\cdot$ to finite words so that $\t{u a} = \t{a}$.
We write $s \en{u} {}$ to indicate that there is~$s'$ with $M(u)(s,s') > 0$.

\begin{ourexample} \label{ex-hidden-1}
Consider the following non-hidden MC and DFA:
\begin{center}
\begin{tikzpicture}[scale=2,LMC style]
\node[state] (s0) at (0,0) {$\t{a}$};
\node[state] (s1) at (-1,0) {$\t{b}$};
\node[state] (s2) at (1,0) {$\t{c}$};
\path[->] (0,0.5) edge (s0);
\path[->] (s0) edge[pos=0.4] node[above] {$\frac12 b$} (s1);
\path[->] (s0) edge[bend left] node[above] {$\frac12 c$} (s2);
\path[->] (s1) edge [loop,out=160,in=200,looseness=11] node[pos=0.5,left] {$1 b$} (s1);
\path[->] (s2) edge[bend left] node[below] {$1 a$} (s0);
\end{tikzpicture}
\hspace{15mm}
\begin{tikzpicture}[scale=2,LMC style]
\node[state] (q0) at (0,0) {$q_0$};
\node[state,accepting] (q1) at (1,0) {$f$};
\path[->] (0,0.5) edge (q0);
\path[->] (q0) edge [loop,out=160,in=200,looseness=11] node[pos=0.5,left] {$a,b$} (q0);
\path[->] (q0) edge node[pos=0.4,above] {$c$} (q1);
\path[->] (q1) edge [loop,out=340,in=20,looseness=11] node[pos=0.5,right] {$\Sigma$} (q1);
\end{tikzpicture}
\end{center}
\vskip-1ex
\begin{equation*}
\begin{array}{r@{}l@{}l@{\qquad}r@{}l@{}l}
B_0
  &\;\defeq\;
  \{(\t{a},q_0)\}
  &\;\phantom{{}={}}\; \qquad
&
B_2
  &\;\defeq\;
  \Delta(B_0,\bot^2)
  &\;=\;
  \{(\t{b},q_0), (\t{a},f)\}
\\
B_1
  &\;\defeq\;
  \Delta(B_0,\bot)
  &\;=\;
  \{(\t{b},q_0), (\t{c},f)\} \qquad
&
B_3
  &\;\defeq\;
  \Delta(B_0,\bot^2 b)
  &\;=\;
  \{(\t{b},q_0), (\t{b},f)\}
\end{array}
\end{equation*}
$B_0$~is the initial belief.
The beliefs $B_0$ and~$B_1$ are not confused: indeed, $\Delta(B_1, b) = \{(\t{b},q_0)\}$ is negatively deciding, and $\Delta(B_1, a) = \{(\t{a},f)\}$ is positively deciding.
The belief~$B_2$ is confused, as there is no $i \in \NN$ for which $\Delta(B_2, b^i)$ is deciding.
Finally, $B_3$~is very confused.
\end{ourexample}
We will show that in the non-hidden case there always exists a diagnoser (\autoref{lem-pro-diagnoser}).
It follows that feasible policies need to decide almost surely and, by \autoref{prop-diagnosers-finite}, that $\cinf$~is finite.
We have seen in \autoref{prop-feasible-no-confusion} that feasible policies do not allow confusion.
In this section we construct policies that procrastinate so much that they avoid confusion just barely.
We will see that such policies have an expected cost that comes arbitrarily close to~$\cinf$.

\paragraph*{Language Equivalence.}

We characterize confusion by language equivalence in a certain DFA.
Consider the belief NFA~$\B$.
In the non-hidden case, if we disallow $\bot$-transitions then $\B$~becomes a DFA~$\B'$.
For~$\B'$ we define a set of accepting states by $\Fpos \defeq \{(s,q) \mid \Pr_s(L_{q}) = 1\}$.
\begin{ourexample} \label{non-hidden-DFA}
For the previous example, a part of the DFA~$\B'$ looks as follows:
\begin{center}
\begin{tikzpicture}[xscale=2.1,yscale=2,DFA style]
\node[state] (s0) at (0,0)  {$(\t{a},q_0)$};
\node[state] (m1) at (-1,0) {$(\t{b},q_0)$};
\node[state,accepting] (s1) at (1,0)  {$(\t{c},f  )$};
\node[state,accepting] (s2) at (2,0)  {$(\t{a},f  )$};
\node[state,accepting] (s3) at (3,0)  {$(\t{b},f  )$};
\path[->] (s0) edge[pos=0.4] node[above] {$b$} (m1);
\path[->] (s0) edge[pos=0.4] node[above] {$c$} (s1);
\path[->] (m1) edge [loop,out=160,in=200,looseness=8] node[pos=0.2,above] {$b$} (m1);
\path[->] (s1) edge[bend left] node[above] {$a$} (s2);
\path[->] (s2) edge[bend left] node[above] {$c$} (s1);
\path[->] (s2) edge[pos=0.4] node[above] {$b$} (s3);
\path[->] (s3) edge [loop,out=20,in=-20,looseness=8] node[pos=0.2,above] {$b$} (s3);
\end{tikzpicture}
\end{center}
States that are unreachable from $(\t{a},q_0)$ are not drawn here.
\end{ourexample}
We associate with each $(s,q)$ the language $L_{s,q} \subseteq \Sigma^*$ that $\B'$~accepts starting from initial state~$(s,q)$.
We call $(s,q), (s',q')$ \df{language equivalent}, denoted by $(s,q) \bis (s',q')$, when $L_{s,q} = L_{s',q'}$.


\begin{ourlemma} \label{lem-bisim-in-P}
One can compute the relation~$\mathord{\bis}$ in polynomial time.
\end{ourlemma}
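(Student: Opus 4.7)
The plan is to build the DFA $\B'$ explicitly and then apply a standard polynomial-time DFA state-equivalence procedure. First I would observe that $\B'$ has $|S|\cdot|Q|$ states and, in the non-hidden case, its transition function is genuinely deterministic: for each $(s,q)\in S\times Q$ and each $a\in\Sigma$, the set $\Delta((s,q),a)$ is either empty or the singleton $\{(\t{a},\delta(q,a))\}$, since non-hiddenness forces the only possible successor under~$a$ to be~$\t{a}$. Adding a fresh non-accepting sink state to totalise the transition function if desired, the resulting object is a standard DFA whose size is polynomial in that of~$\M$ and~$\A$.

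Next I would compute the accepting set $\Fpos = \{(s,q)\mid \Pr_s(L_q)=1\}$. By \autoref{lem-computing-the properties}.3 we can decide in polynomial time, for any fixed pair~$(s,q)$, whether $\Pr_s(L_q)=1$ (this is precisely the positively-deciding check applied to the singleton belief~$\{(s,q)\}$). Running this test for each of the $|S|\cdot|Q|$ candidate pairs yields~$\Fpos$ in polynomial time overall.

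Finally, with the DFA~$\B'$ and its accepting set in hand, the relation~$\bis$ is by definition DFA state equivalence (equivalently, the kernel of the right-language map). This is computed by the classical Moore/Hopcroft partition-refinement algorithm in time polynomial in the number of states of~$\B'$, hence in time polynomial in the input.

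The only non-routine ingredient is the computation of~$\Fpos$; once this set has been identified, the remainder is a textbook appeal to DFA minimisation, and the polynomial bound on computing~$\Fpos$ is already supplied by \autoref{lem-computing-the properties}.3 (which ultimately reduces to a reachability/BSCC analysis in the product $\M\times\A$). So no real obstacle remains, and the proof assembles cleanly from these three steps.
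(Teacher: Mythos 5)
Your proposal is correct and follows essentially the same route as the paper's own proof: compute $\Fpos$ by checking $\Pr_s(L_q)=1$ for each pair via polynomial-time MC algorithms (the paper likewise invokes \autoref{lem-computing-the properties}.3), then compute language equivalence in the DFA~$\B'$ by standard minimisation. The extra details you supply (determinism of~$\B'$ in the non-hidden case, totalising with a sink) are harmless elaborations of the same argument.
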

\begin{proof}
For any $(s,q)$ one can use standard MC algorithms to check in polynomial time if $\Pr_s(L_q) = 1$ (using a graph search in the composition $\M \times \A$, as in the proof of \autoref{lem-computing-the properties}.3).
Language equivalence in the DFA~$\B'$ can be computed in polynomial time by minimization.
\end{proof}%
We call a belief $B \subseteq  S \times Q$ \df{settled} when all $(s,q) \in B$ are language equivalent.
\begin{restatable}{ourlemma}{lemconfusednotbisimilar} \label{lem-confused-not-bisimilar}
A belief $B \subseteq S \times Q$ is confused if and only if there is $a \in \Sigma$ such that $\Delta(B,a)$ is not settled.
\end{restatable}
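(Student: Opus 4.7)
The plan is to prove each direction separately, using the following structural observation particular to the non-hidden setting: elements of $\Delta(B,a)$ all share the first component $\t{a}$, so two $\bis$-equivalent such elements must have the same ``type''---both positively deciding (in $F_{\B'}$), both negatively deciding ($L=\emptyset$), or both confused---because with a common MC component every type change is witnessed in $\B'$ (by $\varepsilon$ separating positive from non-positive, and by emptiness of the language separating negative from non-negative). Hence a settled $\Delta(B,a)$ is either deciding (all elements of one pos or one neg type) or lies entirely in a single confused $\bis$-class.

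For the $(\Leftarrow)$ direction, I will assume $\Delta(B,a)$ is not settled and pick inequivalent $(s,q_1'),(s,q_2')\in\Delta(B,a)$ with $s=\t{a}$ and a shortest distinguishing word $v$ in $\B'$, WLOG $v\in L_{s,q_1'}\setminus L_{s,q_2'}$. Shortness forces $(\t{v''},\delta(q_1',v''))\bis(\t{v''},\delta(q_2',v''))$ for every proper prefix $v''$ of $v$, and since both $F_{\B'}$ and the set of negatively deciding states are absorbing in $\B'$, these common classes must in fact be confused classes (otherwise the descendants could never split at $v$). The witness for confusion would be $(s^*,q^*)\in B$ with $M(a)(s^*,s)>0$ and $\delta(q^*,a)=q_1'$, together with the event $E$ that the MC from $s^*$ emits a word with prefix $av$ whose continuation keeps $\delta(q_2',v\cdot)$ forever outside $F_{\B'}$. $E$ has positive probability because $M(av)(s^*,\t{v})>0$ and, conditional on $av$, the tail event has probability $1-\Pr_{\t{v}}(L_{\delta(q_2',v)})>0$ since $(\t{v},\delta(q_2',v))\notin F_{\B'}$. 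I would then verify that on $E$ no prefix $u$ is deciding: for $u\in\{\varepsilon,a\}$ or $u=av''$ with $v''$ a proper prefix of $v$, $\Delta(B,u)$ contains either two $\bis$-inequivalent elements or two $\bis$-equivalent confused elements, so it is neither positively nor negatively deciding; and for $|u|\ge|av|$, $\Delta(B,u)$ contains the $q_1'$-descendant, absorbed in $F_{\B'}$ (pos), and the $q_2'$-descendant, kept outside $F_{\B'}$ (non-pos) by the conditioning, so again $\Delta(B,u)$ is not deciding. Thus $B$ is confused.

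For the $(\Rightarrow)$ direction I will argue the contrapositive. Assuming $\Delta(B,a)$ is settled for every $a\in\Sigma$, a routine induction on $|u|$---using that $\bis$ is a congruence under $\B'$-transitions---upgrades this to $\Delta(B,u)$ settled for every nonempty $u$; by the setup observation, each such $\Delta(B,u)$ is either deciding or a single confused $\bis$-class. To show $B$ not confused, fix $(s^*,q^*)\in B$ and a $\Pr_{s^*}$-random word. Its first letter $a$ leaves us in settled $\Delta(B,a)$; if deciding, we are done, else it is confused, and I would track the $\M\times\A$-descendant of the representative $(\t{a},\delta(q^*,a))$. Almost surely this descendant reaches a BSCC of $\M\times\A$, which is either entirely in the $(\cdot,f)$-layer (positively deciding) or disjoint from it (negatively deciding); at the first-hit prefix $au'$ the representative has become pos or neg deciding, and by $\bis$-equivalence every other element of $\Delta(B,au')$ has the same type, so $\Delta(B,au')$ is deciding. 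Hence $\Pr_{s^*}$-almost surely some prefix is deciding.

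The main obstacle is the prefix-by-prefix verification in $(\Leftarrow)$: one must rule out \emph{every} prefix from being deciding, not just a terminal one. The shortness of $v$ secures this below the threshold $|av|$ and the conditioning secures it above. A small preliminary point to verify is that ``$\Delta(B,a)$ not settled'' already rules out $B$ itself being deciding (deciding beliefs have all one-step derivatives deciding, hence settled), so $\Delta(B,\varepsilon)=B$ is safely non-deciding and need not be treated separately.
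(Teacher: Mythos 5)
Your proof is correct and follows essentially the same route as the paper's: the forward direction rests on the fact that in $\M\times\A$ every run almost surely reaches a bottom SCC that is uniformly positively or uniformly negatively deciding, combined with the observation that language-equivalent pairs sharing an MC component have the same deciding type; the converse direction picks a separating word and exhibits a positive-probability set of runs along which no belief $\Delta(B,u)$ is ever deciding. One intermediate claim is stated backwards, though: if $v$ is a shortest word distinguishing $(s,q_1')$ from $(s,q_2')$ in $\B'$, then for every proper prefix $v''$ the descendants $(\t{a v''},\delta(q_1',v''))$ and $(\t{a v''},\delta(q_2',v''))$ are still \emph{in}equivalent --- they are separated by the remaining suffix of $v$ --- and they need not both be ``confused'' (one may already be negatively deciding while the other is not). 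What shortness actually yields is that the two descendants have the same $F_{\B'}$-membership at every proper prefix (and in fact even this is more than you need). Your argument survives because the disjunction you invoke for intermediate prefixes (``two $\bis$-inequivalent elements or two $\bis$-equivalent confused elements'') is satisfied via its first branch, and two inequivalent pairs with a common MC component can be neither both positively nor both negatively deciding: either property is absorbing in $\B'$ and would propagate to the point where $v$ separates them. So every intermediate belief is indeed non-deciding and the conclusion stands. The only other (harmless) deviation from the paper is in the terminal event: you condition on the continuation after $av$ never entering $F_{\B'}$, whereas the paper first appends a word that drives the $q_2'$-descendant to a negatively deciding pair, after which \emph{every} continuation keeps the belief non-deciding; both variants give the required positive-probability witness of confusion.
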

It follows that one can check in polynomial time whether a given belief is confused.
We generalize this fact in \autoref{lem-check-procr-k} below.
\begin{ourexample} \label{ex-not-settled}
In \autoref{ex-hidden-1} the belief $B_3$ is not settled.
Indeed, from the DFA in \autoref{non-hidden-DFA} we see that $L_{\t{b},q_0} = \emptyset \ne \{b\}^* = L_{\t{b},f}$.
Since $B_3 = \Delta(B_2,b)$, by \autoref{lem-confused-not-bisimilar}, the belief~$B_2$ is confused.
\end{ourexample}

\paragraph*{Procrastination.}

For a belief~$B \subseteq S \times Q$ and $k \in \NN$, if $\Delta(B, \bot^k)$ is confused then so is $\Delta(B, \bot^{k+1})$.
We define:
\[
 \cras(B)
  \;\defeq\;
  \sup \{\,k \in \NN \mid \Delta(B, \bot^k) \text{ is not confused}\,\}
  \;\in\;
  \NN \cup \{-1,\infty\}
\]
We set $\cras(B) \defeq -1$ if $B$ is confused.
We may write $\cras(s,q)$ for $\cras(\{(s,q)\})$.%
\begin{ourexample}
In \autoref{ex-hidden-1} we have $\cras(B_0) = \cras(\t{a},q_0) = 1$ and $\cras(B_1) = 0$ and $\cras(B_2) = \cras(B_3) = -1$ and $\cras(\t{b},q_0) = \cras(\t{a},f) = \infty$.
\end{ourexample}
\begin{ourlemma} \label{lem-check-procr-k}
Given a belief~$B$, one can compute~$\cras(B)$ in polynomial time.
Further, if $\cras(B)$ is finite then $\cras(B) < |S|^2 \cdot |Q|^2$.
\end{ourlemma}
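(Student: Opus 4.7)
The plan is to phrase $\cras(B)$ as a shortest-path problem in an asynchronous pair product graph of polynomial size, from which both the polynomial-time algorithm and the polynomial bound fall out.

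First, I would unpack \autoref{lem-confused-not-bisimilar} in the non-hidden setting: a belief $B'$ is confused iff there exist $(s_1,q_1),(s_2,q_2) \in B'$ and $a \in \Sigma$ with $s_1 \en{a}{}$, $s_2 \en{a}{}$, and $(\t{a},\delta(q_1,a)) \not\bis (\t{a},\delta(q_2,a))$. This holds because in the non-hidden case $\Delta(B',a) \subseteq \{\t{a}\} \times Q$, so ``not settled'' reduces to containing two non-equivalent pairs with the common first coordinate $\t{a}$. By \autoref{lem-bisim-in-P} the relation $\bis$ is computable in polynomial time, so this characterization makes confusion of any given belief checkable in polynomial time.

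Next, I would introduce the pair graph $H$ with vertex set $(S\times Q)^2$ and an edge from $((s_1,q_1),(s_2,q_2))$ to $((\t{b_1},\delta(q_1,b_1)),(\t{b_2},\delta(q_2,b_2)))$ whenever $s_i \en{b_i}{}$ for $i=1,2$, where the letters $b_1,b_2$ can be chosen independently. Let $\mathrm{Bad} \subseteq (S\times Q)^2$ consist of the pairs satisfying the characterization above (they ``immediately witness confusion''). Both $H$ and $\mathrm{Bad}$ can be built in polynomial time. The key correspondence to establish is that $\Delta(B,\bot^k) \times \Delta(B,\bot^k)$ equals the set of vertices reachable from $B \times B$ by a path of exactly $k$ edges in $H$. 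This follows by unwinding $\Delta(\cdot,\bot^k)$: a state $(s,q)$ lies in $\Delta(B,\bot^k)$ iff there is a length-$k$ path from some element of $B$ to $(s,q)$ in $\M\times\A$, and $H$-edges apply one such step to each coordinate simultaneously but independently. Combined with the characterization of confusion, $\Delta(B,\bot^k)$ is confused iff some bad vertex is reachable from $B \times B$ in exactly $k$ $H$-steps.

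Consequently $\cras(B)+1$ equals the shortest $H$-path length from $B \times B$ to $\mathrm{Bad}$ (and $\cras(B)=\infty$ if no such path exists). A BFS on the polynomial-size graph $H$ therefore computes $\cras(B)$ in polynomial time, and since shortest paths in $H$ have length at most $|H|-1 = |S|^2|Q|^2 - 1$, we obtain $\cras(B) \le |S|^2|Q|^2 - 2 < |S|^2|Q|^2$ whenever $\cras(B)$ is finite. The main obstacle is verifying the exact-$k$-step correspondence (rather than mere reachability), which relies crucially on the independence of the two coordinate transitions in $H$ and on the non-hidden hypothesis making each single step deterministic in the state component.
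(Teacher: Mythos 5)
Your proposal is correct and follows essentially the same route as the paper: it reduces confusion of $\Delta(B,\bot^k)$ via \autoref{lem-confused-not-bisimilar} to reaching a set of non-language-equivalent ``bad'' pairs in a product graph on $(S\times Q)^2$, and reads off $\cras(B)$ as (shortest-path length)$-1$, which also yields the $|S|^2\cdot|Q|^2$ bound. The only cosmetic difference is that your graph $H$ steps the two coordinates independently from their respective sources, whereas the paper's graph $G$ uses $\Delta(\{(s,q),(t,r)\},\bot)\supseteq\{(s',q'),(t',r')\}$; the two definitions induce the same exact-$k$-step reachability from the set of pairs of elements of $B$.
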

\begin{proof}
Let $k \in \NN$.
By \autoref{lem-confused-not-bisimilar}, $\Delta(B, \bot^k)$ is confused if and only if:
\[ \exists\,a.\,\exists\,(s,q), (t,r) \in \Delta(B, \bot^k) :  s \en{a},\ t \en{a},\ (\t{a},\delta(q,a)) \not\bis (\t{a},\delta(r,a))
\]
This holds if and only if there is $B_2 \subseteq B$ with $|B_2|\le2$ such that:
\[ \exists\,a.\,\exists\,(s,q), (t,r) \in \Delta(B_2, \bot^k) :  s \en{a},\ t \en{a},\ (\t{a},\delta(q,a)) \not\bis (\t{a},\delta(r,a))
\]
%
%
%
%
Let $G$~be the directed graph with nodes in $S \times Q \times S \times Q$ and edges
\[
 ((s,q,t,r) , (s',q',t',r')) \qquad \Longleftrightarrow \qquad \Delta(\{(s,q),(t,r)\},\bot) \ \supseteq \ \{(s',q'),(t',r')\}\,.
\]
Also define the following set of nodes:
\[
 U \ \defeq \ \{(s,q,t,r) \mid \exists\,a: s \en{a},\ t \en{a},\ (\t{a},\delta(q,a)) \not\bis (\t{a},\delta(r,a))\}
\]
By \autoref{lem-bisim-in-P} one can compute~$U$ in polynomial time.
It follows from the argument above that $\Delta(B, \bot^k)$ is confused if and only if there are $(s,q),(t,r) \in B$ such that there is a length-$k$ path in~$G$ from $(s,q,t,r)$ to a node in~$U$.
Let $k \le |S \times Q \times S \times Q|$ be the length of the shortest such path, and set $k \defeq \infty$ if no such path exists.
Then $k$~can be computed in polynomial time by a search of the graph~$G$, and we have $\cras(B) = k-1$.
\end{proof}

\paragraph*{The Procrastination Policy.}

For any belief~$B$ and any observation prefix~$\upsilon$, the language equivalence classes represented in $\Delta(B, \upsilon)$ depend only on~$\upsilon$ and the language equivalence classes in~$B$.
Therefore, when tracking beliefs along observations, we may restrict~$B$ to a single representative of each equivalence class.
We denote this operation by $\res{B}$.
A belief~$B$ is settled if and only if $|\res{B}| \le 1$.

A \df{procrastination policy}~$\prho(K)$ is parameterized with (a large) $K \in \NN$.
Define (and precompute) $k(s,q) \defeq \min\{K, \cras(s,q)\}$ for all $(s,q)$.
We define $\prho(K)$ by the following monitor that implements it:
\begin{enumerate}
\item $i \defeq 0$ 
\item while $(s_i,q_i)$ is not deciding:
 \begin{enumerate}
 \item skip $k(s_i,q_i)$ observations, then observe a letter~$a_i$
 \item $\{(s_{i+1},q_{i+1})\} \defeq \res{\Delta((s_i,q_i), \bot^{k(s_i,q_i)}a_i)}$;
 \item $i \defeq i+1$;
 \end{enumerate}
\item output yes\slash no decision
\end{enumerate}
It follows from the definition of $\cras$ and \autoref{lem-confused-not-bisimilar} that $\res{\Delta((s_i,q_i), \upsilon_i)}$ is indeed a singleton for all~$i$.
We have:
\begin{ourlemma} \label{lem-pro-diagnoser}
For all $K \in \NN$ the procrastination policy $\prho(K)$ is a diagnoser.
\end{ourlemma}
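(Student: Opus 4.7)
The plan is to prove the claim in three stages: well-definedness of the monitor, a correctness invariant relating its tracked state $(s_i,q_i)$ to the true belief, and almost-sure termination of the while loop.

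Well-definedness is immediate: by construction $k(s_i,q_i)\le\cras(s_i,q_i)$, so $\Delta(\{(s_i,q_i)\},\bot^{k(s_i,q_i)})$ is not confused. By \autoref{lem-confused-not-bisimilar}, $\Delta(\{(s_i,q_i)\},\bot^{k(s_i,q_i)}a)$ is settled for every $a\in\Sigma$ and is non-empty whenever $a$ is actually emitted. Hence the $\res{\cdot}$ in step~2(b) yields a singleton and $(s_{i+1},q_{i+1})$ is well defined.

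Next I would use the fact (stated just before the definition of $\res{\cdot}$) that the language equivalence classes in $\Delta(B,\upsilon)$ depend only on $\upsilon$ and on the classes in~$B$. Writing $\upsilon_i=\bot^{k(s_0,q_0)}a_0\cdots\bot^{k(s_{i-1},q_{i-1})}a_{i-1}$ for the observation prefix produced after $i$ iterations, a straightforward induction on~$i$ yields the invariant $\res{\Delta(B_0,\upsilon_i)}=\{(s_i,q_i)\}$. In particular every element of $\Delta(B_0,\upsilon_i)$ is language equivalent to $(s_i,q_i)$, and language equivalence preserves both positively deciding (since $(s,q)\in\Fpos$ iff $\varepsilon\in L_{s,q}$) and negatively deciding (since $\Pr_s(L_q)=0$ iff $L_{s,q}=\emptyset$, using that $f$ has self-loops on all letters). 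Therefore $\Delta(B_0,\upsilon_i)$ has exactly the same deciding status as~$(s_i,q_i)$.

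For almost-sure termination I would argue via BSCCs of $\M\times\A$: each BSCC~$C$ is deciding, because either some $(s,f)\in C$, in which case almost-sure recurrence in~$C$ gives $\Pr_{s'}(L_{q'})=1$ for every $(s',q')\in C$, or no state of~$C$ has $q=f$, in which case $f$ is never reached from~$C$ and $C$ is negatively deciding. Since $\M\times\A$ almost surely enters a BSCC in finite time and the policy advances the product MC by between~$1$ and $K+1$ steps per iteration, almost surely there is an iteration~$i$ at which the true product state lies in a BSCC; then $(s_i,q_i)$ is deciding by the invariant, the while loop exits, and \autoref{lem-belief-basic} implies $\upsilon_i$ is a deciding observation prefix. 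Hence $\prho(K)$ decides almost every~$w$, i.e.\ it is a diagnoser. The main obstacle I expect is the bookkeeping of the invariant in stage two, in particular justifying that language equivalence preserves \emph{negative} decision; this relies on the assumption that $f$ is an accepting absorbing state, without which $L_{s,q}=\emptyset$ need not coincide with $\Pr_s(L_q)=0$.
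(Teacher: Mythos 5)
Your proof is correct, and it shares the paper's backbone: the invariant that the tracked belief $\Delta(B_0,\upsilon_i)$ is settled with representative $(s_i,q_i)$, plus the observation that a settled belief containing a deciding pair is entirely deciding. You are right to flag that the latter needs language equivalence to preserve \emph{negative} decidingness, which holds precisely because $f$ is absorbing (so $\Pr_s(L_q)=0$ iff $L_{s,q}=\emptyset$); the paper leaves this implicit. Where you genuinely diverge is in showing that a deciding pair is almost surely reached. The paper routes this through its confusion machinery: singleton beliefs are not confused by \autoref{lem-confused-not-bisimilar}, hence $B_0$ and, via \autoref{lem-belief-basic}.4, $\varepsilon$ are not confused, which by definition means the emitted word almost surely has a deciding prefix $u$, and $\Delta(B_0,u)\subseteq\Delta(B_0,\upsilon)$ then injects a deciding pair into the tracked (settled) belief. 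You instead argue directly on $\M\times\A$: every BSCC is deciding (positively if it meets $S\times\{f\}$, negatively otherwise), a BSCC is entered almost surely in finite time and is never left, and since each loop iteration consumes between $1$ and $K+1$ letters the iteration times are unbounded, so the true product state --- which always lies in the tracked belief --- is eventually deciding, forcing $(s_i,q_i)$ to be deciding by settledness. Your route is more elementary and self-contained (it avoids the prefix-vs-belief equivalences and the definition of confusion altogether), at the cost of re-deriving a BSCC fact that is effectively baked into the paper's \autoref{lem-confused-not-bisimilar}; the paper's route is shorter because it reuses its general lemmas. One small remark: your well-definedness step tacitly assumes $\cras(s_i,q_i)\ge 0$, i.e., that singleton beliefs are never confused in the non-hidden case; this is exactly the paper's opening observation and follows from \autoref{lem-confused-not-bisimilar} because $\Delta(\{(s,q)\},a)$ has at most one element and is therefore settled. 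It would be worth stating explicitly, since otherwise $k(s_i,q_i)=\min\{K,-1\}$ would be meaningless.
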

\begin{proof}
For a non-hidden MC~$\M$ and a DFA~$\A$, there is at most one successor for $(s,q)$ on letter~$a$ in the belief NFA~$\B$, for all $s,q,a$.
Then, by \autoref{lem-confused-not-bisimilar}, singleton beliefs are not confused, and in particular the initial belief~$B_0$ is not confused.
By \autoref{lem-belief-basic}.4, $\varepsilon$ is not confused, which means that $\Pr(\{\,u\mid\text{$u$ deciding}\,\})=\Pr(\varepsilon)=1$.
Since almost surely a deciding word~$u$ is produced and since $\Delta(B_0,u)\subseteq\Delta(B_0,\upsilon)$ whenever $u\sim\upsilon$, it follows that eventually an observation prefix~$\upsilon$ is produced such that $\Delta(B_0,\upsilon)$ contains a deciding pair $(s,q)$.
But, as remarked above, $\Delta(B_0,\upsilon)$ is settled, so it is deciding.
%
\end{proof}

\paragraph*{The Procrastination MC~$\MCpro(K)$.}
The policy~$\prho(K)$ produces a (random, almost surely finite) word $a_1 a_2 \cdots a_{n}$ with $n = C_{\prho(K)}$.
Indeed, the observations that $\prho(K)$~makes can be described by an~MC.
Recall that we have previously defined a composition MC $\M \times \A = (S \times Q, \Sigma, M', (s_0, q_0))$.
Now define an MC $\MCpro(K) \defeq (S \times Q, \Sigma \cup \{\$\}, \Matpro(K), (s_0, q_0))$ where $\$ \not\in \Sigma$ is a fresh letter and the transitions are as follows:
when $(s,q)$ is deciding then $\Matpro(K)(\$)\big((s,q),(s,q)\big) \defeq 1$, and when $(s,q)$ is not deciding then
\begin{align*}
 \Matpro(K)(a)\big((s,q),(\t{a},q')\big) \
 &\defeq\  
 \left( M'(\bot)^{k(s,q)} M'(a) \right)\big((s,q),(\t{a},q')\big)\,,
\end{align*}
where the matrix $M'(\bot) \defeq \sum_a M'(a)$ is powered by~$k(s,q)$.
The MC~$\MCpro(K)$ may not be non-hidden, but could be made non-hidden by (i)~collapsing all language equivalent $(s,q_1),(s,q_2)$ in the natural way, and (ii)~redirecting all $\$$-labelled transition to a new state~$\t{\$}$ that has a self-loop.
In the understanding that $\$\$\$\cdots$ indicates `decision made', the probability distribution defined by the MC~$\MCpro(K)$ coincides with the probability distribution on sequences of non-$\bot$ observations made by~$\prho(K)$.
\begin{ourexample}
For \autoref{ex-hidden-1} the MC~$\MCpro(K)$ for $K \ge 1$ is as follows:
\begin{center}
\begin{tikzpicture}[scale=2,LMC style]
\node[state,inner sep=-1] (s0) at (0,0) {\begin{tabular}{c} $(\t{a},q_0)$ \\ $1$ \end{tabular}};
\node[state,inner sep=-1] (s1) at (-1.8,0) {\begin{tabular}{c} $(\t{b},q_0)$ \\ $\infty$ \end{tabular}};
\node[state,inner sep=-1] (s2) at (+1.8,0) {\begin{tabular}{c} $(\t{a},f  )$ \\ $\infty$ \end{tabular}};
\path[->] (0,0.8) edge (s0);
\path[->] (s0) edge[pos=0.4] node[above] {$\frac12 b$} (s1);
\path[->] (s0) edge[pos=0.4] node[above] {$\frac12 a$} (s2);
\path[->] (s1) edge [loop,out=110,in=70,looseness=7] node[pos=0.2,left] {$1 \$$} (s1);
\path[->] (s2) edge [loop,out=110,in=70,looseness=7] node[pos=0.2,left] {$1 \$$} (s2);
\end{tikzpicture}
\end{center}
Here the lower number in a state indicate the $\cras$ number.
The left state is negatively deciding, and the right state is positively deciding.
The policy~$\prho(K)$ skips the first observation and then observes either $b$ or~$a$, each with probability~$\frac12$, each leading to a deciding belief.
\end{ourexample}

\paragraph*{Maximal Procrastination is Optimal.}
The following lemma states, loosely speaking, that when a belief $\{(s,q)\}$ with $\cras(s,q) = \infty$ is reached and $K$~is large, then a single further observation is expected to suffice for a decision.
\begin{restatable}{ourlemma}{lemlastobs} \label{lem-last-obs}
Let $c(K,s,q)$ denote the expected cost of decision under $\prho(K)$ starting in $(s,q)$.
For each $\varepsilon > 0$ there exists $K \in \NN$ such that for all $(s,q)$ with $\cras(s,q) = \infty$ we have $c(K,s,q) \le 1 + \varepsilon$.
\end{restatable}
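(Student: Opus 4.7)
The plan is to realise $c(K, s, q)$ as an expected hitting time of deciding states in $\MCpro(K)$ and to take the limit $K \to \infty$. Partition the non-deciding states of $\M \times \A$ into $\mathcal{I} \defeq \{(s,q) : \cras(s,q) = \infty\}$ and $\mathcal{F} \defeq \{(s,q) : \cras(s,q) < \infty\}$. By the definition of $\Matpro$, the one-step distribution of $\MCpro(K)$ out of $(s,q) \in \mathcal{F}$ equals the time-$(\cras(s,q)+1)$ distribution of $\M \times \A$ from $(s,q)$, which is independent of $K$ once $K \ge \cras(s,q)$; out of $(s,q) \in \mathcal{I}$ it equals the time-$(K+1)$ distribution. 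Writing $P_K$ for the sub-stochastic matrix obtained by restricting $\Matpro(K)$ to the non-deciding states, we have $c(K,\cdot) = (I - P_K)^{-1}\vec{1}$, where $I$ denotes the identity matrix.

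The crucial observation is that, since $f$ is absorbing in $\A$, every bottom SCC of $\M \times \A$ lies either inside $S \times \{f\}$ (so every state is positively deciding) or inside $S \times (Q \setminus \{f\})$ (so every state is negatively deciding); in particular, every bottom-SCC state is deciding. Standard finite Markov chain theory then yields that the time-$(K+1)$ distribution of $\M \times \A$ concentrates on the bottom SCCs, and hence on deciding states, with total remaining mass on non-deciding states tending to $0$ as $K \to \infty$. So the rows of $P_K$ indexed by $\mathcal{I}$ vanish entry-wise in the limit, while the rows indexed by $\mathcal{F}$ are already $K$-independent. Thus $P_K \to P_\infty$, where the $\mathcal{I}$-rows of $P_\infty$ are zero.

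To apply continuity of matrix inversion, I need $(I - P_\infty)^{-1}$ to exist. Writing $P_\infty$ in block form with $\mathcal{F}$ first, $P_\infty = \bigl(\begin{smallmatrix} B & C \\ 0 & 0 \end{smallmatrix}\bigr)$, so its spectral radius equals $\rho(B)$. If $\rho(B) = 1$, then from some $\mathcal{F}$-state the trajectory of $\MCpro(K)$ would remain inside $\mathcal{F}$ forever with positive probability, and in particular would never reach a deciding state---contradicting that $\prho(K)$ is a diagnoser (\autoref{lem-pro-diagnoser}). Hence $\rho(B) < 1$, and continuity of matrix inversion yields $(I - P_K)^{-1} \to (I - P_\infty)^{-1}$. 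The $\mathcal{I}$-rows of the limit are unit vectors (because the corresponding rows of $P_\infty$ are zero), so the limiting value of $c(K, s, q)$ equals $1$ for every $(s,q) \in \mathcal{I}$. Since $\mathcal{I}$ is finite, a single $K$ achieves $c(K, s, q) \le 1 + \varepsilon$ for all such $(s,q)$ simultaneously.

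The main obstacle is the translation of the diagnoser guarantee into the spectral bound $\rho(B) < 1$ needed to ensure that $(I - P_\infty)^{-1}$ exists; once that spectral gap is in hand, matrix-inversion continuity together with the $\mathcal{I}$-row-zero structure of $P_\infty$ deliver the conclusion mechanically.
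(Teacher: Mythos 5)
Your proof is correct, but it follows a genuinely different route from the paper's. The paper argues probabilistically and locally: since the singleton belief $\{(s,q)\}$ is not confused, the belief $\Delta(\{(s,q)\},\bot^K a)$ contains a deciding pair with probability tending to $1$; since $\cras(s,q)=\infty$ forces that belief to be settled (\autoref{lem-confused-not-bisimilar}), containing a deciding pair already makes it deciding; and since the successor pair again has infinite $\cras$, the observation count is dominated by a geometric random variable with success probability $p$, giving the explicit bound $c(K,s,q)\le 1/p$. You instead work globally inside $\MCpro(K)$: every bottom SCC of $\M\times\A$ consists of deciding pairs (because $f$ is absorbing), so the $(K{+}1)$-step distribution out of a $\cras=\infty$ state puts vanishing mass on non-deciding pairs, the corresponding rows of your substochastic matrix $P_K$ vanish, and continuity of matrix inversion at $I-P_\infty$ (whose invertibility you correctly extract from \autoref{lem-pro-diagnoser}, applied to arbitrary start pairs, which is legitimate since its proof only uses that singleton beliefs are never confused in the non-hidden case) gives $c(K,s,q)\to 1$. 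Both arguments are sound. Be aware that the identification of $c(K,s,q)$ with the expected absorption time of $\MCpro(K)$, which you take as given, is exactly where the settledness of the beliefs $\Delta(\{(s_i,q_i)\},\bot^{k_i}a_i)$ — and hence \autoref{lem-confused-not-bisimilar} — is used; the paper asserts this identification when introducing $\MCpro(K)$, so relying on it is fair, but that is where the real content of the $\cras$ hypothesis is hiding. Your route dovetails nicely with the linear system in \autoref{thm-non-hidden-cinf} and makes the limit value $1$ immediate; the paper's route yields a concrete quantitative bound and the stochastic-domination structure that is reused in the proof of \autoref{thm-max-pro-optimal}.
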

\begin{proof}[Proof sketch]
The proof is a quantitative version of the proof of \autoref{lem-pro-diagnoser}.
The singleton belief $\{(s,q)\}$ is not confused.
Thus, if $K$ is large then with high probability the belief $B \defeq \Delta(\{(s,q)\}, \bot^K a)$ (for the observed next letter~$a$) contains a deciding pair $(s',q')$.
But if $\cras(s,q) = \infty$ then, by \autoref{lem-confused-not-bisimilar}, $B$ is settled, so if $B$ contains a deciding pair then $B$ is deciding.
\end{proof}
\begin{ourexample} \label{ex-hidden-2}
Consider the following variant of the previous example:
\begin{center}
\begin{tikzpicture}[scale=2,LMC style]
\node[state] (s0) at (0,0) {$\t{a}$};
\node[state] (s1) at (-1,0) {$\t{b}$};
\node[state] (s2) at (1,0) {$\t{c}$};
\path[->] (0,0.5) edge (s0);
\path[->] (s0) edge[pos=0.4] node[above] {$\frac13 b$} (s1);
\path[->] (s0) edge[pos=0.4] node[above] {$\frac13 c$} (s2);
\path[->] (s0) edge [loop,out=250,in=290,looseness=11] node[pos=0.2,left] {$\frac13 a$} (s0);
\path[->] (s1) edge [loop,out=250,in=290,looseness=11] node[pos=0.2,left] {$1 b$} (s1);
\path[->] (s2) edge [loop,out=250,in=290,looseness=11] node[pos=0.2,left] {$1 c$} (s2);
\end{tikzpicture}
\hspace{15mm}
\begin{tikzpicture}[scale=2,LMC style]
\node[state] (q0) at (0,0) {$q_0$};
\node[state,accepting] (q1) at (1,0) {$f$};
\path[->] (0,0.5) edge (q0);
\path[->] (q0) edge [loop,out=250,in=290,looseness=11] node[pos=0.2,left] {$a$} node[pos=0.83,right] {$b$} (q0);
\path[->] (q0) edge node[pos=0.4,above] {$c$} (q1);
\path[->] (q1) edge [loop,out=250,in=290,looseness=11] node[pos=0.2,left] {$\Sigma$} (q1);
\end{tikzpicture}
\end{center}
The MC~$\MCpro(K)$ for $K \ge 0$ is as follows:
\begin{center}
\begin{tikzpicture}[scale=2,LMC style]
\path[use as bounding box] (-2.5, 0.8) rectangle (2.5, -0.8);
\node[state,inner sep=-3] (s0) at (0,0) {\begin{tabular}{c} $(\t{a},q_0)$ \\ $\infty$ \end{tabular}};
\node[state,inner sep=-3] (s1) at (-2,0) {\begin{tabular}{c} $(\t{b},q_0)$ \\ $\infty$ \end{tabular}};
\node[state,inner sep=-3] (s2) at (+2,0) {\begin{tabular}{c} $(\t{c},f  )$ \\ $\infty$ \end{tabular}};
\path[->] (0,0.8) edge (s0);
\path[->] (s0) edge node[above] {$\frac{1 - (\frac13)^{K+1}}{2} b$} (s1);
\path[->] (s0) edge node[above] {$\frac{1 - (\frac13)^{K+1}}{2} c$} (s2);
\path[->] (s1) edge [loop,out=110,in=70,looseness=7] node[pos=0.2,left] {$1 \$$} (s1);
\path[->] (s2) edge [loop,out=110,in=70,looseness=7] node[pos=0.2,left] {$1 \$$} (s2);
\path[->] (s0) edge [loop,out=250,in=290,looseness=7] node[pos=0.2,left] {$(\frac13)^{K+1} a$} (s0);
\end{tikzpicture}
\end{center}
The left state is negatively deciding, and the right state is positively deciding.
We have $c(K,\t{b},q_0) = c(K,\t{c},f) = 0$ and $c(K,\t{a},q_0) = 1 / ({1 - (\frac13)^{K+1}})$.
\end{ourexample}
Now we can prove the main positive result of the paper:
\begin{restatable}{ourtheorem}{thmmaxprooptimal} \label{thm-max-pro-optimal}
For any feasible policy~$\rho$ there is $K \in \NN$ such that:
\[  \Ex(C_{\prho(K)}) \ \le \ \Ex(C_\rho)
\]
\end{restatable}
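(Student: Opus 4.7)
If $\Ex(C_\rho)=\infty$ the inequality is vacuous, so my plan focuses on $\Ex(C_\rho)<\infty$. By \autoref{lem-eps-not-fin-implies-inf} this forces $\varepsilon$ to be finitary, and hence $\cinf<\infty$ by \autoref{prop-finitary-char}. Since every $\prho(K)$ is feasible by \autoref{lem-pro-diagnoser}, we already have $\Ex(C_{\prho(K)})\ge\cinf$ for every $K$. The plan is to establish the sharper limit $\lim_{K\to\infty}\Ex(C_{\prho(K)})=\cinf$; the theorem then follows because, for $\rho$ strictly suboptimal, any $K$ with $\Ex(C_{\prho(K)})-\cinf<\Ex(C_\rho)-\cinf$ works, while for $\rho$ attaining $\cinf$ one uses that once $K$ exceeds the uniform bound $|S|^2|Q|^2$ on finite values of $\cras$ (\autoref{lem-check-procr-k}), $\prho(K)$'s behaviour stabilises at states with finite $\cras$ and \autoref{lem-last-obs} controls the contribution at states with $\cras=\infty$.

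To obtain the limit I would fix $\varepsilon>0$, pick a feasible $\sigma$ with $\Ex(C_\sigma)\le\cinf+\varepsilon$, and compare $\sigma$ to $\prho(K)$ by inducting on the number of observations $n$ produced so far. By \autoref{prop-feasible-no-confusion}, feasibility forces $\sigma$'s number of $\bot$-skips between successive observations to be at most $\cras(B_n^\sigma)$, where $B_n^\sigma$ is its current belief. On $\prho(K)$'s side the current singleton $\res$-belief $B_n^\pi$ is a refinement of $B_n^\sigma$ in the subset-up-to-$\bis$ sense, and confusion is monotone in the subset order, so $\cras(B_n^\pi)\ge\cras(B_n^\sigma)$; hence the $\min(K,\cras(B_n^\pi))$ skips executed by $\prho(K)$ meet or exceed $\sigma$'s whenever $K$ is above the global finite-$\cras$ bound. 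This step-by-step comparison shows that, conditional on avoiding belief states with $\cras=\infty$, the expected observation count of $\prho(K)$ is at most that of $\sigma$.

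The principal obstacle is the treatment of belief states $(s,q)$ with $\cras(s,q)=\infty$, where $\prho(K)$ is artificially forced to observe after $K$ skips while $\sigma$ may keep skipping. Here I would apply \autoref{lem-last-obs}: for $K$ sufficiently large the expected residual cost satisfies $c(K,s,q)\le 1+\varepsilon$. Since any feasible policy (in particular $\sigma$) must also make at least one further observation from such a non-deciding belief---otherwise it would fail to decide with positive probability, contradicting feasibility---charging that single observation to $\sigma$ leaves a residual of at most $\varepsilon$ per visit for $\prho(K)$; the expected number of such visits is finite because $\Ex(C_\sigma)<\infty$. Summing yields $\Ex(C_{\prho(K)})\le\Ex(C_\sigma)+O(\varepsilon)\le\cinf+O(\varepsilon)$, which gives the limit and hence the theorem. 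The chief remaining subtlety is making the pairing of observations between $\sigma$ and $\prho(K)$ rigorous despite the two policies observing the shared underlying trajectory at generally different positions; this is best formalised by treating both policies as random stopping-time processes on the product MC and inducting on the observation index rather than attempting a literal pathwise letter-coupling.
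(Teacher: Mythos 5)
Your overall architecture tracks the paper's: a step-by-step coupling of $\prho(K)$ against a feasible competitor, driven by \autoref{prop-feasible-no-confusion} together with the monotonicity of confusion under the refinement order on beliefs, followed by an appeal to \autoref{lem-last-obs} once a pair with $\cras=\infty$ is reached. On the coupling itself: the paper inducts on the \emph{time} index $n$ with the disjunctive invariant ``$\prho$ has made at most as many observations as $\rho$ up to time $n$, and either the two time-$n$ beliefs are comparable under the refinement order or $\prho$ has made strictly fewer observations''. That escape clause is exactly what resolves the difficulty you flag about the two policies observing at different positions; indexing by observation count instead, as you propose, makes the belief comparison ill-posed, because the two $n$-th observations generally occur at different times.

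The genuine gap is in reducing the theorem to the limit statement $\lim_{K\to\infty}\Ex(C_{\prho(K)})=\cinf$. That limit yields the theorem only when $\Ex(C_\rho)>\cinf$. When a feasible $\rho$ attains $\cinf$ (such $\rho$ exist, e.g.\ whenever every reachable pair has finite $\cras$, or trivially when $\varepsilon$ is already deciding), you need a single $K$ with $\Ex(C_{\prho(K)})\le\cinf$ exactly, and your treatment of the $\cras=\infty$ states via \autoref{lem-last-obs} only gives a residual cost of $1+\varepsilon$ per entry, hence $\Ex(C_{\prho(K)})\le\cinf+O(\varepsilon)$, which does not close this case. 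The paper handles it by a dichotomy on the event that a given $(s,q)$ with $\cras(s,q)=\infty$ is the first such pair encountered: either $\rho$'s conditional expected number of further observations exceeds $1$, in which case \autoref{lem-last-obs} with small enough $\varepsilon$ strictly beats it, or it equals $1$, in which case $\rho$ decides with exactly one further observation almost surely; this second alternative forces the existence of a \emph{finite} $k$ such that $\Delta(\{(s,q)\},\bot^k a)$ is deciding for every letter $a$, so any $K\ge k$ lets $\prho(K)$ also finish with exactly one more observation. That combinatorial argument is not supplied by \autoref{lem-last-obs} and is absent from your proposal.
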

\newcommand{\up}{\upsilon_\mathit{pro}}
\newcommand{\ur}{\upsilon_\rho}
\newcommand{\lp}{\ell_\mathit{pro}}
\newcommand{\lr}{\ell_\rho}
\newcommand{\Bp}{B_\mathit{pro}}
\newcommand{\Br}{B_\rho}
\begin{proof}[Proof sketch]
Let $\rho$~be a feasible policy.
We choose $K > |S|^2 \cdot |Q|^2$, so, by \autoref{lem-check-procr-k}, $\prho(K)$~coincides with $\prho(\infty)$ until time, say, $n_\infty$ when $\prho(K)$ encounters a pair $(s,q)$ with $\cras(s,q) = \infty$.
(The time~$n_\infty$ may, with positive probability, never come.)
Let us compare $\prho(K)$ with~$\rho$ up to time~$n_\infty$.
For $n \in \{0, \ldots, n_\infty\}$, define $\up(n)$ and $\ur(n)$ as the observation prefixes obtained by $\prho$ and~$\rho$, respectively, after $n$~steps.
Write $\lp(n)$ and $\lr(n)$ for the number of non-$\bot$ observations in $\up(n)$ and $\ur(n)$, respectively.
For beliefs $B, B'$ we write $B \preceq B'$ when for all $(s,q) \in B$ there is $(s',q') \in B'$ with $(s,q) \bis (s',q')$.
One can show by induction that we have for all $n \in \{0, \ldots, n_\infty\}$:
\begin{equation*} 
 \lp(n) \le \lr(n) \quad \text{ and } \quad \big(\Delta(B_0,\up(n)) \preceq \Delta(B_0,\ur(n)) \quad \text{or} \quad \lp(n) < \lr(n)\big)
\end{equation*}
If time~$n_\infty$ does not come then the inequality $\lp(n) \le \lr(n)$ from above suffices.
Similarly, if at time~$n_\infty$ the pair $(s,q)$ is deciding, we are also done.
If after time~$n_\infty$ the procrastination policy~$\prho(K)$ observes at least one more letter then $\rho$~also observes at least one more letter.
By \autoref{lem-last-obs}, one can choose $K$ large so that for $\prho(K)$ one additional observation probably suffices.
If it is the case that $\rho$~almost surely observes only one letter after~$n_\infty$, then $\prho(K)$ also needs only one more observation, since it has observed at time~$n_\infty$.
\end{proof}

It follows that, in order to compute~$\cinf$, it suffices to analyze $\Ex(C_{\prho(K)})$ for large~$K$.
This leads to the following theorem:
\begin{ourtheorem} \label{thm-non-hidden-cinf}
Given a non-hidden MC~$\M$ and a DFA~$\A$, one can compute~$\cinf$ in polynomial time.
\end{ourtheorem}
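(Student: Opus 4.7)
The plan is to reduce the computation of $\cinf$ to a single polynomial-size linear system. By Lemma \ref{lem-pro-diagnoser} each $\prho(K)$ is feasible, so $\cinf \le \Ex(C_{\prho(K)})$ for every $K$; Theorem \ref{thm-max-pro-optimal} gives the converse $\inf_K \Ex(C_{\prho(K)}) \le \cinf$. Hence $\cinf = \inf_K c_K(s_0,q_0)$, where $c_K(s,q)$ denotes the expected hitting time of a deciding state in $\MCpro(K)$ starting from $(s,q)$.

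Partition $S\times Q$ into deciding states $D$, non-deciding states with $\cras < \infty$ (call this set $F$), and non-deciding states with $\cras = \infty$ (call this set $J$). By Lemma \ref{lem-check-procr-k}, each finite $\cras(s,q)$ satisfies $\cras(s,q) < K_0 \defeq |S|^2|Q|^2$, and both the partition and the MC $\MCpro(K_0)$ are computable in polynomial time. For every $K \ge K_0$ the transitions of $\MCpro(K)$ out of $F$ agree with those of $\MCpro(K_0)$; only the transitions out of $J$ depend on $K$ (via $M'(\bot)^K$). Define $c^* \colon S \times Q \to \R$ by $c^*(s,q) \defeq 0$ on $D$, $c^*(s,q) \defeq 1$ on $J$, and, on $F$, by the linear system
\[
 c^*(s,q) \;=\; 1 + \sum_{a,\,(s',q')} \Matpro(K_0)(a)\bigl((s,q),(s',q')\bigr)\, c^*(s',q').
\]
Since $\prho(K_0)$ is a diagnoser, $\MCpro(K_0)$ almost surely reaches $D$ from every state reachable from $(s_0,q_0)$, so the $F \times F$ sub-transition matrix $T_{FF}$ of $\MCpro(K_0)$ (restricted to reachable states) has spectral radius strictly less than $1$; thus the system is uniquely solvable and $c^*(s_0,q_0)$ can be computed in polynomial time by standard linear algebra.

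It remains to show $\cinf = c^*(s_0,q_0)$. Write $\delta_K \defeq c_K - c^*$. On $D$ both sides vanish; on $J$ we have $\delta_K \ge 0$ because any non-deciding state requires at least one observation, so $c_K \ge 1 = c^*$; on $F$ the shared recurrence (valid for $K \ge K_0$) gives $\delta_K|_F = T_{FF}\,\delta_K|_F + w_K$ with $w_K = T_{FJ}\,\delta_K|_J \ge 0$, hence $\delta_K|_F = (\mathrm{id} - T_{FF})^{-1} w_K \ge 0$. This proves $\cinf \ge c^*(s_0,q_0)$. For the matching upper bound, Lemma \ref{lem-last-obs} gives $c_K(s,q) \to 1$ uniformly over $(s,q) \in J$ as $K\to\infty$, so $\|w_K\|_\infty \to 0$, and since $(\mathrm{id} - T_{FF})^{-1}$ is a fixed $K$-independent matrix, $\delta_K|_F \to 0$; thus $\cinf \le \lim_K c_K(s_0,q_0) = c^*(s_0,q_0)$. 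The only delicate point is this last step, where a uniform bound $\delta_K|_J \le \varepsilon$ must be transferred linearly to $F$; this works precisely because $T_{FF}$, and therefore $(\mathrm{id} - T_{FF})^{-1}$, does not change with $K$.
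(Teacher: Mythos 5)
Your proposal is correct and follows essentially the same route as the paper: reduce $\cinf$ to the solution of the linear system over the procrastination MC, with value $0$ on deciding pairs, value $1$ on non-deciding pairs with infinite $\cras$ (via \autoref{lem-last-obs}), and the recurrence on the remaining pairs, using \autoref{thm-max-pro-optimal} to identify $\cinf$ with $\lim_K \Ex(C_{\prho(K)})$. In fact your writeup is more careful than the paper's on two points the paper leaves implicit, namely the unique solvability of the system (spectral radius of the $F\times F$ block) and the justification that $\lim_K c_K$ coincides with the solution $c^*$ of the $K$-independent system.
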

\begin{proof}
For each $(s,q)$ define $c(K,s,q)$ as in \autoref{lem-last-obs},
and define
$
 c(s,q)  \defeq  \lim_{K \to \infty} c(K,s,q)
$.
By \autoref{lem-last-obs}, for each non-deciding $(s,q)$ with $\cras(s,q) = \infty$ we have $c(s,q) = 1$.
Hence the $c(s,q)$ satisfy the following system of linear equations where some coefficients come from the procrastination MC~$\MCpro(\infty)$:
\begin{align*}
c(s,q) \ & = \  \begin{cases}
              0 & \text{ if } (s,q) \text{ is deciding} \\
              1 & \text{ if } (s,q) \text{ is not deciding and } \cras(s,q) = \infty \\
              1 + c'(s,q) & \text{ otherwise}
             \end{cases} \\
c'(s,q) \ & = \ \displaystyle\sum_a \sum_{q'} \Matpro(\infty)\big((s,q), (\t{a},q') \big) \cdot c(\t{a},q') \qquad \text{if } \cras(s,q) < \infty
\end{align*}
By solving the system one can compute $c(s_0,q_0)$ in polynomial time.
We have:
\begin{equation*}
  \cinf
=
  \inf_{\text{feasible $\rho$}} \Ex(C_\rho)
\stackrel{\text{Thm\ref{thm-max-pro-optimal}}}{=}
  \lim_{K\to\infty} \Ex(C_{\prho(K)})
=
  c(s_0,q_0)
\end{equation*}
Hence one can compute~$\cinf$ in polynomial time.
\end{proof}

\section{Empirical Evaluation of the Expected Optimal Cost} 

We have shown that maximal procrastination is optimal in the non-hidden case (\autoref{thm-max-pro-optimal}).
However, we have not shown \emph{how much} better the optimal policy is than the see-all baseline.
It appears difficult to answer this question analytically, so we address it empirically.
We implemented our algorithms in a fork of the Facebook Infer static analyzer~\cite{infer}, and applied them to $11$~open-source projects, totaling $80$~thousand Java methods.
We found that in $>90\%$ of cases the maximally procrastinating monitor is trivial and thus the optimal cost is~$0$, because Infer decides statically if the property is violated.
In the remaining cases, we found that the optimal cost is roughly half of the see-all cost, but the variance is high.

\ourparagraph{Design.}

Our setting requires a DFA and an MC representing, respectively, a program property and a program.
For this empirical estimation of the expected optimal cost, the DFA is fixed, the MC shape is the symbolic flowgraph of a real program, and the MC probabilities are sampled from Dirichlet distributions.

The DFA represents the following property:
`there are no two calls to $\mathit{next}$ without an intervening call to $\mathit{hasNext}$'.
To understand how the MC shape is extracted from programs, some background is needed.
Infer~\cite{infer,biabduction} is a static analyzer that, for each method, infers several preconditions and, attached to each precondition, a symbolic path.
For a simple example, consider a method whose body is
  `$\mathsf{if}\,(b)\,x.\mathit{next}(); \; \mathsf{if}\,(!b)\,x.\mathit{next}()$'.
Infer would generate two preconditions for it, $b$~and~$\lnot b$.
In each of the two attached symbolic paths, we can see that $\mathit{next}$ is not called twice, which we would not notice with a control flowgraph.
The symbolic paths are inter-procedural.
If a method $f$ calls a method~$g$, then the path of $f$ will link to a path of~$g$ and, moreover, it will pick one of the paths of~$g$ that corresponds to what is currently known at the call site.
For example, if $g(b)$ is called from a state in which $\lnot b$~holds, then Infer will select a path of $g$ compatible with the condition $\lnot b$.

The symbolic paths are finite because abstraction is applied, including across mutually recursive calls.
But, still, multiple vertices of the symbolic path correspond to the same vertex of the control flowgraph.
For example, Infer may go around a for-loop five times before noticing the invariant.
By coalescing those vertices of the symbolic path that correspond to the same vertex of the control flowgraph we obtain an \df{SFG (symbolic flowgraph)}.
We use such SFGs as the skeleton of MCs.
Intuitively, one can think of SFGs as inter-procedural control flowgraphs restricted based on semantic information.
Vertices correspond to locations in the program text, and transitions correspond to method calls or returns.
Transition probabilities should then be interpreted as a form of static branch prediction.
One could learn these probabilities by observing many runs of the program on typical input data, for example by using the Baum--Welch algorithm~\cite{mle-mc}.
Instead, we opt to show that the improvement in expected observation cost is robust over a wide range of possible transition probabilities, which we do by drawing several samples from Dirichlet distributions.
Besides, recall that the (optimal) procrastination policy does not depend on transition probabilities.

Once we have a DFA and an MC we compute their product.
In some cases, it is clear that the product is empty or universal.
These are the cases in which we can give the verdict right away, because no observation is necessary.
We then focus on the non-trivial cases.

For non-trivial $\mathrm{MC}\times\mathrm{DFA}$ products, we compute the expected cost of the light see-all policy $\Ex(C_\smart)$, which observes all letters until a decision is made and then stops.
We can do so by using standard algorithms \cite[Chapter 10.5]{BaierKatoen-ModCheck-book}.
Then, we compute $\MCpro$, which we use to compute the expected observation cost~$\cinf$ of the procrastination policy (\autoref{thm-non-hidden-cinf}).
Recall that in order to compute $\MCpro$, one needs to compute the $\cras$ function, and also to find language equivalence classes.
Thus, computing $\MCpro$ entails computing all the information necessary for implementing a procrastinating monitor.


\ourparagraph{Methodology.}

We selected $11$~Java projects among those that are most forked on GitHub\ConcurArxiv{}{ (\autoref{tbl:proj-description})}.
We ran Infer on each of these projects.
From the inferred specifications, we built SFGs and monitors that employ light see-all policies and maximal procrastination policies.
From these monitors, we computed the respective expected costs, solving the linear systems using Gurobi~\cite{gurobi}.
Our implementation is in a fork of Infer, on GitHub.

\begin{table}[th]\centering\small 
\caption{
  Reduction in expected observation cost, on real-world data.
  Each SFG (symbolic flowgraph) corresponds to one inferred precondition of a method.
  The size of monitors is measured in number of language equivalence classes.
  (LOC = lines of code;
  GAvg = geometric average.)
}\label{tbl:results}
\renewcommand\arraystretch{0.9} 
\begin{tabular}{@{}lrrr@{\quad}rrr@{\quad}rrr@{}}
\toprule
& \multicolumn{3}{c}{Project Size}
& \multicolumn{3}{c}{Monitors}
& \multicolumn{2}{c}{$\cinf/\Ex(C_\smart)$}
\\
  \cmidrule(lr){2-4}
  \cmidrule(lr){5-7}
  \cmidrule(lr){8-9}
  Name
& Methods & SFGs & LOC
& Count & Avg-Size & Max-Size
& Med & GAvg
\\
\midrule
tomcat
  & 26K
  & 52K
  & 946K
  & 343
  & 69
  & 304
  & 0.53
  & 0.50
\\
okhttp
  & 3K
  & 6K
  & 49K
  & 110
  & 263
  & 842
  & 0.46
  & 0.42
\\
dubbo
  & 8K
  & 16K
  & 176K
  & 91
  & 111
  & 385
  & 0.53
  & 0.51
\\
jadx
  & 4K
  & 9K
  & 48K
  & 204
  & 96
  & 615
  & 0.58
  & 0.50
\\
RxJava
  & 12K
  & 45K
  & 192K
  & 83
  & 41
  & 285
  & 0.52
  & 0.53
\\
guava
  & 22K
  & 43K
  & 1218K
  & 1126
  & 134
  & 926
  & 0.41
  & 0.41
\\
clojure
  & 5K
  & 19K
  & 66K
  & 219
  & 120
  & 767
  & 0.44
  & 0.44
\\
AndroidUtilCode
  & 3K
  & 7K
  & 436K
  & 39
  & 89
  & 288
  & 0.66
  & 0.58
\\
leakcanary
  & 1K
  & 1K
  & 11K
  & 12
  & 79
  & 268
  & 0.66
  & 0.59
\\
deeplearning4j
  & 21K
  & 40K
  & 408K
  & 262
  & 51
  & 341
  & 0.58
  & 0.58
\\
fastjson
  & 2K
  & 7K
  & 47K
  & 204
  & 63
  & 597
  & 0.59
  & 0.53
\\
\bottomrule
\end{tabular}
\end{table}

\ourparagraph{Results.}

The results are given in \autoref{tbl:results}.
We first note that the number of monitors is much smaller than the number of methods, by a factor of $10$ or~$100$.
This is because in most cases we are able to determine the answer statically, by analyzing the symbolic paths produced by Infer.
The large factor should not be too surprising: we are considering a fixed property about iterators, not all Java methods use iterators, and, when they do, it is usually easy to tell that they do so correctly.
Still, each project has a few hundred monitors, which handle the cases that are not so obvious.

\begin{figure}
\centering
\includegraphics[width=55mm]{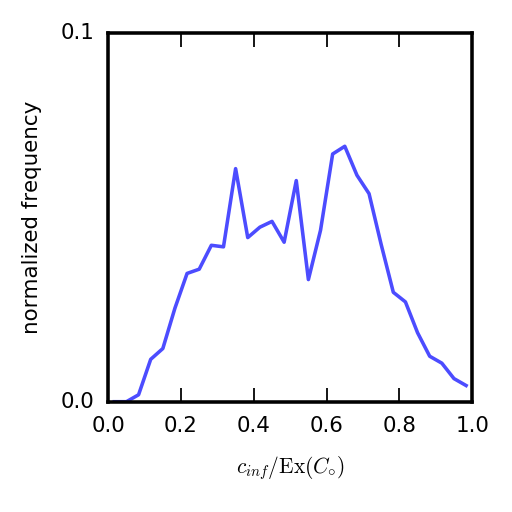}
\caption{Empirical distribution of $\cinf/\Ex(C_\smart)$, across all projects.}
\label{fig:histogram}
\end{figure}


We note that $\frac{\cinf}{\Ex(C_\smart)}\approx 0.5$.
The table supports this by presenting the median and the geometric average, which are close to each-other; the arithmetic average is also close.
There is, however, quite a bit of variation from monitor to monitor, as shown in \autoref{fig:histogram}.
We conclude that selective monitoring has the potential to significantly reduce the overhead of runtime monitoring.

\section{Future Work} 

In this paper we required policies to be feasible, which means that our selective monitors are as precise as non-selective monitors.
One may relax this and study the tradeoff between efficiency (skipping even more observations) and precision (probability of making a decision).
Further, one could replace the diagnosability notion of this paper by other notions from the literature;
one could investigate how to compute $\cinf$ for other classes of MCs, such as acyclic MCs;
one could study the sensitivity of $\cinf$ to changes in transition probabilities;
and one could identify classes of MCs for which selective monitoring helps and classes of MCs for which selective monitoring does not help.

A nontrivial extension to the formal model would be to include some notion of data, which is pervasive in practical specification languages used in runtime verification~\cite{dagstuhl-rv-spec}.
This would entail replacing the DFA with a more expressive device, such as a nominal automaton~\cite{nominal-automata}, a symbolic automaton~\cite{symbolic-automata}, or a logic with data (e.g.,~\cite{ltl-with-freeze}).
Alternatively, one could side-step the problem by using the slicing idea~\cite{slicing}, which separates the concern of handling data at the expense of a mild loss of expressive power.
Finally, the monitors we computed could be used in a runtime verifier, or even in session type monitoring where the setting is similar~\cite{sessiontypes-monitoring}.

\bibliographystyle{plainurl}
\bibliography{selmon}

\ConcurArxiv{}{
  \newpage
  \appendix
  \section{Experimental Details}

\begin{table}[th]\centering 
\caption{Open Source projects, from GitHub.}
\label{tbl:proj-description}
\begin{tabular}{@{}lll@{}}
\toprule
Owner  & Project & Description  \\
\midrule
Apache & tomcat & Java Servlet, Server Pages, and WebSocket \\
Square & okhttp & an HTTP/HTTP2 client for Android \\
Alibaba & dubbo & a high-performance RPC framework \\
Skylot & jadx & Dex to Java decompiler \\
ReactiveX & RxJava & Reactive Extensions for the JVM \\
Google & guava & Google core libraries for Java \\
Clojure & clojure & the Clojure programming language \\
Blankj & AndroidUtilCode & a powerful and easy to use library for Android \\
Square & leakcanary & a memory leak detection library for Android \\
Deeplearning4j & deeplearning4j & Deep Learning for Java on Hadoop and Spark \\
Alibaba & fastjson & a fast JSON parser/generator \\
\bottomrule
\end{tabular}
\end{table}


\section{Proofs}

\subsection{Equivalence of Definitions for Observation Prefixes}\label{sec:def-equiv} 


\ourparagraph{Confused.}
Recall the two definitions for $\upsilon$~confused:
\begin{itemize}
\item[(a)] $\Pr(\{\,\upsilon u\mid \text{$\upsilon u$ deciding}\,\}) < \Pr(\upsilon)$
\item[(b)] $\Pr(\{\,uw\mid\text{$u\sim\upsilon$, no prefix of $\upsilon w$ is deciding}\,\}) > 0$
\end{itemize}
If $A\subseteq B$, then $\Pr(A)<\Pr(B)$ is equivalent to $\Pr(B\setminus A)>0$.
The plan is to pick $A$~and~$B$ such that definition~(a) has the form $\Pr(A)<\Pr(B)$, and then show that $A\subseteq B$ and that definition~(b) has the form $\Pr(B\setminus A)>0$.
\begin{equation*}
\begin{aligned}
\;&\phantom{{}={}}\;
  \Pr(\{\,\upsilon u\mid\text{$\upsilon u$ deciding}\,\})
\\\;&=\;
  \Pr\bigl(\bigcup_{\substack{\upsilon u\\\text{$\upsilon u$ deciding}}} \{\,u'w\mid u'\sim \upsilon u\,\}\bigr)
  &&\text{the $\Pr(\Upsilon)$ notation}
\\\;&=\;
  \Pr(\{\,u'w\mid\text{$u' \sim \upsilon u$ and $\upsilon u$ deciding}\,\})
\\\;&=\;
  \Pr(\{\,uw\mid\text{$u\sim\upsilon$ and $\upsilon w$ has a deciding prefix}\,\})
\end{aligned}
\end{equation*}
Thus, we set
\begin{equation*}
\begin{aligned}
  A \;&\defeq\; \{\,uw\mid\text{$u\sim\upsilon$ and $\upsilon w$ has a deciding prefix}\,\}
\\
  B \;&\defeq\; \{\,uw\mid u\sim \upsilon\,\}
\end{aligned}
\end{equation*}
It is clear that $A\subseteq B$, and that $B\setminus A$ is the event from definition~(b).

\ourparagraph{Very Confused.}
Recall the two definitions for $\upsilon$~very confused:
\begin{itemize}
\item[(a)] $\Pr(\{\,\upsilon u\mid \text{$\upsilon u$ deciding}\,\}) = 0$
\item[(b)] $\upsilon u'$ is non-deciding for all enabled $\upsilon u'$
\end{itemize}
To see that (a)~implies~(b), assume that $\upsilon u'$~is deciding.
Because
$$0 \stackrel{\text{(a)}}= \Pr(\{\,\upsilon u\mid\text{$\upsilon u$ deciding}\,\}) \ge \Pr(\upsilon u')$$
it follows that $\upsilon u'$ is not enabled.

To see that (b)~implies~(a), recall that observation prefixes are finite, and calculate:
\begin{equation*}
\begin{aligned}
\;&\phantom{{}={}}\;
  \Pr(\{\,\upsilon u\mid\text{$\upsilon u$ deciding}\,\})
\\\;&=\;
  \Pr\bigl(\bigcup_{\substack{\upsilon u\\\text{$\upsilon u$ deciding}}}
    \{\,u'w\mid u'\sim \upsilon u\,\}\bigr)
  &&\text{unfold the $\Pr(\Upsilon)$ notation}
\\\;&\le\;
  \sum_{\substack{\upsilon u\\\text{$\upsilon u$ deciding}}}
    \Pr(\{\,u'w\mid u'\sim \upsilon u\,\})
  &&\text{union bound}
\\\;&=\;
  \sum_{\substack{\upsilon u\\\text{$\upsilon u$ deciding}}}
    \Pr(\upsilon u)
  &&\text{fold the $\Pr(\upsilon u)$ notation}
\\\;&=\;
  0
  &&\text{because $\Sigma_\bot^*$ is countable, and we assume (b)}
\end{aligned}
\end{equation*}

\ourparagraph{Finitary.}
Recall the two definitions for $\upsilon$~finitary:
\begin{itemize}
\item[(a)] $\Pr(\{\,\upsilon u\mid \text{$\upsilon u$ very confused or deciding}\,\}) = \Pr(\upsilon)$
\item[(b)] $\Pr(\{u w \mid u \sim \upsilon,\ \text{no prefix of~$\upsilon w$ is deciding or very confused}\}) = 0$
\end{itemize}
The proof is similar to the confused case.
Definition~(a) has the form $\Pr(A)=\Pr(B)$ with $A \subseteq B$, and definition~(b) has the form $\Pr(B\setminus A)=0$.
We just need to make the sets $A$~and~$B$ explicit.
Below, `dv' stands for `deciding or very confused'.
\begin{equation*}
\begin{aligned}
\;&\phantom{{}={}}\;
  \Pr(\{\,\upsilon u\mid \text{$\upsilon u$ deciding or very confused}\,\})
\\\;&=\;
  \Pr\bigl(\bigcup_{\substack{\upsilon u\\\text{$\upsilon u$ dv}}}
    \{\,u'w\mid u'\sim \upsilon u\,\}\bigr)
  &&\text{unfold the $\Pr(\Upsilon)$ notation}
\\\;&=\;
  \Pr(\{\,u' w\mid\text{$u' \sim \upsilon u$ and $\upsilon u$ is dv}\,\})
\\\;&=\;
  \Pr(\{\,uw\mid\text{$u \sim \upsilon$ and $\upsilon w$ has a dv prefix}\,\})
\end{aligned}
\end{equation*}
So, we pick
\begin{equation*}
\begin{aligned}
  A \;&\defeq\; \{\,uw\mid\text{$u\sim \upsilon$ and $\upsilon w$ has a dv prefix}\,\}
\\
  B \;&\defeq\; \{\,uw\mid u\sim \upsilon\,\}
\end{aligned}
\end{equation*}
from which $A \subseteq B$ and $B\setminus A = \{\,uw\mid\text{$u\sim\upsilon$ and $uw$ has no dv prefix}\,\}$, as desired.

\subsection{Proof of \autoref{lem-belief-basic}}

Here is \autoref{lem-belief-basic} from the main body:
\lembeliefbasic*
\begin{proof}
We have:
\begin{equation}
\begin{aligned}
     \Pr(\{w \gtrsim \upsilon\}) \
&= \ \sum_{u \sim \upsilon} \Pr(\{u\} \Sigma^\omega) \\
&= \ \sum_{u \sim \upsilon} \sum_{s} M(u)(s_0,s)  
\end{aligned}
\label{eq-lem-belief-basic-3}
\end{equation}
Similarly we have:
\begin{equation}
\begin{aligned}
     \Pr(\{w \gtrsim \upsilon \mid w \in L\}) \
&= \ \sum_{u \sim \upsilon} \Pr(\{u\} \Sigma^\omega \cap L) \\
&= \ \sum_{u \sim \upsilon} \sum_{s} M(u)(s_0,s)  \cdot \textstyle{\Pr_s}(L_{\delta(q_0, u)})
\end{aligned}
\label{eq-lem-belief-basic-2}
\end{equation}
Further, we have for all $B \subseteq S \times Q$:
\begin{equation}
\begin{aligned}
\Delta(B, \upsilon)
&\ = \ \bigcup_{u \sim \upsilon} \Delta(B,u) \\
&\ = \ \bigcup_{u \sim \upsilon} \{(s',q') \mid \exists\, (s,q) \in B : M(u)(s,s') > 0, \ \delta(q,u) = q'\}
\end{aligned}
\label{eq-lem-belief-basic}
\end{equation}
We prove item~1:
\begin{align*}
                          & \upsilon \text{ is enabled} \\
\Longleftrightarrow \quad & \Pr(\{w \gtrsim \upsilon\}) > 0 && \text{definition} \\
\Longleftrightarrow \quad & \exists\, u \sim \upsilon.\, \exists\, s \text{ with } M(u)(s_0, s)>0  && \text{by~\eqref{eq-lem-belief-basic-3}} \\
\Longleftrightarrow \quad & \exists\, (s,q) \in \Delta(\{(s_0,q_0)\}, \upsilon) && \text{by~\eqref{eq-lem-belief-basic}} \\
\Longleftrightarrow \quad & \Delta(B_0, \upsilon) \ne \emptyset && \text{definition}
\end{align*}
We prove item~2:
\begin{align*}
                    &\quad \upsilon \text{ is negatively deciding} \\
\Longleftrightarrow &\quad \Pr(\{w \gtrsim \upsilon \mid w \in L\}) = 0 && \text{definition}\\
\Longleftrightarrow &\quad \forall\, u \sim \upsilon.\, \forall s \text{ with } M(u)(s_0, s)>0 : \textstyle{\Pr_s}(L_{\delta(q_0, u)})=0 && \text{by~\eqref{eq-lem-belief-basic-2}} \\
\Longleftrightarrow &\quad \forall\, (s,q) \in \Delta(B_0, \upsilon) : \Pr_s(L_q) = 0  && \text{by~\eqref{eq-lem-belief-basic}} \\
\Longleftrightarrow &\quad \text{all } (s,q) \in \Delta(B_0, \upsilon) \text{ are negatively deciding} && \text{definition} \\
\Longleftrightarrow &\quad \Delta(B_0,\upsilon) \text{ is negatively deciding} && \text{definition}
\end{align*}
The proof of item~3 is similar.
Towards item~4 note that if $\upsilon$ is not deciding then none of its prefixes is.
This property explains the second equivalence in the following proof of item~4.
There we write $u' < w$ to denote that $u'$ is a finite prefix of~$w$.
\begin{align*}
                    &\quad \upsilon \text{ is confused} \\
\Longleftrightarrow &\quad \Pr(\{u w \mid u \sim \upsilon,\ \text{no prefix of~$\upsilon w$ is deciding}\}) > 0 && \text{definition}\\
\Longleftrightarrow &\quad \Pr(\{u w \mid u \sim \upsilon,\ \forall\, u'<w : \upsilon u' \text{ is not deciding}\}) > 0 && \text{see above}\\
\Longleftrightarrow &\quad \sum_{u \sim \upsilon} \sum_{s} M(u)(s_0,s) \cdot \mbox{} && \text{as before} \\
                    &\hspace{7mm} \mbox{} \cdot \Pr_s(\{w \mid \forall\, u'<w : \upsilon u' \text{ is not deciding}\}) > 0 \\
\Longleftrightarrow &\quad \exists\, u \sim \upsilon.\, \exists\, s : M(u)(s_0,s) > 0\text{ and} && \text{as before} \\
                    &\hspace{10mm} \mbox{} \Pr_s(\{w \mid \forall\, u'<w : \upsilon u' \text{ is not deciding}\}) > 0 \\
\Longleftrightarrow &\quad \exists\, (s,q) \in \Delta(B_0, \upsilon) : && \text{by~\eqref{eq-lem-belief-basic}} \\
                    &\hspace{10mm} \mbox{} \Pr_s(\{w \mid \forall\, u'<w : \upsilon u' \text{ is not deciding}\}) > 0 \\
\Longleftrightarrow &\quad \exists\, (s,q) \in \Delta(B_0, \upsilon) : && \text{items 2, 3} \\
                    &\hspace{10mm} \mbox{} \Pr_s(\{w \mid \forall\, u'<w : \Delta(\Delta(B_0, \upsilon), u') \text{ is not deciding}\}) > 0 \\
\Longleftrightarrow &\quad \Delta(B_0, \upsilon) \text{ is confused} && \text{definition}
\end{align*}
We prove item~5:
\begin{align*}
                    &\quad \upsilon \text{ is very confused} \\
\Longleftrightarrow &\quad \forall\, u : (\upsilon u \text{ is enabled} \to \upsilon u \text{ is not deciding}) && \text{definition}\\
\Longleftrightarrow &\quad 
                           \forall\, u : (\Delta(B_0, \upsilon u) \ne \emptyset \to \Delta(B_0, \upsilon u) \text{ is not deciding}) && \text{items 2, 3}\\
\Longleftrightarrow &\quad \Delta(B_0, \upsilon) \text{ is very confused} && \text{definition}
\end{align*}
The proof of item~6 is similar to the proof of item~4.
We abbreviate ``not deciding and not very confused'' to ``continuing''.
Note that if $\upsilon$ is continuing then all its prefixes are.
This property explains the second equivalence in the following proof of item~6.
There we write $u' < w$ to denote that $u'$ is a finite prefix of~$w$.
\begin{align*}
                    &\quad \upsilon \text{ is finitary} \\
\Longleftrightarrow &\quad \Pr(\{u w \mid u \sim \upsilon,\ \text{all } u'<\upsilon w \text{ are continuing}\}) = 0 && \text{definition}\\
\Longleftrightarrow &\quad \Pr(\{u w \mid u \sim \upsilon,\ \forall\, u'<w : \upsilon u' \text{ is continuing}\}) = 0 && \text{see above}\\
\Longleftrightarrow &\quad \sum_{u \sim \upsilon} \sum_{s} M(u)(s_0,s) \cdot \mbox{} && \text{as before} \\
                    &\hspace{7mm} \mbox{} \cdot \Pr_s(\{w \mid \forall\, u'<w : \upsilon u' \text{ is continuing}\}) = 0 \\
\Longleftrightarrow &\quad \forall\, u \sim \upsilon.\, \forall\, s : M(u)(s_0,s) = 0\text{ or} && \text{as before} \\
                    &\hspace{10mm} \mbox{} \Pr_s(\{w \mid \forall\, u'<w : \upsilon u' \text{ is continuing}\}) = 0 \\
\Longleftrightarrow &\quad \forall\, (s,q) \in \Delta(B_0, \upsilon) : && \text{by~\eqref{eq-lem-belief-basic}} \\
                    &\hspace{10mm} \mbox{} \Pr_s(\{w \mid \forall\, u'<w : \upsilon u' \text{ is continuing}\}) = 0 \\
\Longleftrightarrow &\quad \forall\, (s,q) \in \Delta(B_0, \upsilon) : && \text{items 2, 3, 5} \\
                    &\hspace{10mm} \mbox{} \Pr_s(\{w \mid \forall\, u'<w : \Delta(B_0, \upsilon u') \text{ is continuing}\}) = 0 \\
\Longleftrightarrow &\quad \forall\, (s,q) \in \Delta(B_0, \upsilon) : && \text{definitions} \\
                    &\hspace{10mm} \mbox{} \Pr_s(\{u' w \mid \Delta(\Delta(B_0, \upsilon), u') \text{ is not continuing}\}) = 1 \\
\Longleftrightarrow &\quad \Delta(B_0, \upsilon) \text{ is finitary} && \text{definition}
\end{align*}
This proves item~6 and completes the proof of the lemma.
\end{proof}

\subsection{Proof of \autoref{lem-computing-the properties}}

Here is \autoref{lem-computing-the properties} from the main body:
\lemcomputingtheproperties*
\begin{proof}
%
Let $G$ be the following graph:
\begin{itemize}
\item the set of vertices is $S \times Q \times P$ where $P$ is the set of prefixes of~$\upsilon$;
\item there is an edge $(s_1, q_1, \upsilon_1) \to (s_2, q_2, \upsilon_2)$ if and only if $\upsilon_1 = o \upsilon_2$ for some~$o$ and $\Delta((s_1,q_1), o) \ni (s_2,q_2)$.
\end{itemize}
This graph~$G$ can be computed in deterministic logspace.
Also note that the belief NFA can be computed in deterministic logspace.

\paragraph*{Proof of item~1.} By \autoref{lem-belief-basic}.1, we have that $\upsilon$ is enabled if and only if $\Delta(B_0, \upsilon) \ne \emptyset$.
We have $\Delta(B_0, \upsilon) \ne \emptyset$ if and only if $G$ has a path from $(s_0, q_0, \upsilon)$ to a node $(s, q, \varepsilon)$.
But graph reachability is in~NL.
This proves item~1.

\paragraph*{Proof of items 2 and~3.}
Assume that, for a given pair $(s_1,q_1)$, one can determine in~NL whether $(s_1,q_1)$ is negatively deciding.
\begin{itemize}
\item Then we can check in~NL whether the belief~$B$ is \emph{not} negatively deciding: guess $(s_1, q_1) \in B$ nondeterministically and check whether $(s_1, q_1)$ is \emph{not} negatively deciding.
    The latter can be done in~NL, as NL is closed under complement.
    It follows that one can check in~NL whether $B$ is negatively deciding.
\item Under the same assumption we can check in~NL whether the observation prefix~$\upsilon$ is negatively deciding:
    By \autoref{lem-belief-basic}.2 we have that $\upsilon$ is not negatively deciding if and only if $\Delta(B_0, \upsilon)$ is not negatively deciding.
    The latter can be checked in~NL by following nondeterministically a path in~$G$ from $(s_0, q_0, \upsilon)$ to a node $(s_1, q_1, \varepsilon)$ and then checking if $(s_1, q_1)$ is not negatively deciding.
    It follows that one can check in~NL whether $\upsilon$ is negatively deciding.
\end{itemize}
The same reasoning applies when ``negatively deciding'' is replaced with ``positively deciding''.
Therefore, for items 2 and~3, it remains to show that one can determine in~NL whether a given pair $(s_1,q_1)$ is negatively (positively, respectively) deciding.
We have:
\begin{align*}
                    &\quad (s_1, q_1) \text{ is not negatively deciding} \\
\Longleftrightarrow &\quad \Pr_{s_1}(L_{q_1}) > 0 \\
\Longleftrightarrow &\quad \exists\, u.\, \exists\, s_2 : M(u)(s_1, s_2) > 0 \text{ and } \delta(q_1, u) = f \\
\Longleftrightarrow &\quad \exists\, u.\, \exists\, s_2 : \Delta(\{(s_1, q_1)\}, u) \ni (s_2, f)
\end{align*}
The latter can be checked in~NL by nondeterministically guessing a word~$u$ letter-by-letter and by nondeterministically following, in the belief NFA, a path that is labelled by~$u$ and leads from $(s_1,q_1)$ to a node $(s_2,f)$.
This proves item~2.

For item~3 it remains to show how to determine in~NL whether a given pair $(s_1, q_1)$ is positively deciding.
We have:
\begin{align*}
                    &\quad (s_1, q_1) \text{ is not positively deciding} \\
\Longleftrightarrow &\quad \Pr_{s_1}(L_{q_1}) < 1 \\
\Longleftrightarrow &\quad \exists\, u.\, \exists\, s_2.\, \exists\, q_2 : M(u)(s_1, s_2) > 0, \ \delta(q_1, u) = q_2, \ \Pr_{s_2}(L_{q_2}) = 0 \\
\Longleftrightarrow &\quad \exists\, u.\, \exists\, (s_2, q_2) \in \Delta(\{(s_1, q_1)\}, u) : (s_2, q_2) \text{ is negatively deciding} \\
\Longleftrightarrow &\quad \exists\, u.\, \exists\, (s_2, q_2) \in \Delta(\{(s_1, q_1)\}, u) : (s_2, q_2) \text{ is negatively deciding,} \\ & \hspace{60mm} |u| \le |S \times Q|
\end{align*}
The last equivalence follows from the fact that there are at most ${|S \times Q|}$ different pairs of the form $(s,q)$.
Therefore, one can check in~NL whether $(s_1, q_1)$ is not positively deciding:
\begin{enumerate}
\item nondeterministically guess a word~$u$ letter-by-letter;
\item nondeterministically follow, in the belief NFA, a path that is labelled by~$u$ and leads from $(s_1,q_1)$ to a node $(s_2,q_2)$;
\item check that $(s_2, q_2)$ is negatively deciding; we have shown previously that this can be done in NL.
\end{enumerate}
This proves item~3.

\paragraph*{Proof of item~4.} By \autoref{lem-belief-basic}.4 it suffices to prove membership in PSPACE for the case where a belief~$B$ is given.
For all $B \subseteq S \times Q$, define:
\[
 V_B \ \defeq \ \{u w \mid \Delta(B,u) \text{ is deciding}\}
\]
We first prove the following claim:

\smallskip\noindent\textbf{Claim.} For given~$s$ and~$B \subseteq S \times Q$ one can determine in~PSPACE whether $\Pr_s(V_B) = 0$.
\smallskip

\noindent We prove the claim.
We have:
\begin{align*}
                    &\quad \Pr_s(V_B) > 0 \\
\Longleftrightarrow &\quad \exists\,u.\, \exists\,s' : M(u)(s,s')>0,\ \Delta(B,u) \text{ is deciding} \\
\Longleftrightarrow &\quad \exists\,u.\, \exists\,s' : M(u)(s,s')>0,\ \Delta(B,u) \text{ is deciding}, \ |u| \le 2^{|S \times Q|}
\end{align*}
The last equivalence follows from the fact that there are at most $2^{|S \times Q|}$ different beliefs of the form $\Delta(B,u)$.
Therefore, one can check in \mbox{NPSPACE} = \mbox{PSPACE} whether $\Pr_s(V_B) > 0$:
\begin{enumerate}
\item guess a word~$u = a_1 \cdots a_n$ with $n \le 2^{|S \times Q|}$ letter-by-letter and check that there are states $s_1, \ldots, s_n$ with $M(a_1)(s,s_1) > 0$ and $M(a_{i+1})(s_i, s_{i+1})>0$ for all $i \in \{1, \ldots, n-1\}$;
\item compute, on the fly, the belief $B' \defeq \Delta(B,u) = \Delta( \cdots \Delta(\Delta(B,a_1),a_2) \cdots a_n)$;
\item check that $B'$ is deciding; we have shown in items 2 and~3 that this can be done in polynomial time.
\end{enumerate}
Since PSPACE is closed under complement, we have proved the claim.

Towards item~4, let $B$ be any belief.
We have:
\begin{align*}
                    &\quad B \text{ is confused} \\ 
\Longleftrightarrow &\quad \exists\,(s,q) \in B : \Pr_s(V_B) < 1 \\
\Longleftrightarrow &\quad \exists\,(s,q) \in B.\,\exists\,u.\, \exists\,s' : M(u)(s,s')>0,\ \Pr_{s'}(V_{\Delta(B,u)}) = 0 \\
\Longleftrightarrow &\quad \exists\,(s,q) \in B.\,\exists\,u.\, \exists\,s' : M(u)(s,s')>0,\ \Pr_{s'}(V_{\Delta(B,u)}) = 0,\\ &\hspace{50mm} |u| \le |S| \cdot 2^{|S \times Q|}
\end{align*}
The last equivalence follows from the fact that there are at most $|S| \cdot 2^{|S \times Q|}$ different pairs of the form $(s',\Delta(B,u))$.
Therefore, one can check in \mbox{NPSPACE} = \mbox{PSPACE} whether $B$ is confused:
\begin{enumerate}
\item guess a word~$u = a_1 \cdots a_n$ with $n \le |S| \cdot 2^{|S \times Q|}$ letter-by-letter and check that there are states $s_1, \ldots, s_n$ with $M(a_1)(s,s_1) > 0$ and $M(a_{i+1})(s_i, s_{i+1})>0$ for all $i \in \{1, \ldots, n-1\}$;
\item compute, on the fly, the belief $B' \defeq \Delta(B,u) = \Delta( \cdots \Delta(\Delta(B,a_1),a_2) \cdots a_n)$;
\item check that $\Pr_{s_n}(V_{B'}) = 0$; we have shown in the claim above that this can be done in~PSPACE.
\end{enumerate}

\paragraph*{Proof of item~5.} By \autoref{lem-belief-basic}.5 it suffices to prove membership in PSPACE for the case where a belief~$B$ is given.
We have:
\begin{align*}
& B \text{ is not very confused} \\
\Longleftrightarrow \quad & \exists\,u : \Delta(B,u) \ne \emptyset \text{ is deciding} && \text{definition} \\
\Longleftrightarrow \quad & \exists\,u : \Delta(B,u) \ne \emptyset \text{ is deciding},\ |u| \le 2^{|S \times Q|} 
\end{align*}
The last equivalence follows from the fact that there are at most $2^{|S \times Q|}$ different beliefs of the form $\Delta(B,u)$.
Therefore, one can check in \mbox{NPSPACE} = \mbox{PSPACE} whether $B$ is not very confused:
\begin{enumerate}
\item guess a word~$u$ with $|u| \le 2^{|S \times Q|}$ letter-by-letter and compute, on the fly, the belief $B' \defeq \Delta(B,u)$;
\item check that $B' \ne \emptyset$ and $B'$ is deciding; we have shown in items 2 and~3 that this can be done in polynomial time.
\end{enumerate}
Since PSPACE is closed under complement, we have proved item~5.

\paragraph*{Proof of item~6.} By \autoref{lem-belief-basic}.6 it suffices to prove membership in PSPACE for the case where a belief~$B$ is given.
The proof is analogous to the one of item~4.
We abbreviate ``deciding or very confused'' to ``dv''.
For all $B \subseteq S \times Q$, define:
\[
 W_B \ \defeq \ \{u w \mid \Delta(B,u) \text{ is dv}\}
\]
We first prove the following claim:

\smallskip\noindent\textbf{Claim.} For given~$s$ and~$B \subseteq S \times Q$ one can determine in~PSPACE whether $\Pr_s(W_B) = 0$.
\smallskip

\noindent We prove the claim.
We have:
\begin{align*}
                    &\quad \Pr_s(W_B) > 0 \\
\Longleftrightarrow &\quad \exists\,u.\, \exists\,s' : M(u)(s,s')>0,\ \Delta(B,u) \text{ is dv} \\
\Longleftrightarrow &\quad \exists\,u.\, \exists\,s' : M(u)(s,s')>0,\ \Delta(B,u) \text{ is dv}, \ |u| \le 2^{|S \times Q|}
\end{align*}
The last equivalence follows from the fact that there are at most $2^{|S \times Q|}$ different beliefs of the form $\Delta(B,u)$.
Therefore, one can check in \mbox{NPSPACE} = \mbox{PSPACE} whether $\Pr_s(W_B) > 0$:
\begin{enumerate}
\item guess a word~$u = a_1 \cdots a_n$ with $n \le 2^{|S \times Q|}$ letter-by-letter and check that there are states $s_1, \ldots, s_n$ with $M(a_1)(s,s_1) > 0$ and $M(a_{i+1})(s_i, s_{i+1})>0$ for all $i \in \{1, \ldots, n-1\}$;
\item compute, on the fly, the belief $B' \defeq \Delta(B,u) = \Delta( \cdots \Delta(\Delta(B,a_1),a_2) \cdots a_n)$;
\item check that $B'$ is dv; we have shown in items 2, 3 and~5 that this can be done in~PSPACE.
\end{enumerate}
Since PSPACE is closed under complement, we have proved the claim.

Towards item~6, let $B$ be any belief.
We have:
\begin{align*}
                    &\quad B \text{ is not finitary} \\ 
\Longleftrightarrow &\quad \exists\,(s,q) \in B : \Pr_s(W_B) < 1 \\
\Longleftrightarrow &\quad \exists\,(s,q) \in B.\,\exists\,u.\, \exists\,s' : M(u)(s,s')>0,\ \Pr_{s'}(W_{\Delta(B,u)}) = 0 \\
\Longleftrightarrow &\quad \exists\,(s,q) \in B.\,\exists\,u.\, \exists\,s' : M(u)(s,s')>0,\ \Pr_{s'}(W_{\Delta(B,u)}) = 0,\\ &\hspace{50mm} |u| \le |S| \cdot 2^{|S \times Q|}
\end{align*}
The last equivalence follows from the fact that there are at most $|S| \cdot 2^{|S \times Q|}$ different pairs of the form $(s',\Delta(B,u))$.
Therefore, one can check in \mbox{NPSPACE} = \mbox{PSPACE} whether $B$ is not finitary:
\begin{enumerate}
\item guess a word~$u = a_1 \cdots a_n$ with $n \le |S| \cdot 2^{|S \times Q|}$ letter-by-letter and check that there are states $s_1, \ldots, s_n$ with $M(a_1)(s,s_1) > 0$ and $M(a_{i+1})(s_i, s_{i+1})>0$ for all $i \in \{1, \ldots, n-1\}$;
\item compute, on the fly, the belief $B' \defeq \Delta(B,u) = \Delta( \cdots \Delta(\Delta(B,a_1),a_2) \cdots a_n)$;
\item check that $\Pr_{s_n}(W_{B'}) = 0$; we have shown in the claim above that this can be done in~PSPACE.
\end{enumerate}
Since PSPACE is closed under complement, we have proved item~6.
\end{proof}

\subsection{Proof of \autoref{prop-diagnosability-confused}}

Here is \autoref{prop-diagnosability-confused} from the main body:
\propdiagnosabilityconfused*
\begin{proof}
We have:
\begin{align*}
                    &\quad \text{there exists a diagnoser} \\ 
\Longleftrightarrow &\quad \text{there exists a policy that decides almost surely} && \text{definition of diagnoser} \\
\Longleftrightarrow &\quad \seeall \text{decides almost surely} && \text{\autoref{lem-seeall-most-diagnoser}} \\
\Longleftrightarrow &\quad \Pr(\{w \mid w \text{ has a deciding prefix}\} \ = \ 1 && \text{definitions} \\
\Longleftrightarrow &\quad \varepsilon \text{ is not confused} && \text{definition of confused}
\end{align*}
\end{proof}

\subsection{Proof of \autoref{thm-diagnosability-PSPACE}}

Here is \autoref{thm-diagnosability-PSPACE} from the main body:
\thmdiagnosabilityPSPACE*
\begin{proof}
By \autoref{prop-diagnosability-confused} it suffices to show PSPACE-completeness of checking whether $\varepsilon$~is confused.
Membership in~PSPACE follows from \autoref{lem-computing-the properties}.4.

For hardness we reduce from the following problem: given an NFA~$\U$ over~$\Sigma = \{a,b\}$ where all states are initial and accepting, does $\U$ accept all (finite) words?
This problem is PSPACE-complete~\cite[Lemma~6]{Shallit09}.
Let $\U = (Q, \Sigma, \delta, Q, Q)$ be the given NFA.
We construct an MC $\M=(Q \cup \{s_0,s_1,s_2\}, \Sigma \cup \{\#\}, M, s_0)$ where $s_0, s_1, s_2 \not\in Q$, and $\# \not\in \Sigma = \{a,b\}$, and the transitions are as follows:
\begin{center}
\begin{tikzpicture}[scale=2,LMC style]
\node[state] (s0) at (0,0) {$s_0$};
\node (Q) at (-1.3,0) [draw,thick,minimum width=40,minimum height=55] {$Q$};
\node[state] (s1) at (+1,0) {$s_1$};
\node[state] (s2) at (-2.6,0) {$s_2$};
\path[->] (0,0.7) edge (s0);
\path[->] (s0) edge[pos=0.5] node[above] {$\frac{1}{2 |Q|} a$} (Q);
\path[->] (s0) edge[pos=0.5] node[above] {$\frac12 a$} (s1);
\path[->] (s1) edge [loop,out=20,in=-20,looseness=15] (s1);
\node at (1.8,0.15) {$\frac12 a$};
\node at (1.8,-0.15) {$\frac12 b$};
\path[->] (Q) edge node[above] {$\frac1{x(q)}\#$} (s2);
\path[->] (s2) edge [loop,out=160,in=200,looseness=15] node[left] {$1\#$} (s2);
\end{tikzpicture}
\end{center}
In this picture, the rectangle labelled with~$Q$ indicates states and transitions that involve the states in~$Q$, i.e., those coming from the NFA~$\U$.
In more detail we define:
\begin{itemize}
\item $M(a)(s_0, q) \defeq \frac{1}{2 |Q|}$ for all $q \in Q$;
\item $x(q) \defeq |\delta(q,a) \cup \delta(q,b)| + 1$ for all $q \in Q$;
\item $M(\sigma)(q,q') \defeq \frac1{x(q)}$ for all $q \in Q$ and both $\sigma \in \{a,b\}$ and all $q' \in \delta(q,\sigma)$;
\item $M(\#)(q,s_2) \defeq \frac1{x(q)}$ for all $q \in Q$.
\end{itemize}
Define $\A$ to be a DFA that accepts $L = \Sigma^* \{\#\} (\Sigma \cup \{\#\})^*$.
We show that $\varepsilon$~is confused if and only if $\U$ accepts all words.

Suppose $\U$ does not accept the word $u \in \Sigma^*$.
Then one of the following two events happens almost surely, depending on whether $\M$ takes the left or the right transition in the first step:
\begin{itemize}
\item $\M$ emits $a u_0 \#$ for some $u_0 \in \Sigma^*$;
\item $\M$ emits $a u_0 u$ for some $u_0 \in \Sigma^*$.
\end{itemize}
We claim that both $a u_0 \#$ and $a u_0 u$ are deciding, implying that $\varepsilon$ is not confused.
It is clear that $a u_0 \#$ is positively deciding.
We argue that $a u_0 u$ is negatively deciding.
Indeed, since $\U$ does not accept~$u$, we have $\delta(q,u) = \emptyset$ for all $q \in Q$.
Hence, starting from any $q \in Q$, the MC~$\M$ cannot emit~$u$.
So if $\M$ emits $a u_0 u$, it must have taken the right transition in the first step and thus will never emit~$\#$, which means that $a u_0 u$ is negatively deciding.

Conversely, suppose that $\varepsilon$ is not confused.
Then, almost surely, $\M$ emits a deciding prefix~$u'$.
By our construction, the word~$u'$ is positively deciding with probability~$\frac12$, and negatively deciding with probability~$\frac12$ (depending on which transition $\M$ takes in the first step).
Hence there exists $u \in \Sigma^*$ such that $a u$ is negatively deciding.
So $\M$ cannot take the left transition in the first step and then emit~$u$.
It follows that there is no $q \in Q$ with $\delta(q,u) \ne \emptyset$, i.e., $\U$ does not accept~$u$.
\end{proof}

\subsection{Proof of \autoref{prop-feasible-no-confusion}}

Here is \autoref{prop-feasible-no-confusion} from the main body:
\propfeasiblenoconfusion*
\begin{proof}
%
%
Suppose that $\rho$~allows confusion, i.e., with positive probability, it produces an observation prefix $\upsilon \bot$ such that $\upsilon \bot$ is confused but $\upsilon$ is not.
We have:
\begin{equation} \label{eq-prop-feasible-no-confusion}
\begin{aligned}
    &\ \Pr(\{w \gtrsim \upsilon \mid \seeall \text{ decides } w \}) \\ 
 =  &\ \Pr(\{u w \mid u \sim \upsilon,\ u w \text{ has a deciding prefix}\}) && \text{definitions} \\
\ge &\ \Pr(\{u w \mid u \sim \upsilon,\ \upsilon w \text{ has a deciding prefix}\}) && \text{} \\
 =  &\ \Pr(\{u w \mid u \sim \upsilon\}) && \text{$\upsilon$ is not confused} \\
 =  &\ \Pr(\{w \gtrsim \upsilon\}) && \text{definition of }\mathord{\gtrsim}
\end{aligned}
\end{equation}
Thus we have:
\begin{align*}
    &\ \Pr(\{w \mid \seeall \text{ decides } w, \ \rho \text{ does not decide } w\}) \\ 
\ge &\ \Pr(\{w \gtrsim \upsilon \mid \seeall \text{ decides } w, \ \rho \text{ does not decide } w\}) \\ 
\ge &\ \Pr(\{w \gtrsim \upsilon \mid \rho \text{ does not decide } w\}) && \text{by~\eqref{eq-prop-feasible-no-confusion}} \\ 
 =  &\ \Pr(\{w \gtrsim \upsilon \bot \mid \rho \text{ does not decide } w\}) && \text{} \\ 
 =  &\ \Pr(\{u w \mid u \sim \upsilon \bot,\ \rho \text{ does not decide } w\}) && \text{definition of }\mathord{\gtrsim} \\ 
 =  &\ \Pr(\{u w \mid u \sim \upsilon \bot,\ \pi_\rho(u w) \text{ has no deciding prefix}\}) && \text{definition of decide} \\ 
\ge &\ \Pr(\{u w \mid u \sim \upsilon \bot,\ \upsilon \bot w \text{ has no deciding prefix}\}) && \text{definition of $\rho$} \\
 >  &\ 0 && \upsilon \bot \text{ is confused}
\end{align*}
Hence $\rho$ is not feasible.
\end{proof}

\subsection{Proof of \autoref{lem-eps-fin-implies-Csmart-fin}}

Here is \autoref{lem-eps-fin-implies-Csmart-fin} from the main body:
\lemepsfinimpliesCsmartfin*
\begin{proof}
We abbreviate ``deciding or very confused'' to ``dv''.
Note that if a belief~$B$ is dv then all $\Delta(B, a)$ are dv.
We use this to show the following similar property:

\smallskip\noindent\textbf{Claim.} For any finitary belief~$B$, all $\Delta(B, a)$ are finitary.
\smallskip

\noindent We prove the claim.
Let $B$ be a finitary belief.
Then we have:
\begin{align*}
1 &\ = \ \Pr_s(\{u w \mid \Delta(B,u) \text{ is dv}\})  \\
  &\ = \ \sum_{a \in \Sigma} \sum_{s'} M(a)(s,s') \cdot \Pr_{s'}(\{u w \mid \Delta(B,a u) \text{ is dv}\})
\end{align*}
Here, the second equality follows from the property of dv mentioned above.
Hence, for all $a \in \Sigma$ and all $(s',q') \in \Delta(B,a)$ we have
$
\Pr_{s'}(\{u w \mid \Delta(B,a u) \text{ is dv}\}) = 1
$,
so $\Delta(B,a)$ is finitary for all~$a$.
This proves the claim.

\newcommand{\Fin}{\mathcal{F}}%
\newcommand{\edv}{e_{\mathit{dv}}}%
Define: 
\[
 \Fin \ \defeq \ \{(s,B) \in S \times 2^{S \times Q} \mid B \text{ finitary}, \ \exists\,q : (s,q) \in B\}
\]
For any belief~$B$ define:
\begin{equation} \label{eq-def-edv}
 \edv(B) \ \defeq \ \begin{cases} 1 &\text{ if $B$ is dv} \\ 
                                0 &\text{ otherwise}
                  \end{cases}
\end{equation}
For any $(s,B) \in \Fin$ and any $n \in \NN$ define
\begin{equation} \label{eq-cn}
 c_n(s,B) \ \defeq \ \Ex_s\big( 1 - D_{B,n} \big) \,,
\end{equation}
where by~$\Ex_s$ we denote the expectation with respect to~$\Pr_s$ and we define $D_{B,n}(w) \defeq \edv(\Delta(B, u_n))$ where $u_n$ is the length-$n$ prefix of~$w$.
We have $c_0(s,B) = 1 - \edv(B)$.
Further, due to the claim above, we have for all $n \in \NN$:
\[
 c_{n+1}(s,B) \ = \ \sum_{a \in \Sigma} \mathop{\sum_{s'}}_{M(a)(s,s')>0} M(a)(s,s') \cdot c_{n}(s',\Delta(B,a)) 
\]
We may write those equations in vector form:
\begin{equation} \label{eq-ExCseeall-cost}
\begin{aligned}
 c_0^\T     \ &= \ \vec{1}^\T - \edv^\T \\
 c_{n+1}^\T \ &= \ U \cdot c_n^\T
\end{aligned}
\end{equation}
where
\begin{itemize}
\item $\edv \in \{0,1\}^{\Fin}$ where $\edv(s,B) = \edv(B)$ as in~\eqref{eq-def-edv};
\item $c_n \in \R^{\Fin}$ where $c_n(s, B)$ is as in~\eqref{eq-cn};
\item $U \in [0,1]^{\Fin \times \Fin}$ where 
 \[
  U((s,B),(s',B')) \ = \ \begin{cases} \displaystyle\mathop{\sum_{a \in \Sigma}}_{\Delta(B,a) = B'} M(a)(s,s') & \text{ if $B$ is not dv} \\
                                       0          & \text{ otherwise.}
                         \end{cases}
 \]
\end{itemize}
The matrix~$U$ satisfies $U \vec{1}^\T = \vec{1}^\T - \edv^\T$.
By a straightforward induction it follows:
\begin{equation} \label{eq-U-powers}
 U^n \vec{1}^\T \ = \ \vec{1}^\T - \sum_{i=0}^{n-1} U^i \edv^\T \qquad \text{for all $n \in \NN$}
\end{equation}
From the definition of finitary, for all $(s,B) \in \Fin$ there is~$u$ such that $\Pr_s(\{u\} \Sigma^\omega) > 0$ and $\Delta(B, u)$ is dv.
It follows that there is $n \in \NN$ such that $\sum_{i=0}^{n-1} U^i \edv^\T > \vec{0}^\T$ where the inequality is strict in all entries.
By~\eqref{eq-U-powers}, we have $U^n \vec{1}^\T < \vec{1}^\T$ where the inequality is strict in all entries.
Hence the spectral radius of~$U$ is less than one, and so the matrix series $\sum_{n=0}^\infty U^n$ converges to a finite matrix, say $U^* \in \R^{\Fin \times \Fin}$.

Suppose $\varepsilon$~is finitary.
By \autoref{lem-belief-basic}.6, the belief~$B_0 = \Delta(B_0, \varepsilon)$ is finitary.
Hence $(s_0, B_0) \in \Fin$.
Further we have:
\begin{align*}
\Ex(C_\smart) \ 
&= \ \sum_{n=0}^\infty \Ex\big( 1 - D_{n} \big) && \text{by~\eqref{eq-C-smart}} \\
&= \ \sum_{n=0}^\infty \Ex_{s_0}\big( 1 - D_{B_0,n} \big) && \text{def.\ of $D_n, D_{B_0,n}$, \autoref{lem-belief-basic}} \\
&= \ \sum_{n=0}^\infty c_n(s_0, B_0) && \text{by~\eqref{eq-cn}} \\
&= \ \sum_{n=0}^\infty \Big( U^n \big(\vec{1}^\T - \edv^\T\big) \Big)(s_0,B_0) && \text{by~\eqref{eq-ExCseeall-cost}} \\
&= \ \Big( U^* \big(\vec{1}^\T - \edv^\T\big) \Big)(s_0,B_0) && \text{definition of~$U^*$} \\
&< \ \infty && \text{$U^*$ is finite}
\end{align*}
This completes the proof.
\end{proof}

\subsection{Proof of \autoref{lem-eps-not-fin-implies-inf}}

Here is \autoref{lem-eps-not-fin-implies-inf} from the main body:
\lemepsnotfinimpliesinf*
\begin{proof}
Let $\rho$ be an observation policy.
Suppose $\varepsilon$~is not finitary and $\Pr(C_\rho < \infty) = 1$.
We show that $\rho$ is not feasible.

Observe that for any~$w$, if $C_\rho(w) < \infty$ then there is a (unique) shortest finite prefix of~$w$, say $\tilde u(w)$, such that $\rho$ never observes a letter after $\tilde u(w)$.
Abbreviating ``deciding or very confused'' to ``dv'', we have:
\begin{align*}
0 \ 
&< \ \Pr(\{w \mid \text{no prefix of~$w$ is dv}\}) && \varepsilon \text{ is not finitary} \\
&= \ \Pr(\{w \mid \text{no prefix of~$w$ is dv}, \ C_\rho(w) < \infty\}) && \Pr(C_\rho < \infty) = 1 \\
&= \ \Pr(\{u w \mid \text{no prefix of~$u w$ is dv}, \ u = \tilde u(u w)\}) && \text{observation above} \\
&\le \ \Pr(\{u w \mid u \text{ is not dv}, \ u = \tilde u(u w)\}) && u \text{ is a prefix of } u w \\
& = \ \sum_{u \text{ is not dv}} \Pr(\{u w \mid u = \tilde u(u w)\}) && \Sigma^* \text{ is countable}
\end{align*}
It follows that there is~$u$ that is (i)~not deciding, (ii)~not very confused, and (iii)~such that $\rho$ never observes a letter after~$u$.
Since $u$~is not very confused, there is~$u'$ such that $u u'$ is enabled and deciding.
Hence $\seeall$ decides all words of the form $u u' w$.
Since $\rho$~never observes a letter after~$u$, we have that $\rho$~does not decide any words of the form $u u' w$.
It follows:
\begin{align*}
      & \Pr(\{w \mid \seeall \text{ decides } w,\  \rho \text{ does not decide } w\}) \\
\ge \ & \Pr(\{u u' w \mid \seeall \text{ decides } u u' w,\  \rho \text{ does not decide } u u' w\}) \\
= \ & \Pr(\{u u' w\}) && \text{as explained above} \\
> \ & 0 && u u' \text{ is enabled}
\end{align*}
Hence $\rho$ is not feasible.
\end{proof}

\subsection{Proof of \autoref{prop-finitary-char}}

Here is \autoref{prop-finitary-char} from the main body:
\propfinitarychar*
\begin{proof}
Suppose $\varepsilon$~is finitary.
By \autoref{lem-eps-fin-implies-Csmart-fin} we then have that $\Ex(C_\smart)$ is finite.
Hence $\cinf \le \Ex(C_\smart)$ is finite.

Conversely, suppose $\varepsilon$~is not finitary.
By \autoref{lem-eps-not-fin-implies-inf} we then have $\Pr(C_\rho = \infty) > 0$ for all feasible observation policies.
Thus $\Ex(C_\rho) = \infty$ holds for all feasible observation policies.
Hence $\cinf = \infty$.
\end{proof}

\subsection{Proof of \autoref{prop-diagnosers-finite}}

Here is \autoref{prop-diagnosers-finite} from the main body:
\propdiagnosersfinite*
\begin{proof}
We have:
\begin{align*}
& \text{a diagnoser exists} \\
\Longleftrightarrow\quad & \varepsilon \text{ is not confused} && \text{Prop.~\ref{prop-diagnosability-confused}} \\
\Longleftrightarrow\quad & \Pr(\{w \mid \text{some prefix of } w \text{ is deciding}\}) \ = \ 1 && \text{definition} \\
\Longrightarrow\quad & \Pr(\{w \mid \text{some prefix of } w \text{ is deciding or very confused}\}) \ = \ 1 && \text{} \\
\Longleftrightarrow\quad & \varepsilon \text{ is finitary} && \text{definition} \\
\Longleftrightarrow\quad & \cinf \text{ is finite} && \text{Prop.~\ref{prop-finitary-char}}
\end{align*}
\end{proof}

\subsection{Proof of \autoref{thm-finitary-PSPACE}}

Here is \autoref{thm-finitary-PSPACE} from the main body:
\thmfinitaryPSPACE*
\begin{proof}
By \autoref{prop-finitary-char} it suffices to show PSPACE-completeness of checking whether $\varepsilon$~is finitary.
Membership in~PSPACE follows from \autoref{lem-computing-the properties}.6.

For hardness we use the same reduction as in the proof of \autoref{thm-diagnosability-PSPACE}.
There we reduce from the following PSPACE-complete problem: given an NFA~$\U$ over~$\Sigma = \{a,b\}$ where all states are initial and accepting, does $\U$ accept all (finite) words?
The reduction produces an MC~$\M$ and a DFA~$\A$.

First we show that in the result of that reduction, no enabled word is very confused.
If a word~$u$ contains the letter~$\#$ then $u$~is negatively deciding, hence not very confused.
Let $u_0 \in \Sigma^*$.
\begin{itemize}
\item If the NFA~$\U$ accepts all words then $a u_0 \#$ is enabled and positively deciding;
\item if $B$~does not accept some word~$u$ then $a u_0 u$ is enabled and negatively deciding.
\end{itemize}
In either case it follows that $a u_0$ is not very confused and also that $\varepsilon$ is not very confused.
We conclude that no enabled word is very confused.

Thus we have:
\begin{align*}
& \varepsilon \text{ is finitary} \\ 
\Longleftrightarrow \quad & \Pr(\{w \mid \text{no prefix of } w \text{ is deciding or very confused}\}) = 0 && \text{definition}\\
\Longleftrightarrow \quad & \Pr(\{w \mid \text{no prefix of } w \text{ is deciding}\}) = 0 && \text{as argued}\\
\Longleftrightarrow \quad & \varepsilon \text{ is not confused} && \text{definition} \\
\Longleftrightarrow \quad & \text{NFA } \U \text{ does not accept all words\,,} && \text{} 
\end{align*}
where the last equivalence was shown in the proof of \autoref{thm-diagnosability-PSPACE}.
It follows that checking if $\cinf$~is finite is PSPACE-complete.
\end{proof}

\subsection{Wrong Proof of \autoref{thm-undecidable}}

Here is \autoref{thm-undecidable} from the main body:
\thmundecidable*
First we give a wrong proof.
This proof contains useful ideas and is similar to and simpler than the correct one, but is flawed.
We point out the flaw.
In \autoref{sub-correct-proof-undecidable} we amend the proof, making it more complicated but correct.
The incorrect proof contains only ideas that feature also in the correct one;
but the correct proof can be read without having read the incorrect one.

\begin{proof}[Wrong proof]
We reduce from the emptiness problem for probabilistic automata.
A \emph{probabilistic automaton (PA)} is a tuple $\P = (S, \Sigma, M, s_0, \eta)$ where
 $S$ is a finite set of states,
 $\Sigma$ is a finite alphabet,
 the mapping $M: \Sigma \to [0,1]^{S \times S}$, where $M(a)$ is stochastic for each $a \in \Sigma$, specifies the transitions,
 $s_0$~is an initial state,
 and $\eta \in [0,1]^S$ is a vector of acceptance probabilities.
Extend~$M$ to $M: \Sigma^* \to [0,1]^{S \times S}$ as in the case of MCs.
In the case of PAs, $M(u)$ is stochastic for each $u \in \Sigma^*$.
For each~$u$ define $\Pr_{\P}(u) := e_{s_0} M(u) \eta^\top$.
The probability~$\Pr_{\P}(u)$ can be interpreted as the probability that $\P$ accepts~$u$.
The \emph{emptiness problem} asks, given a PA~$\P$, whether there is~$u$ such that $\Pr_{\P}(u) > \frac12$.
This problem is undecidable~\cite[p.~190, Theorem~6.17]{Paz71}.

Let $\P = (S_\P, \Sigma_\P, M_\P, s_{0\P}, \eta)$ be the given PA.
We will construct a DFA~$\A$ and an MC~$\M$ over the alphabet $\Sigma \defeq \Sigma_\P \cup \{0,1,?\}$, where $0,1,?$ are fresh letters.
Define $\A$ to be a DFA that accepts $L = 0 (\Sigma \Sigma)^* \Sigma 1 \Sigma^\omega$.
We might characterize~$L$ by saying that $1$ appears on an odd position.

We construct an MC $\M = (S, \Sigma, M, s_0)$ with
\[
 S \defeq \{s_0\} \cup (S_\P \times \{0,1,2,3\}) \,,
\] 
such that $s_0$~is a fresh state.
The MC~$\M$ initially splits randomly into a ``$0$-copy'' (with states in $S \times \{0,2\}$) and a ``$1$-copy'' (with states in $S \times \{1,3\}$), in either case emitting the letter~$0$.
Formally, $M(0)(s_0, (s_{0\P},0)) = M(0)(s_0, (s_{0\P},1)) = \frac12$.
The MC~$\M$ is constructed such that if it goes into the $0$-copy then surely it emits an infinite word that is not in~$L$;
and if it goes into the $1$-copy then almost surely it emits an infinite word in~$L$.
Therefore, for an observation policy it suffices to identify which copy $\M$~has entered.

The transitions in~$\M$ depend on $M_\P$ and~$\eta$.
In \autoref{fig-illustration-undecidable-wrong} we illustrate this dependence with an example, where $\Sigma_\P = \{a,b\}$.
\begin{figure}
\begin{center}
\begin{tikzpicture}[scale=2,LMC style]
\node[state] (s)  at (0.5,0) {$s$};
\coordinate  (sa) at (3.0,0.6);
\coordinate  (sb) at (3.0,-0.6);
\node[state] (t1) at (5.5,0.9) {$t_1$};
\node[state] (t2) at (5.5,0.3) {$t_2$};
\node[state] (t3) at (5.5,-0.3) {$t_3$};
\node[state] (t4) at (5.5,-0.9) {$t_4$};
\path (s) edge node[above] {$a$} (sa);
\path (s) edge node[above] {$b$} (sb);
\path[->] (sa) edge node[above] {$\frac17$} (t1);
\path[->] (sa) edge node[below] {$\frac67$} (t2);
\path[->] (sb) edge node[above] {$\frac15$} (t3);
\path[->] (sb) edge node[below] {$\frac45$} (t4);
\end{tikzpicture} \\
\begin{tikzpicture}[scale=2,LMC style]
\coordinate (0) at (0,0);
\path[->] (0) edge[line width=5, line cap=round] (0,-0.7);
\end{tikzpicture}
\\[-3mm]
\begin{tikzpicture}[scale=2,LMC style]
\node[state] (s0)  at (0.5,0) {$(s,0)$};
\node[state] (s2)  at (3,0) {$(s,2)$};
\path[->] (s0) edge node[above] {$\frac{2 \eta(s)}{3} 0$} node[below] {$\frac{3-2\eta(s)}{3} ?$} (s2);
\node[state] (t10) at (5.5,0.9) {$(t_1, 0)$};
\node[state] (t20) at (5.5,0.3) {$(t_2, 0)$};
\node[state] (t30) at (5.5,-0.3) {$(t_3, 0)$};
\node[state] (t40) at (5.5,-0.9) {$(t_4, 0)$};
\path[->] (s2) edge[bend left=30]  node[pos=0.7,above] {$\frac{1 \cdot 1}{3 \cdot 7} a$} node[pos=0.7,below] {$\frac{1 \cdot 1}{6 \cdot 7} 0$} (t10);
\path[->] (s2) edge[bend left=10]  node[pos=0.7,above] {$\frac{1 \cdot 6}{3 \cdot 7} a$} node[pos=0.7,below] {$\frac{1 \cdot 6}{6 \cdot 7} 0$} (t20);
\path[->] (s2) edge[bend right=10] node[pos=0.7,above] {$\frac{1 \cdot 1}{3 \cdot 5} b$} node[pos=0.7,below] {$\frac{1 \cdot 1}{6 \cdot 5} 0$} (t30);
\path[->] (s2) edge[bend right=30] node[pos=0.7,above] {$\frac{1 \cdot 4}{3 \cdot 5} b$} node[pos=0.7,below] {$\frac{1 \cdot 4}{6 \cdot 5} 0$} (t40);
\end{tikzpicture}
\\
\begin{tikzpicture}[scale=2,LMC style]
\node[state] (s0)  at (0.5,0) {$(s,1)$};
\node[state] (s2)  at (3,0) {$(s,3)$};
\path[->] (s0) edge node[above] {$\frac{2 \eta(s)}{3} 1$} node[below] {$\frac{3-2 \eta(s)}{3} ?$} (s2);
\node[state] (t10) at (5.5,0.9) {$(t_1, 1)$};
\node[state] (t20) at (5.5,0.3) {$(t_2, 1)$};
\node[state] (t30) at (5.5,-0.3) {$(t_3, 1)$};
\node[state] (t40) at (5.5,-0.9) {$(t_4, 1)$};
\path[->] (s2) edge[bend left=30]  node[pos=0.7,above] {$\frac{1 \cdot 1}{3 \cdot 7} a$} node[pos=0.7,below] {$\frac{1 \cdot 1}{6 \cdot 7} 1$} (t10);
\path[->] (s2) edge[bend left=10]  node[pos=0.7,above] {$\frac{1 \cdot 6}{3 \cdot 7} a$} node[pos=0.7,below] {$\frac{1 \cdot 6}{6 \cdot 7} 1$} (t20);
\path[->] (s2) edge[bend right=10] node[pos=0.7,above] {$\frac{1 \cdot 1}{3 \cdot 5} b$} node[pos=0.7,below] {$\frac{1 \cdot 1}{6 \cdot 5} 1$} (t30);
\path[->] (s2) edge[bend right=30] node[pos=0.7,above] {$\frac{1 \cdot 4}{3 \cdot 5} b$} node[pos=0.7,below] {$\frac{1 \cdot 4}{6 \cdot 5} 1$} (t40);
\end{tikzpicture}
\vspace{-2mm}
\end{center}
\caption{Illustration of the wrong reduction. The PA at the top has a state~$s$ whose outgoing transitions are such that $M(a)(s,t_1) = \frac17$, $M(a)(s,t_2) = \frac67$, $M(b)(s,t_3) = \frac15$, $M(b)(s,t_4) = \frac45$.
The resulting MC has two corresponding states, $(s,0)$ and $(s,1)$.
From such a state $(s,i)$ it first emits, with a probability depending on $\eta(s)$, either the letter~$i$ (thus giving away in which copy the MC is) or the letter~$?$, and then emits, with probability~$\frac13$, the letter~$i$ (thus giving away in which copy the MC is), or, with probability $\frac23 \cdot \frac{1}{|\Sigma_\P|} = \frac13$ each, the letters $a$ or~$b$.
}
\label{fig-illustration-undecidable-wrong}
\end{figure}
For all $s, t \in S_\P$ and both $i \in \{0,1\}$ we have the following transitions:
\begin{align*}
M(a)((s,i+2), (t,i)) \ &\defeq \ \frac23 \cdot \frac{1}{|\Sigma_\P|} \cdot M_\P(a)(s,t)  && \forall\,a \in \Sigma_\P \\
M(i)((s,i+2), (t,i))      \ &\defeq \ \frac13 \cdot \frac{1}{|\Sigma_\P|} \cdot \sum_{a \in \Sigma_\P} M_\P(a)(s,t)
\end{align*}
Note that any state $(s,i+2)$ emits, with probability~$\frac13$, the letter~$i$ (thus giving away in which copy $\M$~is), or, with probability~$\frac23$, a letter in~$\Sigma_\P$ that is chosen uniformly at random. (Thus, if $|\Sigma_\P|=2$, all letters in $\Sigma_\P \cup \{i\}$ have probability~$\frac13$.)
In more detail, one can view the behaviour of~$\M$ in a state $(s,i+2)$ as follows: $\M$~first samples a letter~$a$ from~$\Sigma_\P$ uniformly at random, then samples a successor state from $S \times \{i\}$ (each $(t,i)$ with probability $M_\P(a)(s,t)$), and then, upon transitioning to~$(t,i)$, emits, with probability~$\frac23$, the sampled letter~$a$, or, with probability~$\frac13$, the letter~$i$.
This completes the description of the construction.

There exists a diagnoser:
Consider the policy~$\overline\rho$, which, after the initial~$0$, observes every $2$nd letter; each time this letter will be either from~$\Sigma_\P$ (with probability~$\frac23$) or from~$\{0,1\}$ (with probability~$\frac13$).
With probability~$1$ the latter case occurs eventually.
Once it has occurred, the copy has been revealed; thus the observation prefix becomes deciding and $\overline\rho$~stops making any observations.
The required number of observations under~$\overline\rho$ is geometrically distributed, with expectation $1/\frac13 = 3$.
Hence $\Ex(C_{\overline\rho}) = 3$ and thus $\cinf \le 3$.

It remains to show that $\cinf < 3$ if and only if there is~$u$ such that $\Pr_{\P}(u) > \frac12$.
Suppose $u_0$~is such that $\Pr_{\P}(u_0) > \frac12$.
Consider the following policy~$\rho_0$:
\begin{itemize}
\item do not observe the initial~$0$ and then observe every $2$nd letter until either the prefix is deciding (because $0$ or~$1$ has been observed) or the observation prefix becomes equal to~$u_0$;
\item in the former case: stop observing forever;
\item in the latter case: observe the immediately following letter;
\begin{itemize}
 \item if this letter is $0$ or~$1$ then the observation prefix has become deciding, so stop observing;
 \item if this letter is~$?$ then observe the next letter and then every $2$nd letter as before.
\end{itemize}
\end{itemize}
This policy coincides mostly with the diagnoser~$\overline\rho$ described above.
It only deviates if and when $u_0$~has been observed.
One can show that this deviation improves the expected cost: the conditional probability that $0$ or~$1$ (hence not the letter~$?$) will be observed conditioned under having observed~$u_0$ is equal to $\frac23 \cdot \Pr_{\P}(u_0) > \frac23 \cdot \frac12 > \frac13$.
It follows that $\cinf < 3$.

It remains to show the converse.
To this end, suppose all~$u$ satisfy $\Pr_{\P}(u) \le \frac12$.
Then one is tempted to think that it is not beneficial to deviate from the diagnoser~$\overline\rho$ described above:
conditioned under having observed some observation prefix~$u$ using~$\overline\rho$, the conditional probability that the immediately following letter is $0$ or~$1$ (hence not the letter~$?$) is equal to $\frac23 \cdot \Pr_{\P}(u) \le \frac23 \cdot \frac12 \le \frac13$, so not better than what one would get by proceeding with $\overline\rho$.

The last argument is flawed though; it is valid only for the very first deviation from~$\overline{\rho}$.
Indeed, suppose there is some $u_0$ with $\Pr_{\P}(u_0) = \frac12$.
Suppose further we follow~$\overline{\rho}$ until the observation prefix is $u_0$.
At this point it does not come with a risk to observe the immediately following letter, as the probability to observe $0$ or~$1$ is~$\frac13$, which is equal to the success probability of each single observation of~$\overline{\rho}$.
Suppose we do that and observe~$?$.
Now we have not learned anything about which copy the MC is in; however, after this observation, the conditional probability (conditioned on all observations so far) that we are in a state $(s,2)$ or $(s,3)$ with \emph{low} $\eta(s)$ has increased, as a state $(s,i)$ with high~$\eta(s)$ would probably have produced the letter~$i$ instead of~$?$.
This information might be exploitable later:
For instance, suppose that the PA is such that states~$s$ with low and high values of $\eta(s)$ alternate.
Then in the next round we believe to be in a state $(s,i)$ with high $\eta(s)$.
Now it might be beneficial to make another non-$\overline\rho$ observation, even if $\Pr_{\P}(u_0 a) \le \frac12$ holds for all $a \in \Sigma_\P$.

To fix this problem we need to change the construction in a way that a single deviation from~$\overline\rho$ leaks less information.
We do this in the following \autoref{sub-correct-proof-undecidable}.
\end{proof}

\subsection{Correct Proof of \autoref{thm-undecidable}}
\label{sub-correct-proof-undecidable}

Here is \autoref{thm-undecidable} from the main body:
\thmundecidable*
\begin{proof}
We reduce from the emptiness problem for probabilistic automata.
A \emph{probabilistic automaton (PA)} is a tuple $\P = (S, \Sigma, M, s_0, \eta)$ where
 $S$ is a finite set of states,
 $\Sigma$ is a finite alphabet,
 the mapping $M: \Sigma \to [0,1]^{S \times S}$, where $M(a)$ is stochastic for each $a \in \Sigma$, specifies the transitions,
 $s_0$~is an initial state,
 and $\eta \in [0,1]^S$ is a vector of acceptance probabilities.
Extend~$M$ to $M: \Sigma^* \to [0,1]^{S \times S}$ as in the case of MCs.
In the case of PAs, $M(u)$ is stochastic for each $u \in \Sigma^*$.
For each~$u$ define $\Pr_{\P}(u) := e_{s_0} M(u) \eta^\top$.
The probability~$\Pr_{\P}(u)$ can be interpreted as the probability that $\P$ accepts~$u$.
The \emph{emptiness problem} asks, given a PA~$\P$, whether there is~$u$ such that $\Pr_{\P}(u) > \frac12$.
This problem is undecidable~\cite[p.~190, Theorem~6.17]{Paz71}.

We assume that for all~$u$ there is $s$ with $M(u)(s_0,s) > 0$ and $0 < \eta(s) < 1$.
This is without loss of generality, as we can make the PA branch, in its first transition and with positive probability, to a sub-PA with a single state from which every word is accepted with probability~$\frac12$.
Formally, from the original PA $\P = (S, \Sigma, M, s_0, \eta)$ obtain another PA $\P' = (S' \cup \{t_0, t\}, \Sigma, M', t_0, \eta')$ where:
\begin{itemize}
\item $t_0, t$ are fresh states;
\item $\eta'(t_0) \defeq \eta(s_0)$ and $\eta'(t) \defeq \frac12$ and $\eta'(s) \defeq \eta(s)$ for all $s$;
\item $M'(a)(t_0, t) \defeq \frac12$ and $M'(a)(t,t) \defeq 1$ for all $a \in \Sigma$;
\item $M'(a)(t_0, s) \defeq \frac12 M(a)(s_0,s)$ for all $a \in \Sigma$ and all $s$;
\item $M'(a)(s,s') = M(a)(s,s')$ for all $s,s'$.
\end{itemize}
We have for all~$u$ that $\Pr_{\P'}(u) - \frac12 = \frac12 (\Pr_{\P}(u) - \frac12)$.
It follows that
$\Pr_{\P}(u) > \frac12$ if and only if $\Pr_{\P'}(u) > \frac12$. 

Let $\P = (S_\P, \Sigma_\P, M_\P, s_{0\P}, \eta)$ be the given PA.
We will construct a DFA~$\A$ and an MC~$\M$ over the alphabet $\Sigma \defeq \Sigma_\P \cup \{0,1\}$, where $0,1$ are fresh letters.
Define $\A$ to be a DFA that accepts $L = 0 (\Sigma \Sigma \Sigma \Sigma \Sigma)^* \Sigma \Sigma \Sigma \Sigma 1 \Sigma^\omega$.
Ignoring the very first letter~$0$, we might characterize~$L$ by saying that $1$ appears on a position that is divisible by~$5$.

We construct an MC $\M = (S, \Sigma, M, s_0)$ with
\[
 S \defeq \{s_0\} \cup (S_\P \times \{0,1,2,3\}) \cup \tilde{S}\,,
\] 
such that $s_0$~and the states in~$\tilde{S}$ are fresh states.
%
The MC~$\M$ initially splits randomly into a ``$0$-copy'' (with states in $S \times \{0,2\}$) and a ``$1$-copy'' (with states in $S \times \{1,3\}$), in either case emitting the letter~$0$.
Formally, $M(0)(s_0, (s_{0\P},0)) = M(0)(s_0, (s_{0\P},1)) = \frac12$.
The MC~$\M$ is constructed such that if it goes into the $0$-copy then surely it emits an infinite word that is not in~$L$;
and if it goes into the $1$-copy then almost surely it emits an infinite word in~$L$.
Therefore, for an observation policy it suffices to identify which copy $\M$~has entered.

The transitions in~$\M$ depend on $M_\P$ and~$\eta$.
In \autoref{fig-illustration-undecidable} we illustrate this dependence with an example, where $\Sigma_\P = \{a,b\}$.
\begin{figure}
\begin{center}
\begin{tikzpicture}[scale=2,LMC style]
\node[state] (s)  at (0.5,0) {$s$};
\coordinate  (sa) at (3.0,0.6);
\coordinate  (sb) at (3.0,-0.6);
\node[state] (t1) at (5.5,0.9) {$t_1$};
\node[state] (t2) at (5.5,0.3) {$t_2$};
\node[state] (t3) at (5.5,-0.3) {$t_3$};
\node[state] (t4) at (5.5,-0.9) {$t_4$};
\path (s) edge node[above] {$a$} (sa);
\path (s) edge node[above] {$b$} (sb);
\path[->] (sa) edge node[above] {$\frac17$} (t1);
\path[->] (sa) edge node[below] {$\frac67$} (t2);
\path[->] (sb) edge node[above] {$\frac15$} (t3);
\path[->] (sb) edge node[below] {$\frac45$} (t4);
\end{tikzpicture} \\
\begin{tikzpicture}[scale=2,LMC style]
\coordinate (0) at (0,0);
\path[->] (0) edge[line width=5, line cap=round] (0,-0.7);
\end{tikzpicture}
\\[-3mm]
\begin{tikzpicture}[scale=2,LMC style]
\node[state] (s0)  at (0.5,0) {$(s,0)$};
\node[state] (s2)  at (3,0) {$(s,2)$};
\path[->] (s0) edge node[above] {$\frac{\eta(s)}{4} 0000, \ldots, \frac{1-\eta(s)}{4} 1010$} (s2);
\node[state] (t10) at (5.5,0.9) {$(t_1, 0)$};
\node[state] (t20) at (5.5,0.3) {$(t_2, 0)$};
\node[state] (t30) at (5.5,-0.3) {$(t_3, 0)$};
\node[state] (t40) at (5.5,-0.9) {$(t_4, 0)$};
\path[->] (s2) edge[bend left=30]  node[pos=0.7,above] {$\frac{1 \cdot 1}{3 \cdot 7} a$} node[pos=0.7,below] {$\frac{1 \cdot 1}{6 \cdot 7} 0$} (t10);
\path[->] (s2) edge[bend left=10]  node[pos=0.7,above] {$\frac{1 \cdot 6}{3 \cdot 7} a$} node[pos=0.7,below] {$\frac{1 \cdot 6}{6 \cdot 7} 0$} (t20);
\path[->] (s2) edge[bend right=10] node[pos=0.7,above] {$\frac{1 \cdot 1}{3 \cdot 5} b$} node[pos=0.7,below] {$\frac{1 \cdot 1}{6 \cdot 5} 0$} (t30);
\path[->] (s2) edge[bend right=30] node[pos=0.7,above] {$\frac{1 \cdot 4}{3 \cdot 5} b$} node[pos=0.7,below] {$\frac{1 \cdot 4}{6 \cdot 5} 0$} (t40);
\end{tikzpicture}
\\
\begin{tikzpicture}[scale=2,LMC style]
\node[state] (s0)  at (0.5,0) {$(s,1)$};
\node[state] (s2)  at (3,0) {$(s,3)$};
\path[->] (s0) edge node[above] {$\frac{\eta(s)}{4} 1100, \ldots, \frac{1-\eta(s)}{4} 1011$} (s2);
\node[state] (t10) at (5.5,0.9) {$(t_1, 1)$};
\node[state] (t20) at (5.5,0.3) {$(t_2, 1)$};
\node[state] (t30) at (5.5,-0.3) {$(t_3, 1)$};
\node[state] (t40) at (5.5,-0.9) {$(t_4, 1)$};
\path[->] (s2) edge[bend left=30]  node[pos=0.7,above] {$\frac{1 \cdot 1}{3 \cdot 7} a$} node[pos=0.7,below] {$\frac{1 \cdot 1}{6 \cdot 7} 1$} (t10);
\path[->] (s2) edge[bend left=10]  node[pos=0.7,above] {$\frac{1 \cdot 6}{3 \cdot 7} a$} node[pos=0.7,below] {$\frac{1 \cdot 6}{6 \cdot 7} 1$} (t20);
\path[->] (s2) edge[bend right=10] node[pos=0.7,above] {$\frac{1 \cdot 1}{3 \cdot 5} b$} node[pos=0.7,below] {$\frac{1 \cdot 1}{6 \cdot 5} 1$} (t30);
\path[->] (s2) edge[bend right=30] node[pos=0.7,above] {$\frac{1 \cdot 4}{3 \cdot 5} b$} node[pos=0.7,below] {$\frac{1 \cdot 4}{6 \cdot 5} 1$} (t40);
\end{tikzpicture}
\end{center}
\caption{Illustration of the reduction. The PA at the top has a state~$s$ whose outgoing transitions are such that $M(a)(s,t_1) = \frac17$, $M(a)(s,t_2) = \frac67$, $M(b)(s,t_3) = \frac15$, $M(b)(s,t_4) = \frac45$.
The resulting MC has two corresponding states, $(s,0)$ and $(s,1)$.
From such a state $(s,i)$ it first emits, with a probability depending on $\eta(s)$, a $4$-bit sequence (``block''), and then emits, with probability~$\frac13$, the letter~$i$ (thus giving away in which copy the MC is), or, with probability $\frac23 \cdot \frac{1}{|\Sigma_\P|} = \frac13$ each, the letters $a$ or~$b$.
}
\label{fig-illustration-undecidable}
\end{figure}

Formally, for all $(s,i) \in S_\P \times \{0,1\}$ we define transitions (using states in~$\tilde{S}$ that are not further specified) such that:
\begin{equation}
\begin{aligned}
M(0000)((s,0),(s,2)) \, &= \, \eta(s)/4  &\ M(0101)((s,0),(s,2)) \, &= \, (1-\eta(s))/4 \\
M(0001)((s,0),(s,2)) \, &= \, \eta(s)/4  &\ M(0110)((s,0),(s,2)) \, &= \, (1-\eta(s))/4 \\
M(0010)((s,0),(s,2)) \, &= \, \eta(s)/4  &\ M(1001)((s,0),(s,2)) \, &= \, (1-\eta(s))/4 \\
M(0011)((s,0),(s,2)) \, &= \, \eta(s)/4  &\ M(1010)((s,0),(s,2)) \, &= \, (1-\eta(s))/4 \\[3mm]
M(1100)((s,1),(s,3)) \, &= \, \eta(s)/4  &\ M(0100)((s,1),(s,3)) \, &= \, (1-\eta(s))/4 \\
M(1101)((s,1),(s,3)) \, &= \, \eta(s)/4  &\ M(0111)((s,1),(s,3)) \, &= \, (1-\eta(s))/4 \\
M(1110)((s,1),(s,3)) \, &= \, \eta(s)/4  &\ M(1000)((s,1),(s,3)) \, &= \, (1-\eta(s))/4 \\
M(1111)((s,1),(s,3)) \, &= \, \eta(s)/4  &\ M(1011)((s,1),(s,3)) \, &= \, (1-\eta(s))/4
\end{aligned}
\label{eq-def-blocks}
\end{equation}
These $4$-bit sequences (henceforth ``blocks'') are chosen such that if the first two bits agree (say they are both $b \in \{0,1\}$) then $\M$ is in the $b$-copy;
if the first two bits of a block do not agree then the xor of all four bits (equivalently, the parity of the sum of the four bits) identifies in which copy $\M$~is.
States~$(s,i)$ with higher acceptance probability~$\eta(s)$ have a higher chance to emit a block where the first two bits agree.
The intention of this construction is to make it cost-efficient for an observation policy to observe (some) letters of a block when the policy believes that it is in a state $(s,i)$ with high~$\eta(s)$.
We will show later that this is beneficial on some runs if and only if there is~$u$ with $\Pr_{\P}(u) > \frac12$.

The other transitions of~$\M$ are as follows:
For all $s, t \in S_\P$ and both $i \in \{0,1\}$ we have the following transitions:
\begin{align*}
M(a)((s,i+2), (t,i)) \ &\defeq \ \frac23 \cdot \frac{1}{|\Sigma_\P|} \cdot M_\P(a)(s,t)  && \forall\,a \in \Sigma_\P \\
M(i)((s,i+2), (t,i))      \ &\defeq \ \frac13 \cdot \frac{1}{|\Sigma_\P|} \cdot \sum_{a \in \Sigma_\P} M_\P(a)(s,t)
\end{align*}
Note that any state $(s,i+2)$ emits, with probability~$\frac13$, the letter~$i$ (thus giving away in which copy $\M$~is), or, with probability~$\frac23$, a letter in~$\Sigma_\P$ that is chosen uniformly at random. (Thus, if $|\Sigma_\P|=2$, all letters in $\Sigma_\P \cup \{i\}$ have probability~$\frac13$.)
In more detail, one can view the behaviour of~$\M$ in a state $(s,i+2)$ as follows: $\M$ first samples a letter~$a$ from~$\Sigma_\P$ uniformly at random, then samples a successor state from $S \times \{i\}$ (each $(t,i)$ with probability $M_\P(a)(s,t)$), and then, upon transitioning to~$(t,i)$, emits, with probability~$\frac23$, the sampled letter~$a$, or, with probability~$\frac13$, the letter~$i$.
This completes the description of the construction.

There exists a diagnoser:
Consider the policy~$\overline\rho$, which, after the initial~$0$, observes every $5$th letter; each time this letter will be either from~$\Sigma_\P$ (with probability~$\frac23$) or from~$\{0,1\}$ (with probability~$\frac13$).
With probability~$1$ the latter case occurs eventually.
Once it has occurred, the copy has been revealed; thus the observation prefix becomes deciding and $\overline\rho$~stops making any observations.
The required number of observations under~$\overline\rho$ is geometrically distributed, with expectation $1/\frac13 = 3$.
Hence $\Ex(C_{\overline\rho}) = 3$ and thus $\cinf \le 3$.

It remains to show that $\cinf < 3$ if and only if there is~$u$ such that $\Pr_{\P}(u) > \frac12$.
Suppose $u_0$~is such that $\Pr_{\P}(u_0) > \frac12$.
Consider the following policy~$\rho_0$:
\begin{itemize}
\item do not observe the initial~$0$ and then observe every $5$th letter until either the prefix is deciding (because $0$ or~$1$ has been observed) or the observation prefix becomes equal to~$u_0$;
\item in the former case: stop observing forever;
\item in the latter case: observe the first two letters of the following block;
\begin{itemize}
 \item if they agree 
       then the observation prefix has become deciding, so stop observing;
 \item if they do not agree, observe also the next two letters; this produces a deciding prefix, so stop observing.
\end{itemize}
\end{itemize}
This policy coincides mostly with the diagnoser~$\overline\rho$ described above.
It only deviates if and when $u_0$~has been observed.
We show that this deviation improves the expected cost.

To argue in more formal terms,
we extend (for the PA~$\P$) the mapping $M_\P: \Sigma_\P^* \to [0,1]^{S_\P \times S_\P}$ to $M_\P: (\Sigma_\P \cup \{\bot\})^* \to [0,1]^{S_\P \times S_\P}$  by defining
\begin{align*}
 M_\P(\upsilon)   \ &\defeq \ {\displaystyle\sum_{u \sim \upsilon} M_\P(u)} \Big/ {\left|\{u \mid u \sim \upsilon\}\right|} && \text{and define} \\
 \mu_\P(\upsilon) \ &\defeq \ e_{s_{0\P}} M_\P(\upsilon) \;.
\intertext{%
One can view the distribution~$\mu_\P(\upsilon)$ as the expected distribution after having fed the PA~$\P$ with a randomly sampled $u \sim \upsilon$.
Similarly, we extend (for the MC~$\M$) the mapping $M: \Sigma^* \to [0,1]^{S \times S}$ to $M: \Sigma_\bot^* \to [0,1]^{S \times S}$  by defining
}
 M(\upsilon)   \ &\defeq \ \sum_{u \sim \upsilon} M(u) && \text{and define} \\
 \mu(\upsilon) \ &\defeq \  \frac{e_{s_0} M(\upsilon)}{e_{s_0} M(\upsilon) \vec{1}^\T} \;.
\end{align*}
One can view the distribution~$\mu(\upsilon)$ as the expected distribution after having observed~$\upsilon$.
Finally, for any finite word~$\upsilon = o_1 o_2 \cdots o_n \in (\Sigma_\P \cup \{\bot\})^*$ define the following padding:
\[
 \widehat{\upsilon} \ \defeq \ \begin{cases} \bot & \text{if $n=0$} \\
                                         \bot \bot \bot \bot \bot o_1 \bot \bot \bot \bot o_2 \bot \bot \bot \bot o_3 \cdots \bot \bot \bot \bot o_n
                                           & \text{if $n \ge 1$}
                           \end{cases}
\]
With these definitions, it is straightforward to check that we have:
\begin{equation} \label{eq-undec-conditioning}
 \frac12 \mu_\P(\upsilon)(s) \ = \ \mu(\widehat{\upsilon})((s,0)) \ = \ \mu(\widehat{\upsilon})((s,1)) \qquad
 \forall\,\upsilon \in (\Sigma_\P \cup \{\bot\})^* \ \forall\,s \in S_\P
\end{equation}

For the word~$u_0$ from above we have $\mu_\P(u_0) \eta^\T = \Pr_\P(u_0) > \frac12$.
It follows from \eqref{eq-undec-conditioning} and~\eqref{eq-def-blocks} that, conditioned under prefix~$\widehat{u_0}$, the conditional probability of emitting $00$ or~$11$ at the beginning of the following block is greater than~$\frac12$; formally:
\[
\Pr(\{u 0 0, u 1 1\} \Sigma^\omega \mid u \sim \widehat{u_0} \} \mid \{w \gtrsim \widehat{u_0}\}) \ = \ \mu_\P(u_0) \eta^\T \ > \ \frac12
\]
Recall that $\rho_0$ is defined so that once it has observed~$\widehat{u_0}$, it makes either exactly~$2$ or exactly~$4$ further observations.
From the previous inequality it follows for~$\rho_0$ that, conditioned under observing~$\widehat{u_0}$, the conditional probability to make exactly~$2$ further observations is greater than the conditional probability to make exactly~$4$ further observations.
Hence, the conditional expected number of observations after having observed~$u_0$ is less than~$3$.

More formally, for any policy~$\rho$ and an observation prefix~$\upsilon = o_1 \cdots o_k$ define a random variable~$C_\rho^\upsilon$ by
\[
 C_\rho^\upsilon(w) \ \defeq \ C_\rho(w) - \sum_{k=0}^{|\upsilon|-1} \rho(o_1\ldots o_k) \qquad \text{for all } w \gtrsim \upsilon\,,
\]
i.e., $C_\rho^\upsilon$ is the number of observations that $\rho$~makes after~$\upsilon$.
By the argument above we have $\Ex(C_{\rho_0}^{\widehat{u_0}} \mid \{w \gtrsim \widehat{u_0}\}) < 3$.
For all~$u$ that are not prefixes of~$u_0$ we have $\Ex(C_{\rho_0}^{\widehat{u}} \mid \{w \gtrsim \widehat{u}\}) = 3$.
It follows $\Ex(C_{\rho_0}) < 3$.
Hence $\cinf < 3$.

It remains to show the converse.
To this end, suppose all~$u$ satisfy $\Pr_{\P}(u) \le \frac12$.
It suffices to show that all feasible policies~$\rho$ satisfy $\Ex(C_{\rho}) \ge 3$.

In the following we use regular expressions to describe observation prefixes.
For improved readability we may indicate the borders of a block with a dot.
For instance, the regular expression
\[
 \bot (. \bot^4 . (\Sigma_\P + \bot))^* . 0 \bot \bot \bot .
\]
indicates observation prefixes of the following form: first all observations (if any) are made in non-blocks and are in~$\Sigma_\P$ (i.e., are not $0$ or~$1$), and then $0$~is observed at the beginning of a block, and the other three letters in the block are not observed.

We call $k \in (0 + 1 + \bot)^4$ \emph{block-deciding} when $k \in (0+1)^4 + (00 + 11) (0+1+\bot)^2$.
We have the following lemma:
\begin{ourlemma} \label{lem-block-deciding}
Let $\upsilon$ be an observation prefix that is not deciding and satisfies $|\upsilon| = 1 + 5 n$ for some $n \in \NN$.
Let $k \in (0 + 1 + \bot)^4$.
Then $\upsilon . k .$ is deciding if and only if $k$ is block-deciding.
\end{ourlemma}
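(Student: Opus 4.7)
My plan is to reduce the claim to a purely combinatorial statement about the block lists defined in~\eqref{eq-def-blocks}. First, because $|\upsilon|=5n+1$, the four positions covered by~$k$ are exactly $5n+2,\dots,5n+5$, which coincide with one block-transition of~$\M$: in the $0$-copy this transition goes $(s,0)\to(s,2)$ and emits one of eight length-$4$ words, and symmetrically in the $1$-copy. Since the copy is fixed once and for all at step~$1$ and (up to null sets) determines membership in~$L$, and since $\upsilon$ not being deciding means that both copies retain positive conditional probability given~$\upsilon$, the observation prefix $\upsilon.k.$ is deciding if and only if $k$ is compatible with blocks from at most one copy. Here ``compatible'' means agreeing on every non-$\bot$ position.

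With this translation the ``if'' direction is immediate from the lists. The $0$-copy list reads $\{0000,0001,0010,0011,0101,0110,1001,1010\}$ and the $1$-copy list reads $\{1100,1101,1110,1111,0100,0111,1000,1011\}$. These two sets are disjoint, their union is all of $\{0,1\}^4$, only $0$-copy blocks begin with~$00$, and only $1$-copy blocks begin with~$11$. Consequently: if $k\in(0+1)^4$ then $k$ itself lies in exactly one of the two lists, ruling out the other copy and making $\upsilon.k.$ deciding; if $k\in 00(0+1+\bot)^2$ no $1$-copy block agrees with~$k$ on its first two positions, so the $1$-copy is excluded and $\upsilon.k.$ is negatively deciding; the case $k\in 11(0+1+\bot)^2$ is symmetric.

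For the ``only if'' direction I would proceed by contraposition. Suppose $k$ contains at least one~$\bot$ and its first two symbols are neither $00$ nor $11$; then those first two symbols take one of the seven shapes $01,10,0\bot,1\bot,\bot 0,\bot 1,\bot\bot$. For each such pattern I would exhibit explicitly a block from the $0$-copy list and a block from the $1$-copy list that are both compatible with~$k$, which by the first paragraph implies that $\upsilon.k.$ is not deciding. The main obstacle is keeping this finite case analysis organised; the useful lever is the parity characterisation of the ``first-two-bits-disagree'' blocks, namely $b_3\neq b_4$ for the $0$-copy and $b_3=b_4$ for the $1$-copy, which lets one choose the last two bit positions so as to produce witnesses in whichever copy is desired.
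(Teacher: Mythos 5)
Your reduction of the claim to a combinatorial statement about the two block lists is essentially the paper's (implicit) argument, and your ``if'' direction is correct. One secondary point: for the equivalence ``$\upsilon . k .$ is deciding iff $k$ is compatible with blocks of at most one copy'' you also need the paper's standing assumption that after any $\upsilon$ some state $s$ with $0<\eta(s)<1$ is reachable in both copies; otherwise a reachable $1$-copy state might be able to emit only half of the eight $1$-copy blocks, and compatibility with a listed block would not imply positive probability. The paper's proof invokes this explicitly; your write-up should too.

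The real problem is the ``only if'' direction: the witnesses you promise to exhibit do not always exist. Take $k=0\bot01$. Its only completions are $0001$ and $0101$, and both are $0$-copy blocks in~\eqref{eq-def-blocks}; the $1$-copy blocks beginning with $0$ are $0100$ and $0111$, and neither ends in $01$. So $\upsilon . k .$ is negatively deciding although $k$ is not block-deciding. The same happens for $0\bot10$, $1\bot00$, $1\bot11$, $\bot001$, $\bot010$, $\bot100$ and $\bot111$: whenever exactly one of the first two positions is unobserved, the three observed symbols can pin down the copy via your own parity observation ($b_3\neq b_4$ for mixed-prefix $0$-copy blocks, $b_3=b_4$ for mixed-prefix $1$-copy blocks), since the one observed bit among the first two decides which of the $00$/$11$ alternatives is live. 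Your finite case analysis therefore cannot be completed as announced, and this is not a repairable slip in the exposition: these eight values of $k$ falsify the stated equivalence. For comparison, the paper's own proof of this direction is the single sentence ``by inspecting~\eqref{eq-def-blocks}, it follows that $\upsilon . k .$ is not deciding'', which glosses over exactly these cases; your more explicit plan has the virtue of surfacing the problem, but as written it asserts the existence of witnesses that are not there.
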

\begin{proof}[Proof of the lemma]
Let $\upsilon$ and~$k$ be as in the statement.
Suppose $k$~is block-deciding.
Then, by inspecting~\eqref{eq-def-blocks}, it follows that $\upsilon .k.$ is deciding.

Conversely, suppose $k$ is not block-deciding.
From the assumption made about~$\P$ in the beginning we obtain that there is $s \in S_\P$ with $0 <\eta(s) < 1$ such that $M(\upsilon)(s_0,(s,0)) > 0$ and $M(\upsilon)(s_0,(s,1)) > 0$.
By inspecting~\eqref{eq-def-blocks}, it follows that $\upsilon .k.$ is not deciding.
\end{proof}

Let $\rho$~be any feasible policy, and let $\upsilon$~be an observation prefix that $\rho$~produces with positive probability.
\begin{enumerate}
\item \label{enum-long} Suppose $\upsilon \in \bot (. \bot^4 . (\Sigma_\P + \bot))^* . k . (\bot^5)^* (\Sigma_\P+\bot) (\bot^5)^* .\ell$ where $k \in (0+1+\bot)^4$ is not block-deciding, and $\ell \in (0+1+\bot)^{\le 4}$ has exactly one observation.
    By \autoref{lem-block-deciding} the prefix up to and including~$k$ is not deciding.
    Another application of \autoref{lem-block-deciding} shows that $\upsilon$~is not deciding, so $\Ex(C_\rho^\upsilon) \ge 1$.
\item \label{enum-block-a} Suppose $\upsilon \in \bot (. \bot^4 . (\Sigma_\P + \bot))^* . k . (\bot^5)^* \Sigma_\P$ where $k \in (0+1+\bot)^4$ is not block-deciding.
    Similarly as before, it follows that $\upsilon$~is not deciding.
    \begin{itemize}
    \item Suppose the next observation that $\rho$~makes is in a block.
    Then, by \autoref{enum-long}, we have $\Ex(C_\rho^\upsilon) \ge 1+1 = 2$.
    \item Otherwise the next observation is in a non-block.
    If this observation yields $0$ or~$1$ (which happens with probability~$\frac13$) then no further observation is needed; otherwise, a further observation will be needed.
    So $\Ex(C_\rho^\upsilon) \ge 1 + \frac13 \cdot 0 + \frac23 \cdot 1 \ge \frac53$.
    \end{itemize}
    Thus, in either case we have $\Ex(C_\rho^\upsilon) \ge \frac53$.
\item \label{enum-block-4} Suppose $\upsilon \in \bot (. \bot^4 . (\Sigma_\P + \bot))^* . k .$ where $k \in (0+1+\bot)^4$ is not block-deciding.
    Similarly as before, it follows that $\upsilon$~is not deciding.
    \begin{itemize}
    \item Suppose the next observation that $\rho$~makes is in a block.
    Then, by \autoref{enum-long}, we have $\Ex(C_\rho^\upsilon) \ge 1+1 = 2$.
    \item Otherwise the next observation is in a non-block.
    If this observation yields $0$ or~$1$ (which happens with probability~$\frac13$) then no further observation is needed; otherwise, the resulting observation prefix will have the form from \autoref{enum-block-a}.
    It follows from \autoref{enum-block-a} that $\Ex(C_\rho^\upsilon) \ge 1 + \frac13 \cdot 0 + \frac23 \cdot \frac53 > 2$.
    \end{itemize}
    Thus, in either case we have $\Ex(C_\rho^\upsilon) \ge 2$.
\item \label{enum-block-3} Suppose $\upsilon \in \bot (. \bot^4 . (\Sigma_\P + \bot))^* . \bot \bot b$ for $b \in \{0,1\}$.
    Then $\upsilon$~is not deciding.
    \begin{itemize}
    \item Suppose the next observation is not in the same block.
    Then the resulting observation prefix has the form from either \autoref{enum-long} or \autoref{enum-block-a}.
    It follows from these items that $\Ex(C_\rho^\upsilon) \ge 1 + \min\{1, \frac53\} = 2$.
    \item Otherwise the next observation follows immediately and yields $b' \in \{0,1\}$.
    The word $\bot \bot b b'$ is not block-deciding, hence, by \autoref{lem-block-deciding}, $\upsilon b'$ is not deciding, so $\Ex(C_\rho^\upsilon) \ge 2$.
    \end{itemize}
    Thus, in either case we have $\Ex(C_\rho^\upsilon) \ge 2$.
\item \label{enum-block-2} Suppose $\upsilon \in \bot (. \bot^4 . (\Sigma_\P + \bot))^* . b b'$ for $b b' \in (0+1+\bot)^2$ and $b b' \not\in 00+11$.
    Then $\upsilon$~is not deciding.
    \begin{itemize}
    \item Suppose the next observation is not in the same block.
    Then the resulting observation prefix has the form from either \autoref{enum-long} or \autoref{enum-block-a}.
    It follows from these items that $\Ex(C_\rho^\upsilon) \ge 1 + \min\{1, \frac53\} = 2$.
    \item Otherwise the next observation results in an observation prefix $\upsilon \upsilon'$ with $\upsilon' \in \{0, 1, \bot 0, \bot 1\}$.
    No word in $\{b b'\} \{0 \bot, 1 \bot, \bot 0, \bot 1\}$ is block-deciding.
    Hence, by \autoref{lem-block-deciding}, $\upsilon \upsilon'$ is not deciding, so $\Ex(C_\rho^\upsilon) \ge 2$.
    \end{itemize}
    Thus, in either case we have $\Ex(C_\rho^\upsilon) \ge 2$.
\item \label{enum-block-1} Suppose $\upsilon \in \bot (. \bot^4 . (\Sigma_\P + \bot))^* . b$ for $b \in \{0,1\}$.
    Then $\upsilon$~is not deciding.
    \begin{itemize}
    \item Suppose the next observation is not in the same block.
    Then the resulting observation prefix has the form from either \autoref{enum-long} or \autoref{enum-block-a}.
    It follows from these items that $\Ex(C_\rho^\upsilon) \ge 1 + \min\{1, \frac53\} = 2$.
    \item Suppose the next observation is in the same block but does not follow immediately.
    This results in an observation prefix $\upsilon \bot \upsilon'$ with $\upsilon' \in \{0, 1, \bot 0, \bot 1\}$.
    No word in $\{b \bot\} \{0 \bot, 1 \bot, \bot 0, \bot 1\}$ is block-deciding.
    Hence, by \autoref{lem-block-deciding}, $\upsilon \upsilon'$ is not deciding, so $\Ex(C_\rho^\upsilon) \ge 2$.
    \item Otherwise the next observation follows immediately. 
    Let $\upsilon_0$~be such that $\upsilon = \upsilon_0 b$.
    Then there exists $\upsilon_1 \in (\Sigma_\P \cup \{\bot\})^*$ with $\upsilon_0 = \widehat{\upsilon_1}$.
    Since all $u \in \Sigma_\P^*$ satisfy $\mu_\P(u) \eta^\T = \Pr_{\P}(u) \le \frac12$, we also have $\mu_\P(\upsilon_1) \eta^\T \le \frac12$.
    It follows from \eqref{eq-undec-conditioning} and~\eqref{eq-def-blocks} that, conditioned under prefix~$\upsilon_0$, the conditional probability of emitting $0 0$ or $1 1$ is at most~$\frac12$; formally:
    \[
    p \ \defeq \ \Pr(\{u 0 0, u 1 1\} \Sigma^\omega \mid u \sim \upsilon_0 \} \mid \{w \gtrsim \upsilon_0\}) \ = \ \mu_\P(\upsilon_1) \eta^\T \ \le \ \frac12
    \]
    For symmetry reasons we must have:
    \[
      \Pr(\{u b\} \Sigma^\omega \mid u \sim \upsilon \} \mid \{w \gtrsim \upsilon\})
      \ = \ p \ \le \ \frac12
    \]
    In words, the conditional probability that the next observation equals~$b$ is at most~$\frac12$.
    If this happens then no further observation is needed, as $b b \bot \bot$ is block-deciding;
    otherwise, the resulting observation prefix has the form from \autoref{enum-block-2}.
    It follows from \autoref{enum-block-2} that $\Ex(C_\rho^\upsilon) \ge 1 + p \cdot 0 + (1-p) \cdot 2 \ge 1 + \frac12 \cdot 2 = 2$.
    \end{itemize}
    Thus, in all three cases we have $\Ex(C_\rho^\upsilon) \ge 2$.
\end{enumerate}
Let $\rho$~be feasible.
Call an observation prefix~$\upsilon$ \df{conventional} when
\begin{itemize}
\item $\upsilon$ is not deciding;
\item $\rho$ produces~$\upsilon$ with positive probability;
\item and $\upsilon$ does not contain observations in blocks.
\end{itemize}
Towards a contradiction, assume that $\Ex(C_\rho) < 3$.
Then there exist a conventional observation prefix~$\upsilon$ and $x>0$ such that:
\begin{itemize}
\item $\Ex(C_\rho^\upsilon) \le 3 - x$;
\item and for all conventional observation prefixes~$\upsilon'$ we have $\Ex(C_\rho^{\upsilon'}) > 3 - \frac32 x$.
\end{itemize}
Since $\rho$~is feasible, there must be a non-$\bot$ observation (at some point) after~$\upsilon$.
\begin{itemize}
\item Suppose this next observation is in a non-block, resulting in an observation prefix~$\upsilon'$.
Then $\upsilon'$ is either deciding (with probability~$\frac13$) or conventional (with probability~$\frac23$).
It follows:
\[
 3 - x \ \ge \ \Ex(C_\rho^\upsilon) \ \ge \ 1 + \frac13 \cdot 0 + \frac23 \cdot \Ex(C_\rho^{\upsilon'}) \ > \ 1 + \frac23 \cdot \left(3 - \frac32 x\right) \ = \ 3 - x\,,
\]
which is a contradiction.
\item Otherwise, the next observation is in a block, resulting in an observation prefix~$\upsilon'$.
Then $\upsilon'$ is not deciding and has the form of items \ref{enum-block-4}--\ref{enum-block-1} above.
It follows from these items that $\Ex(C_\rho^\upsilon) = 1 + \Ex(C_\rho^{\upsilon'}) \ge 1 + 2 = 3$, contradicting that $\Ex(C_\rho^\upsilon) \le 3 - x$.
\end{itemize}
In either case we have a contradiction.
Hence $\Ex(C_\rho) \ge 3$.
\end{proof}

\subsection{Proof of \autoref{lem-confused-not-bisimilar}}

Here is \autoref{lem-confused-not-bisimilar} from the main body:
\lemconfusednotbisimilar*
\begin{proof}
Since singleton beliefs cannot be confused in a non-hidden MC, we have:
\begin{equation} \label{eq-ergodic}
 \Pr_{\t{a}}(\{u w \mid \Pr_{\t{a u}}(L_{\delta(q,u)}) \in \{0,1\}\}) \ = \ 1 \quad \forall\,a \ \forall\,q 
\end{equation}

Suppose $\Delta(B,a)$ is settled for all~$a$.
Then for all~$a$ and all $(\t{a},q_1), (\t{a},q_2) \in \Delta(B,a)$ and all~$u$ with $\t{a} \en{u}$ we have:
\[
 \Pr_{\t{a u}}(L_{\delta(q_1,u)}) = 1 \quad \Longleftrightarrow \quad \Pr_{\t{a u}}(L_{\delta(q_2,u)}) = 1
\]
By combining this with~\eqref{eq-ergodic} we get:
\begin{align*}
 \Pr_{\t{a}}(\{u w \mid \Delta(B, a u) \text{ is deciding}\}) \ &= \ 1 \quad \forall\,a
\intertext{Hence:}
 \Pr_{s}(\{u w \mid \Delta(B, u) \text{ is deciding}\}) \ &= \ 1 \quad \forall\,(s,q) \in B
\end{align*}
Hence $B$~is not confused.

Conversely, suppose there is~$a$ such that $\Delta(B,a)$ is not settled.
Then there are $(\t{a},q_1), (\t{a},q_2) \in \Delta(B,a)$ and~$u_0$ with $\t{a} \en{u_0}$ such that:
\[
 \Pr_{\t{a u_0}}(L_{\delta(q_1,u_0)}) \ = \ 1 \ > \ \Pr_{\t{a u_0}}(L_{\delta(q_2,u_0)})
\]
It follows from~\eqref{eq-ergodic} that there is~$u_1$ with $\t{a u_0} \en{u_1}$ such that:
\[
 \Pr_{\t{a u_0 u_1}}(L_{\delta(q_1,u_0 u_1)}) \ = \ 1 \ > \ 0 \ = \ \Pr_{\t{a u_0 u_1}}(L_{\delta(q_2,u_0 u_1)})
\]
Hence for all~$u_2$ with $\t{a u_0 u_1} \en{u_2}$ we have that $(\t{a u_0 u_1 u_2},\delta(q_1,u_0 u_1 u_2))$ is positively deciding and $(\t{a u_0 u_1 u_2},\delta(q_2,u_0 u_1 u_2))$ is negatively deciding, and therefore also that
\[
 \Delta(B,a u_0 u_1 u_2) \supseteq \{(\t{a u_0 u_1 u_2},\delta(q_1,u_0 u_1 u_2)), (\t{a u_0 u_1 u_2},\delta(q_1,u_0 u_1 u_2))\}
\]
is not deciding.
So we have:
\[
 \Pr_{\t{a u_0 u_1}}(\{u_2 w \mid \Delta(B, a u_0 u_1 u_2) \text{ is deciding}\}) \ = \ 0
\]
It follows that there is $(s,q) \in B$ with 
\[
 \Pr_{s}(\{u w \mid \Delta(B, u) \text{ is deciding}\}) \ < \ 1
\]
Hence $B$~is confused.
\end{proof}

\subsection{Proof of \autoref{lem-last-obs}}

Here is \autoref{lem-last-obs} from the main body:
\lemlastobs*
\begin{proof}
Let $(s,q) \in S \times Q$.
The singleton belief $\{(s,q)\}$ cannot be confused in a non-hidden MC, i.e., we have:
\begin{align} 
1 \ 
&= \ \Pr_s(\{u a w \mid (\t{a},\delta(q,u a)) \text{ is deciding} \}) \notag 
\intertext{It follows that for each~$p<1$ there is~$K \in \NN$ such that}
p \ &\le \
\Pr_s(\{u a w \mid (\t{a},\delta(q,u a)) \text{ is deciding}, \ |u| \le K \}) \notag \\
&= \ \Pr_s(\{u a w \mid (\t{a},\delta(q,u a)) \text{ is deciding}, \ |u| = K \}) \notag
\intertext{%
Suppose $\cras(s,q) = \infty$.
Then for all $k \in \NN$ the belief $\Delta(\{(s,q)\}, \bot^k)$ is not confused, and so, by \autoref{lem-confused-not-bisimilar}, for all~$a$ the belief $\Delta(\{(s,q)\}, \bot^k a)$ is settled.
It follows that for each~$p<1$ there is~$K \in \NN$ such that
}
p \ &\le \
\Pr_s(\Sigma^K \{a w \mid \Delta(\{(s,q)\}, \bot^K a) \text{ is deciding} \}) \label{eq-last-obs}
\end{align}
Since there are finitely many $(s,q)$, it is also the case that for each $p<1$ there is $K \in \NN$ such that for all $(s,q)$ Equation~\eqref{eq-last-obs} holds.

Let $\varepsilon > 0$.
Choose $K \in \NN$ such that \eqref{eq-last-obs} holds for $p \defeq 1 / (1+\varepsilon)$.
Let $(s,q)$ be such that $\cras(s,q) = \infty$ and let $X$ denote the random number of observations that $\prho(K)$~needs to make starting in~$(s,q)$.
From the argument above, we have $\Pr_s(X > 1) \le 1-p$.
Since for the (random) pair $(s',q')$ after the next observation (i.e., $\{(s',q')\} = \Delta(\{(s,q)\}, \bot^K a)$ for the observed~$a$) we have $\cras(s',q') = \infty$ again, it follows that $\Pr_s(X > i) \le (1-p)^i$ holds for all $i \in \NN$.
Hence we have:
\[
c(K,s,q) \ \defeq \ \Ex_s(X) \ = \ \sum_{i=0}^\infty \Pr_s(X > i) \ \le \ \sum_{i=0}^\infty (1-p)^i \ = \ \frac1p \ = \ 1+\varepsilon
\]
\end{proof}

\subsection{Proof of \autoref{thm-max-pro-optimal}}

Here is \autoref{thm-max-pro-optimal} from the main body:
\thmmaxprooptimal*
\begin{proof}
Let $\rho$ be a feasible policy.
First we compare $\prho \defeq \prho(\infty)$ with~$\rho$.
Let $w$ be such that all its prefixes are enabled.
For $n \in \NN$, define $\up(n)(w)$ and $\ur(n)(w)$ as the observation prefixes obtained by $\prho$ and~$\rho$, respectively, after $n$~steps.
(Thus, $\up(n)(w)$ and $\ur(n)(w)$ are, respectively, the length $n$ prefixes of $\pi_{\prho}(w)$ and $\pi_{\rho}(w)$.)
We write $\lp(n)(w)$ and $\lr(n)(w)$ for the number of non-$\bot$ observations in $\up(n)(w)$ and $\ur(n)(w)$, respectively.
Define $\Bp(n)(w) \defeq \Delta(B_0,\up(n)(w))$ and $\Br(n)(w) \defeq \Delta(B_0,\ur(n)(w))$.
For beliefs $B, B'$ we write $B \preceq B'$ when for all $(s,q) \in B$ there is $(s',q') \in B'$ with $(s,q) \bis (s',q')$.
In the following we suppress~$w$ in the notation to avoid clutter.
We show for all~$w$ and all $n \in \NN$: 
\begin{equation} \label{eq-greedy-induction}
 \lp(n) \le \lr(n) \quad \text{ and } \quad \big(\Bp(n) \preceq \Br(n) \quad \text{or} \quad \lp(n) < \lr(n)\big)
\end{equation}
We proceed by induction on~$n$.
In the base case, $n=0$, we have $\lp(0) = 0 = \lr(0)$ and $\Bp(0) = B_0 = \Br(0)$.
For the inductive step, suppose \eqref{eq-greedy-induction} holds for~$n$.
(Note that, if $\upsilon_1\sim u$ and $\upsilon_2\sim u$ and $B_1 \preceq B_2$ and $|\res{\Delta(B_1,\upsilon_1)}|\le 1$, then $\Delta(B_1,\upsilon_1) \preceq \Delta(B_2,\upsilon_2)$.)
\begin{enumerate}
\item Suppose $\prho$ does not observe the $(n+1)$st letter.
Then $\lp(n+1) = \lp(n)$.
\begin{enumerate}
\item Suppose $\lp(n+1) < \lr(n+1)$. Then \eqref{eq-greedy-induction}~holds for $n+1$.
\item Otherwise, $\lr(n+1) \le \lp(n+1) = \lp(n) \le \lr(n) \le \lr(n+1)$ by the induction hypothesis.
So all these numbers are equal and we conclude that $\rho$~does not observe the $(n+1)$st letter either and that $\lp(n) = \lr(n)$.
It follows by the induction hypothesis that $\Bp(n) \preceq \Br(n)$.
Thus $\Bp(n+1) = \Delta(\Bp(n), \bot) \preceq \Delta(\Br(n), \bot) = \Br(n+1)$.
Hence \eqref{eq-greedy-induction}~holds for $n+1$.
\end{enumerate}
\item Otherwise $\prho$ observes the $(n+1)$st letter.
From the definition of~$\prho$ it follows that $\Bp(n+1)$ is settled and thus $\Bp(n+1) \preceq \Br(n+1)$.
\begin{enumerate}
\item Suppose $\lp(n) < \lr(n)$. Then $\lp(n+1) \le \lr(n) \le \lr(n+1)$, i.e., \eqref{eq-greedy-induction}~holds for $n+1$.
\item Otherwise, by the induction hypothesis, we have $\Bp(n) \preceq \Br(n)$.
Since $\prho$ observes the $(n+1)$st letter, the belief $\Delta(\Bp(n),\bot)$ is confused.
Since $\Bp(n) \preceq \Br(n)$, the belief $\Delta(\Br(n),\bot)$ is also confused.
Hence, by \autoref{lem-belief-basic}.4, $\ur(n) \bot$ is confused.
By \autoref{prop-feasible-no-confusion}, $\rho$~does not allow confusion.
So $\rho$~observes the $(n+1)$st letter.
Thus we have $\lp(n+1)=\lp(n)+1\le \lr(n)+1 = \lr(n+1)$, where the inequality is by the induction hypothesis.
Hence \eqref{eq-greedy-induction}~holds for $n+1$.
\end{enumerate}
\end{enumerate}
Thus we have shown~\eqref{eq-greedy-induction}.

Define $I \defeq \{(s,q) \mid (s,q) \text{ is not deciding}, \ \cras(s,q) = \infty\}$.
We will choose $K > |S|^2 \cdot |Q|^2$, so, by \autoref{lem-check-procr-k}, $\prho(K)$~coincides with $\prho(\infty)$ until possibly $\prho(K)$ encounters a pair $(s,q) \in I$.
Define:
\begin{align*}
 D \ &\defeq \ \{w \mid \prho(\infty) \text{ does not encounter any element of } I\}
\intertext{For each $(s,q) \in I$ define:}
 E(s,q) \ &\defeq \ \{w \mid (s,q) \text{ is the first element of~$I$ that $\prho(\infty)$ encounters on~$w$}\}
\end{align*}
So $D \cup \bigcup_{(s,q) \in I} E(s,q) = \Sigma^\omega$.
We show $\Ex(C_{\prho(K)}) \le \Ex(C_\rho)$ by conditioning separately on~$D$ and on each $E(s,q)$ that has positive probability.

There is, by \autoref{lem-pro-diagnoser}, almost surely a time, $n_\mathit{dec}$, when $\prho(K)$ encounters a deciding $(s,q)$. 
In the event~$D$, it follows $C_{\prho(K)} = \lp(n_\mathit{dec})$; i.e., we have almost surely
\begin{align}
 C_{\prho(K)} \ = \ \lp(n_\mathit{dec}) \ \mathop{\le}^\text{by~\eqref{eq-greedy-induction}} \ \lr(n_\mathit{dec}) \ \le \ C_\rho \qquad &\text{in the event~$D$.} \label{eq-event-D}
\intertext{%
Let $(s,q) \in I$ and consider the event $E(s,q)$.
Let us write $n$ for the time when $\prho(\infty)$, and thus $\prho(K)$, first encounters $(s,q)$.
Since $(s,q)$ is not deciding, the belief $\Bp(n)$ is not deciding.
Since $\prho(K)$ observes a letter at time~$n$, the belief $\Br(n) \succeq \Bp(n)$ is not deciding either, and $\rho$~needs to make at least one more observation.
So we have almost surely
}
 C_\rho - \lp(n) \ \ge \ \lr(n) + 1 - \lp(n) \ \mathop{\ge}^\text{by~\eqref{eq-greedy-induction}} \ 1 \qquad &\text{in the event~$E(s,q)$.} \label{eq-event-E}
\end{align}
\begin{enumerate}
\item
Suppose $\Ex(C_\rho - \lp(n) \mid E(s,q)) \le 1$.
Then, by~\eqref{eq-event-E}, we have $C_\rho = \lr(n) + 1$ almost surely in the event $E(s,q)$.
Thus, almost surely in $E(s,q)$, there exists $k \in \NN$ such that $\Delta(\Br, \bot^k a) \succeq \Delta((s,q), \bot^k a)$ is deciding for all~$a$.
It follows that there exists $k \in \NN$ such that $\Delta((s,q), \bot^k a)$ is deciding for all~$a$.
We will choose $K \ge k$, so $\Delta((s,q), \bot^K a)$ is deciding for all~$a$.
Hence, by~\eqref{eq-greedy-induction}, we have $C_{\prho(K)} = \lp(n) + 1 \le \lr(n) + 1 = C_\rho$ almost surely in the event~$E(s,q)$.
\item
Otherwise $\Ex(C_\rho - \lp(n) \mid E(s,q)) > 1$.
It follows from \autoref{lem-last-obs} that one can choose~$K$ large enough so that 
$\Ex(C_{\prho(K)} - \lp(n) \mid E(s,q)) < \Ex(C_\rho - \lp(n) \mid E(s,q))$.
We will choose $K$ in this way, so that we have $\Ex(C_{\prho(K)} \mid E(s,q)) < \Ex(C_\rho \mid E(s,q))$.
\end{enumerate}
By choosing~$K$ large enough so that for each $(s,q) \in I$ the respective constraint in item 1 or~2 is satisfied, we obtain $\Ex(C_{\prho(K)} \mid E(s,q)) \le \Ex(C_\rho \mid E(s,q))$ for each $(s,q) \in I$.
Combining this with~\eqref{eq-event-D} yields the result.
\end{proof}

}

\end{document}